\Crefname{figure}{Figure}{Figures}
\Crefname{equation}{Equation}{Equations}
\newtheorem{theorem}{Theorem}[section]
\newtheorem{corollary}{Corollary}[section]
\newtheorem{lemma}{Lemma}[section]
\newtheorem{proposition}{Proposition}[section]
\newtheorem{observation}{Observation}[section]
\newtheorem{definition}{Definition}[section]
\newtheorem{example}{Example}[section]
\newcommand{\set}[1]{{\left\{#1\right\}}}
\newcommand{\bids}{\bf{b}}
\newcommand{\values}{\bf{v}}
\newcommand{\ud}{unit-demand}
\newcommand{\sm}{submodular}
\newcommand{\xos}{XOS}
\newcommand{\sa}{subadditive}
\newcommand{\mon}{monotone}
\newcommand{\reals}{\mathbb{R}}
\begin{document}

\title{Simultaneous 2nd Price Item Auctions with No-Underbidding
\thanks{This work was partially supported by the Israel Science Foundation (grant number 1219/09), and by the European Research Council (ERC) under the European Union's Horizon 2020 research and innovation program (grant agreement No. 866132)}}

\date{March 15, 2021}

%
%

\author{Michal Feldman\thanks{Tel Aviv University | \emph{E-mail}: \href{mailto:michal.feldman@cs.tau.ac.il}{michal.feldman@cs.tau.ac.il}.} \and Galia Shabtai\thanks{Tel Aviv University | \emph{E-mail}: \href{mailto:galiashabtai@gmail.com}{galiashabtai@gmail.com}.}}


\maketitle

\begin{abstract}
The literature on the Price of Anarchy (PoA) of simple auctions employs a no-overbidding assumption but has completely overlooked the no-underbidding phenomenon, which is evident in empirical studies on variants of the second price auction.  
In this work, we provide a theoretical foundation for the no-underbidding phenomenon. 
We study the PoA of simultaneous 2nd price auctions (S2PA) under a new natural condition of {\em no underbidding}, meaning that agents never bid on items less than their marginal values.
We establish improved (mostly tight) bounds on the PoA of S2PA under no underbidding for different valuation classes (including unit-demand, submodular, XOS, subadditive, and general monotone valuations), in both full-information and incomplete information settings. 


To derive our results, we introduce a new parameterized property of auctions, termed $(\gamma,\delta)$-revenue guaranteed, which implies a PoA of at least $\gamma/(1+\delta)$. 
Via extension theorems, this guarantee extends to coarse correlated equilibria (CCE) in full information settings, and to Bayesian PoA (BPoA) in settings with incomplete information and arbitrary (correlated) distributions.
We then show that S2PA are $(1,1)$-revenue guaranteed with respect to bids satisfying no underbidding. 
This implies a PoA of at least $1/2$ for general monotone valuation, which extends to BPOA with arbitrary correlated distributions.
Moreover, we show that $(\lambda,\mu)$-smoothness combined with $(\gamma,\delta)$-revenue guaranteed guarantees a PoA of at least $(\gamma+\lambda)/(1+\delta+\mu)$.
This implies a host of results, such as a tight PoA of $2/3$ for S2PA with submodular (or XOS) valuations, under no overbidding and no underbidding.
Beyond establishing improved bounds for S2PA, the no underbidding assumption sheds new light on the performance of S2PA relative to 
simultaneous 1st price auctions.
\end{abstract}


\section{Introduction}
\label{sec:intro}
Simple auctions are often preferred in practice over complex truthful auctions.
Starting with the seminal paper of 
\citet{ChristodoulouKS08}, a lot of effort has been given to the study of {\em simultaneous item auctions}. 

In simultaneous item auctions with $n$ bidders and $m$ items, every bidder $i$ has a valuation function $v_i:2^{[m]} \rightarrow \reals^+$, where $v_i(S)$ is the value bidder $i$ assigns to set $S \subseteq [m]$. 
Despite the combinatorial structure of the valuation, bidders submit bids on every item {\em separately and simultaneously}.
In simultaneous first-price auctions (S1PA) every item is sold in a 1st-price auction; i.e., the highest bidder wins and pays her bid, whereas in simultaneous second-price auctions (S2PA) every item is sold in a 2nd-price auction; i.e., the highest bidder wins and pays the 2nd highest bid.

Clearly, these auctions are not truthful; bidders don't even have the language to express their true valuations. 
The performance of these auctions is often quantified by the \emph{price of anarchy} (PoA), which measures their performance in equilibrium. Specifically, the PoA is defined as the ratio between the performance of an auction in its worst equilibrium and the  performance of the optimal outcome. 
The price of anarchy in auctions has been of great interest to the AI community, see the influential survey of 
\citet{RST17}, as well as recent work on learning dynamics in multi-unit auctions and games \cite{FLLST16, BF19}.

\vspace{0.1in}
\noindent
\textbf{PoA and BPoA of S2PA: Background.}
The price of anarchy has been studied both in complete and incomplete information settings. In the former case, all valuations are known by all bidders. In the latter case, every bidder knows her own value and the probability distribution from which other bidder valuations are drawn. The common equilibrium notion in this case is {\em Bayes Nash equilibrium}, and the performance is quantified by the {\em Bayesian PoA} (BPoA) measure.

There are pathological examples showing that the PoA of S2PA can be arbitrarily bad, even in the simplest scenario of a single item auction with two bidders \cite{CKS16}. 
%
A common approach towards overcoming such pathological examples is the {\em no overbidding} (NOB) assumption, stating that the sum of player bids on the set of items she wins 
never exceeds its value.
Consequently, all PoA results of S2PA use the NOB assumption.

The PoA and BPoA of simultaneous item auctions depend on the structure of the valuation functions.
An important class is that of subadditive (\sa)  valuations, also known as {\em complement-free} valuations, where $v(S)+v(T)\geq v(S \cup T)$ for every sets of items $S,T$.
A hierarchy of complement-free valuations is given in \cite{LLN06}, including unit-demand (\ud), submodular (\sm), XOS (\xos), and subadditive (\sa) valuations, with the following strict containment relation: 
$\ud \subset \sm \subset \xos \subset \sa$ (see Section \ref{subsec:val_class} for formal definitions). 
Clearly, the PoA can only degrade as one moves to a larger valuation class.
PoA and BPoA results under the no-overbidding assumption for the different classes have been obtained by \citet{ChristodoulouKS08} and follow-up work \cite{R09,BR11,HassidimKMN11,R12,FFGL13,ST13,CKST16}, and are summarized in Table \ref{tbl:s2pa_nob}. 
\mon\  refers to the class of all monotone valuations.

\begin{table*}[ht]
\centering 
\begin{tabular}{|l|l|lr|lr|lr|lr|} 
\hline 
& & \textbf{UD} $\mathbf{\slash}$ \textbf{SM}& & \textbf{XOS}& & \textbf{SA}& & \textbf{MON}& \\ [0.5ex] 

\hline 
& \textbf{PoA} & $\frac{1}{2}$ & $\leftarrow$& $\frac{1}{2}$ & \cite{CKS16}& $\frac{1}{2}^{\dagger}$& \cite{BR11} & $O \left( \frac{1}{\sqrt{m} } \right)^{\sim}$ & \cite{HassidimKMN11,FFGL13}\\ [0.5ex]
\textbf{NOB} & \textbf{iBPoA} & $\frac{1}{2}$ & $\leftarrow$& $\frac{1}{2}$ & \cite{CKS16}& $\frac{1}{4}^{\sim}$ & \cite{FFGL13}& $O\left(\frac{1}{\sqrt{m}}\right)^{\sim}$& \cite{HassidimKMN11,FFGL13}\\ [0.5ex]
& \textbf{BPoA} & $O \left( \frac{1}{n^{1/4}} \right)^{\sim}$ & \cite{BR11}& $O \left( \frac{1}{n^{1/4}} \right)^{\sim}$&\cite{BR11} &$O \left( \frac{1}{n^{1/4}} \right)^{\sim}$ &\cite{BR11} &$O\left(\frac{1}{n^{1/4}} \right)^{\sim}$ & \cite{BR11}\\ [0.5ex]

\hline 
\end{tabular}
\caption{Previous results for PoA of S2PA under the NOB assumption. PoA is the price of anarchy under full information; iBPoA and BPoA  are the Bayesian PoA under independent valuation distributions and correlated valuation distributions, respectively.
$\dagger$ This result assumes \emph{strong} no-overbidding, i.e. the sum of player bids on \emph{any} set of items never exceeds its value for that set.
All results are tight, except those marked with $\sim$.
Results derived as a special case of a more general result (to their right) are marked with $\leftarrow$.
} 
\label{tbl:s2pa_nob} 
\end{table*}

	
\subsection{No Underbidding (NUB)}

Let $b_i = (b_{i1},\ldots,b_{im})$ denote the bid vector of bidder $i$, where $b_{ij}$ is the bid of bidder $i$ for item $j$, and let $\bids = (b_{1}, \ldots, b_{n})$ be the bid profile of all bidders.
Consider the following example (taken from \citet{CKS16}), showing that the PoA for unit-demand (\ud) valuations is at most $1/2$ (A valuation $v$ is \ud\ if there exist $v(1), \ldots, v(m)$, such that $v(S)=\max_{j\in S}v(j)$ for every set of items $S$).

\begin{example}
	\label{ex:ud-poa-2}
	2 bidders and 2 items, $x,y$. 
	Bidder 1 is \ud\  with values $v_1(x)=2, v_1(y)=1$. Bidder 2 is \ud\  with values $v_2(x)=1, v_2(y)=2$.
	Consider the following bid profile, which is a pure Nash equilibrium (PNE) that adheres to NOB: $b_{1x}=b_{2y}=0$, and $b_{1y}=b_{2x}=1$.
	Under this bid profile, bidders 1 and 2 receive items $y$ and $x$, respectively, for a social welfare of 2. The optimal welfare is 4.
\end{example}

Let us take a closer look at the Nash equilibrium in Example \ref{ex:ud-poa-2}.
In this equilibrium bidder 1 prefers item $x$, yet bids $0$ on item $x$, and gets item $y$ instead.
Bidder 1's marginal value for item $x$, given her current allocation (item $y$), is $v_1(x \mid y)=v_1(xy)-v_1(y)=1$. 
Given her current allocation $y$, bidding $0$ on item $x$ is {\em weakly dominated} by bidding $1$ on $x$. 
Indeed, if bidder 1 receives item $x$, in addition to item $y$, her additional value is $1$ and she pays at most $1$.
Therefore, it is only natural for her to bid at least her marginal value.

If a bidder bids on an item less than the item's marginal value, we say that she {\em underbids} (see Definition \ref{def:underbid_j}).
In Example \ref{ex:ud-poa-2}, bidder 1 underbids on item $x$.
In Section \ref{sec:s2pa_nub} we show that underbidding in a 2nd price auction is {\em weakly dominated} in some precise technical sense.

In what sense is the outcome in Example \ref{ex:ud-poa-2} an equilibrium? While a Nash equilibrium is a descriptive, static notion, it is based on the underlying assumption that players engage in some dynamics, where they keep best responding to the current situation until a stable outcome is reached. 
In this dynamics, it is not likely that a player would bid on an item less than its marginal value. This is exactly what the no-underbidding assumption captures.


No-underbidding is not only a mere theoretical exercise. 
In second price auctions a lot of empirical evidence suggests that bidders tend to overbid, but not underbid  
\cite{KL93,H00,CF08,RS12}. 
It seems that "laboratory second-price auctions exhibit substantial and persistent overbidding, even with prior
experience" \cite{H00}. 
The no-underbidding assumption is also consistent with the assumption made by \citet{nisan2011best} that bidders break ties in favor of the highest bid that does not exceed their value.
Yet, the POA literature employs no-overbidding as a standard assumption, and overlooked the no-underbidding phenomenon. The objective of this work is to better tie the theoretical work in this area to empirical evidence, by providing a theoretical foundation for the no-underbidding phenomenon.

Intuitively, no underbidding can improve welfare performance, as it drives item prices up, so that items become less attractive to low-value players.
Consequently, bad equilibria, in which items are allocated to players with relatively low value, are excluded. 
A natural question is:

\vspace{4mm}

\noindent\textbf{Main Question.} 
{\em What is the performance (measured by PoA/BPoA) of simultaneous 2nd price item auctions under no underbidding?}

\subsection{Our Contribution}
\label{sec:our_contribution}

We first introduce the notion of {\em item no underbidding} (iNUB), where no agent underbids on items (see Definition \ref{def:iNUB}). 
One might think that by imposing both NOB and iNUB, the optimal welfare will be achieved.
This is indeed the case for a single item auction (where the optimal welfare is achieved by imposing any one of these assumptions alone). 
However, even a simple scenario with 2 items and 2 \ud\  bidders can have a PNE with sub-optimal welfare. This is demonstrated in the following example.

\begin{example}
	\label{ex:ud-poa-2-3}
	2 bidders, and 2 items: $x,y$. 
	Bidder 1 is \ud\  with values $v_1(x)=3, v_1(y)=2$. Bidder 2 is \ud\  with values $v_2(x)=2, v_2(y)=3$.
	Consider the following PNE bid profile, which adheres to both NOB and iNUB: $b_{1x}=b_{2y}=1, b_{1y}=b_{2x}=2$.
	Under this bid profile, bidders 1 and 2 receive items $y$ and $x$, respectively, for a social welfare of 4. The optimal welfare is 6. Thus, the PoA is $2/3$. 
\end{example}

\paragraph{{\bf Submodular Valuations}}

Our first result states that $2/3$ is the worst possible ratio for \sm\ valuations and bid profiles satisfying both NOB and iNUB, even in settings with incomplete information (with a product distribution over valuations), see Corollary \ref{cor:sm_smooth_rg_poa}.

\vspace{0.1in}
\noindent {\bf Theorem [\sm\  valuations, iNUB and NOB]:} 
For every market with \sm\  valuations, 
\begin{itemize}
\item
The PoA (even with respect to \emph{coarse correlated equilibrium} (CCE)\footnote{The set of coarse correlated equilibria (CCE) is a superset of Nash equilibria; a formal definition appears in Section \ref{def:ccne}.}) and the BPoA (for product or correlated distribution) of S2PA under iNUB are both at least $\frac{1}{2}$ (see Corollary \ref{cor:sm_rg_poa}).
\item
The PoA (even with respect to CCE), and the BPoA (for product distribution) of S2PA under NOB and iNUB are both at least $\frac{2}{3}$ (see Corollary \ref{cor:sm_smooth_rg_poa}).
\end{itemize}
The above results are tight, even with respect to PNE and even for \ud\  valuations.
\vspace{0.1in}

Moreover, the last theorem extends to $\alpha$-\sm\  valuations, defined as $v(j\mid S) \geq \alpha \cdot v(j\mid T)$ for every $S\subseteq T$. We show that the (B)PoA degrades gracefully with the parameter $\alpha$; namely the PoA with respect to CCE and the BPoA are at least
$\frac{\alpha}{1+\alpha}$ under iNUB and at least 
$\frac{2 \alpha}{2+\alpha}$ under NOB and iNUB (see Corollaries \ref{cor:sm_rg_poa} and \ref{cor:sm_smooth_rg_poa}, respectively). 
\paragraph{{\bf Beyond Submodular Valuations}}

The above bounds do not carry over beyond ($\alpha$-)submodular valuations. 
Consider first XOS valuations, defined as maximum over additive valuations.
We show that the (B)PoA of S2PA with XOS valuations under iNUB is $\theta(\frac{1}{m})$ (the lower bound is given in Appendix \ref{app:s2pa_xos_inub}, and the upper bound is given in Example 
\ref{ex:xos_inub_2_m_poa}). 
Moreover, for XOS valuations, iNUB may not provide any improvement over NOB alone. 
In particular, in Example 
\ref{ex:xos_nob_inub_1_2_poa}
the PoA with NOB and iNUB is $1/2$, matching the guarantee provided by NOB alone. 



To the best of our knowledge, this is the first PoA separation between \sm\  and \xos\  valuations in simultaneous item auctions.
In fact, the PoA of simultaneous item auctions is often the same for the entire range between \ud\  and \xos .  
This separation suggests that \xos\  is ``far" from \sm. Indeed, in Appendix \ref{app:xos_not_alpha_sm} we show that \xos\  is not $\alpha$-\sm\  for any fixed $0< \alpha \leq 1$, even in settings with identical items. 
Moreover, beyond subadditive valuations, the PoA can be arbitrarily bad under bid profiles satisfying iNUB (see Appendix \ref{app:s2pa_mon_inub}).

To deal with valuations beyond \sm, we consider a different no underbidding assumption, which applies to {\em sets of items}.
For two sets $S,T$, the {\em marginal value} of $T$ given $S$ is defined as $v(T \mid S) = v(S \cup T)-v(S)$.  
A bidder is said to not underbid on a set of items $S$ under bid profile $\bids$ if $\sum_{j \in S}b_{ij} \geq v_i(S \mid S_i(\bids))$.
The new condition, {\em set no underbidding} (sNUB), imposes the set no underbidding condition on every bidder $i$ with respect to the set $S = S^*_i(\mathbf{v}) \backslash S_i(\mathbf{b})$ (see Definition \ref{def:sNUB}).

With the sNUB definition, the $2/3$ PoA extends to \sa\  valuations in full information settings, and to \xos\  valuations even in incomplete information settings (with product distributions), see Corollary \ref{cor:xos_smooth_rg_poa_2_3}.

\vspace{0.1in}
\noindent {\bf Theorem [\sa\  and \xos\  valuations, NOB and sNUB]:} For every market with \sa\  valuations, the PoA with respect to CCE of S2PA under strong NOB and sNUB is at least $2/3$ (see Theorem \ref{thm:subadd_nob_snub_cce_2_3}). For every market with \xos\  valuations, the BPoA (under product distribution) of S2PA under NOB and sNUB is at least $2/3$ 
(see Corollary \ref{cor:xos_smooth_rg_poa_2_3}).
Both results are tight.
\vspace{0.1in}
 


For incomplete information we show that the BPoA of \sa\  valuations is at least 
$1/2$ for \emph{any} joint distribution (even correlated) and it can be obtained in a much stronger sense, namely for every bid profile with non-negative sum of utilities (even a non-equilibrium profile) satisfying sNUB. This also holds for markets with arbitrary monotone valuations.

\vspace{0.1in}
\noindent {\bf Theorem [Arbitrary valuations, sNUB]:} For every market (arbitrary monotone valuations), the PoA with respect to CCE and the BPoA (for any joint distribution) of S2PA under sNUB is at least $1/2$ (see Corollary \ref{corr:mon_snub_bpoa_1_2}).
\vspace{0.1in}


The above results are summarized in Table \ref{tbl:s2pa_nub}.


\begin{table}[ht]
\centering 
\begin{tabular}{|l|l|lr|lr|lr|l|} 
\hline 
& & \textbf{UD} $\mathbf{\slash}$ \textbf{SM}& & \textbf{XOS}& & \textbf{SA}& & \textbf{MON} \\ [0.5ex] 

\hline 
\textbf{iNUB} & \textbf{(B)PoA} & $\frac{1}{2}$ &  &  $\Theta \left( \frac{1}{m} \right)$ & &  & & arbitrarily bad \\ [0.5ex]

\hline 
\textbf{sNUB} & \textbf{(B)PoA} & $\frac{1}{2}$  & $\leftarrow$ & $\frac{1}{2}$ & $\leftarrow$ & $\frac{1}{2}$ & $\leftarrow$ & $\frac{1}{2}$ \\ [0.5ex]

\hline 
\textbf{NOB+}& \textbf{PoA} & $\frac{2}{3}$ & $\leftarrow$ & $\frac{2}{3}$ & & $\frac{2}{3}$ & & $\frac{1}{2}$ \\ [0.5ex]
\textbf{sNUB} & \textbf{iBPoA} & $\frac{2}{3}$ & $\leftarrow$ & $\frac{2}{3}$ & & $\frac{1}{2}$ & $\leftarrow$ & $\frac{1}{2}$ \\ [0.5ex]
 & \textbf{BPoA} & $\frac{1}{2}$ & $\leftarrow$ & $\frac{1}{2}$ & $\leftarrow$ & $\frac{1}{2}$ & $\leftarrow$ & $\frac{1}{2}$ \\ [0.5ex]

\hline 
\end{tabular}
\caption{New results for S2PA price of anarchy lower bound. PoA is the price of anarchy under full information, iBPoA is the Bayesian PoA under independent valuation distributions, and BPoA is the Bayesian PoA under correlated valuation distributions.
All results are tight.
Results derived as a special case of a more general result (to their right) are marked with $\leftarrow$.
} 
\label{tbl:s2pa_nub} 
\end{table}


\paragraph{{\bf Equilibrium existence}}
PoA results make sense only when the corresponding equilibrium exists. 
We show that 
every market with \xos\ valuations admits a PNE satisfying sNUB and NOB. For \sa\ valuations, a PNE satisfying NOB might not exist. 
However, under a finite discretized version of the auction, a mixed Bayes Nash equilibrium is guaranteed to exist, and we show that there is at least one bid profile that admits both sNUB and NOB with arbitrary monotone valuation functions.



\paragraph{{\bf S1PA vs. S2PA}}
Interestingly, our results shed new light on the comparison between simultaneous 1st and 2nd price auctions.
Table \ref{tbl:s1pa_vs_s2pa} specifies BPoA lower bounds for S1PA and S2PA under NOB, assuming independent valuation distributions. 
According to these results, one may conclude that S1PA perform better than their S2PA counterparts. 

Our new results shed more light on the relative performance of S2PA and S1PA.
When considering both no overbidding and no underbidding, the situation flips, and S2PA are superior to S1PA.\footnote{Note that no overbidding and no underbidding are not reasonable assumptions in 1st price auctions, where bidders pay their bids}
For \xos\ valuations, the $1-1/e$ bound for S1PA persists, but for S2PA the bound improves from $\frac{1}{2} (<1-\frac{1}{e})$ to $\frac{2}{3} (>1-\frac{1}{e})$. 
For \sa\  valuations and independent valuation distributions, S2PA under sNUB performs as well as S1PA (achieving BPoA of $\frac{1}{2}$), however in S2PA the $\frac{1}{2}$ bound holds also for correlated valuation distributions. For valuations beyond \sa, S2PA performs better ($\frac{1}{2}$ for S2PA and less than $\frac{1}{2}$ for S1PA). 

\begin{table}[ht]
\centering 
\begin{tabular}{|l|l|lr|lr|lr|lr|} 
\hline 
& & \textbf{UD} $\mathbf{\slash}$ \textbf{SM}& & \textbf{XOS}& & \textbf{SA}& & \textbf{MON} &\\ [0.5ex] 

\hline 
\textbf{S2PA} &\textbf{NOB} &  $\frac{1}{2}$ & $\leftarrow$ & $\frac{1}{2}$ & \cite{CKS16}& $\frac{1}{4}$ & \cite{FFGL13} & $O \left( \frac{1}{\sqrt{m} } \right)$ & \cite{HassidimKMN11,FFGL13}\\ [0.5ex]

\hline 
\textbf{S1PA} & &  $1-\frac{1}{e}$ & $\leftarrow$ & $1-\frac{1}{e}$ & \cite{ST13,CKST16}& $\frac{1}{2}$ & \cite{FFGL13,CKST16} & $\frac{1}{m}$ & \cite{HassidimKMN11}\\ [0.5ex]

\hline 
\textbf{S2PA} &\textbf{sNUB+NOB} &  $\frac{2}{3}$ & $\leftarrow$ & $\frac{2}{3}^*$ & & $\frac{1}{2}$ (corr) & $\leftarrow$ & $\frac{1}{2}^*$ (corr)&  \\ [0.5ex]

\hline 
\end{tabular}
\caption{Bayesian price of anarchy results for simultaneous first price and second price auctions. (corr) refers to bounds that hold also for correlated distributions. Results derived from the current paper are marked with *.
Results derived as a special case of a more general result (to their right) are marked with $\leftarrow$.
} 
\label{tbl:s1pa_vs_s2pa} 
\end{table}

\subsection{Our Techniques}
The standard technique for establishing performance guarantees for equilibria of simple auctions (i.e., PoA results) is the {\em smoothness} framework (see the survey in \cite{RST17}).
Smoothness is a parameterized notion; an auction is said to be $(\lambda,\mu)$-smooth if 
for any valuation profile $\mathbf{v}$  and any bid profile $\mathbf{b}$ there exists a bid 
$b^*_i(\mathbf{v})$ 
for each player $i$, s.t. 
$\sum_{i \in [n]} u_i(b^*_i(\mathbf{v}), \mathbf{b}_{-i},v_i) \ge \lambda OPT(\mathbf{v}) - \mu SW(\mathbf{b}, \mathbf{v})$.
{It is quite straightforward to show that if an auction is $(\lambda,\mu)$-smooth, then its PoA with respect to PNE is at least $\frac{\lambda}{1+\mu}$.

The power of the smoothness framework is in its extendability. While a lower bound on the PoA with respect to PNE follows easily from the smoothness property, this lower bound extends
to the PoA with respect to CCE and with respect to the Bayesian PoA in games with incomplete information \cite{R09,R12,RST17,ST13}.


\paragraph{Revenue Guarenteed Auctions}
We introduce a new parameterized notion called {\em revenue guaranteed}. 
An auction is said to be $(\gamma,\delta)$-revenue guaranteed if for every valuation profile $\values$ and bid profile $\bids$ the revenue of the auction is bounded below by $\gamma \cdot OPT(\values) - \delta \cdot SW(\bids,\values)$.

We show that in every $(\gamma,\delta)$-revenue guaranteed auction, the social welfare in every bid profile with non-negative sum of utilities is at least a fraction $\frac{\gamma}{1+\delta}$ of the optimal welfare. Similarly to the smoothness framework, we augment our results with two extension theorems, one for PoA with respect to CCE, and one for BPoA in settings with incomplete information. 
Moreover, this result holds also in cases where the joint distribution of bidder valuations is correlated (whereas previous BPoA results hold only under a product distribution over valuations).

Combining the two tools of smoothness and revenue guaranteed, we get an improved bound. 
In particular, we show that in every auction that is both $(\lambda,\mu)$-smooth and $(\gamma,\delta)$-revenue guaranteed, the PoA with respect to CCE is at least $\frac{\lambda+\gamma}{1+\mu+\delta}$. The same holds for the BPoA under product valuation distributions.

\paragraph{Implications on Simultaneous Second Price Auctions}
With this tool in hand, we analyze simultaneous 2nd price auctions with different valuation functions and different no underbidding conditions, where the goal is to establish revenue-guaranteed parameters that would imply PoA and BPoA bounds. 

We first consider \sm\ and $\alpha$-\sm\ valuations.
We show that every S2PA with $\alpha$-\sm\ valuations satisfying iNUB is $(\alpha,\alpha)$-revenue guaranteed.
This directly gives a lower bound of $\frac{\alpha}{1+\alpha}$ on the BPoA of $\alpha$-\sm\ valuations (and $1/2$ for \sm\ valuations).
We also show that S2PA with $\alpha$-\sm\ valuations satisfying NOB are $(\alpha,1)$-smooth. 
Combining $(\alpha,\alpha)$-revenue guaranteed with $(\alpha,1)$-smoothness gives a bound of $\frac{2 \alpha}{2+\alpha}$ on the BPoA for every S2PA with $\alpha$-\sm\ valuations with NOB and iNUB. For \sm\ valuations this gives the tight $2/3$ bound.

For valuations beyond $\alpha$-\sm\ valuations, the iNUB condition is not helpful, so we turn to the stronger sNUB condition. We show that every S2PA with arbitrary monotone valuations satisfying sNUB is $(1,1)$-revenue guaranteed for bid profiles with non-negative sum of utilities. This recovers the $1/2$ bound on PoA with respect to CCE and BPoA with correlated distributions for S2PA satisfying sNUB. For XOS valuations, we combine the last result with the known $(1,1)$-smoothness to get the $2/3$ bound for S2PA satisfying NOB and sNUB (for PoA with respect to CCE and for BPoA with product distributions). For \sa\ valuations, we apply the technique from \cite{BR11} to yield a tight bound of $2/3$ on the PoA with respect to CCE.


Note that our notion of revenue guarantee is unrelated to the revenue covering property introduced in \citet{HHT14}.
Their revenue covering property is defined for single-parameter settings, and relies heavily on some relationship between thresholds and revenue which does not apply in second price auctions.

\section{Preliminaries}
\label{sec:preliminaries}
\subsection{Auctions}
\label{subsec:auctions}
\paragraph{Combinatorial auctions}
In a combinatorial auction a set of $m$ non-identical items are sold to a group of $n$ players.
Let $\mathcal{S}_i$ be the set of possible allocations to player $i$, $\mathcal{V}_i$ the set of possible valuations of player $i$, and  
$\mathcal{B}_i$ the set of actions available to player $i$. 
Similarly, we let $\mathcal{S} \subseteq \mathcal{S}_1 \times \ldots \times \mathcal{S}_n$ be the allocation space of all players, 
$\mathcal{V} = \mathcal{V}_1 \times \ldots \times \mathcal{V}_n$ be the valuation space, and $\mathcal{B} = \mathcal{B}_1 \times \ldots \times \mathcal{B}_n$ be the action space. An allocation function maps an action profile to an allocation $\mathbf{S} = (S_1, \ldots, S_n) \in \mathcal{S}$, where $S_i$ is the set of items allocated to player $i$. A payment function maps an action profile to a non negative payment $\mathbf{P} = (P_1, \ldots, P_n) \in \mathbb{R}_{+}$, where $P_i$ is the payment of player $i$. We assume that the valuation function $v_i: \mathcal{S}_i \rightarrow \mathbb{R}_{+}$ of a player $i$, where $v_i \in \mathcal{V}_i$, is 
monotone and normalized, i.e., $\forall S \subseteq T \subseteq [m], v_i(S) \le v_i(T)$ and also $v_i(\emptyset)=0$. 
We let $\mathbf{v} = (v_1, \ldots, v_n)$ be the valuation profile. An outcome is 
a pair of allocation $\mathbf{S}$ and payment $\mathbf{P}$
and the revenue is the sum of all payments, i.e. $\mathcal{R}(\mathbf{b}) = \sum_{i \in [n]} P_i(\mathbf{b})$.  
We assume a quasi-linear utility function, i.e. $u_i(S_i, P_i, v_i) = v_i(S_i) - P_i$. We are interested in measuring the \emph{social welfare}, which is the sum of bidder valuations, i.e., $SW(\mathbf{S}, \mathbf{v}) = \sum_{i \in [n]} v_i(S_i)$. 
Given a valuation profile $\mathbf{v}$, an optimal allocation is an allocation that maximizes the $SW$ over all possible allocations. 
We denote by 
$OPT(\mathbf{v})$ the social welfare value of an optimal allocation.

\paragraph{Simultaneous item bidding auction}
In a simultaneous item bidding auction (simultaneous item auction, in short) each item $j \in [m]$ is simultaneously sold in a separate auction. An action profile is a bid profile $\mathbf{b} = (\mathbf{b}_1, \ldots, \mathbf{b}_n)$,
where $b_i = (b_{i1}, \ldots, b_{im})$ is an $m$-vector s.t. $b_{ij}$ is the bid of player $i$ for item $j$. The allocation of each item $j$ is determined by the bids $(b_{1j}, \ldots, b_{nj})$. 
We use $S_i(\mathbf{b})$ to denote the items won by player $i$ and  $p_j(\mathbf{b})$ 
to denote the price paid for item $j$ by the winner of item $j$.
As allocation and payment are uniquely defined by the bid profile, we overload notation and write $u_i(\mathbf{b}, v_i)$ and $SW(\mathbf{b}, \mathbf{v})$.


In a simultaneous second price auction (S2PA), each item $j$ is allocated to the highest bidder, who pays the second highest bid, i.e., $P_i = \sum_{j \in S_i(\mathbf{b})} \max_{k \neq i} b_{kj}$.

In a simultaneous first price auction (S1PA), each item $j$ is allocated to the highest bidder, who pays her bid for that item, i.e., $P_i = \sum_{j \in S_i(\mathbf{b})} b_{ij}$.

Ties are broken arbitrarily but consistently.


\paragraph{Full information setting: solution concepts and PoA}
In the full information setting, the valuation profile $\mathbf{v} = (v_1, \ldots, v_n)$ is known to all players. The standard equilibrium concepts in this setting are pure Nash equilibrium (PNE), mixed Nash equilibrium (MNE), correlated Nash equilibrium (CE) and coarse correlated Nash equilibrium (CCE),
where $PNE \subset MNE \subset CE \subset CCE$. 
Following are the definitions of the equilibrium concepts.
As standard, for a vector $\mathbf{y}$, we denote by $\mathbf{y}_{-i}$ the vector $\mathbf{y}$ with the \emph{i}th component removed.
Also, we denote with $\Delta(\Omega)$ the space of probability distributions over a finite set $\Omega$.

\begin{definition} [\textbf{Pure Nash Equilibrium (PNE)}]
\label{def:pne}
A bid profile 
$\mathbf{b} \in \mathcal{B}_1 \times \ldots \times \mathcal{B}_n$ 
is a PNE if for any $i \in [n]$ and for any $b_i^{'} \in \mathcal{B}_i$,
$u_i(\mathbf{b}, v_i) \ge u_i(b_i^{'},\mathbf{b}_{-i}, v_i)$.
\end{definition}

\begin{definition} [\textbf{Mixed Nash Equilibrium (MNE)}]
\label{def:mne}
A bid profile of randomized bids $\mathbf{b} \in \Delta(\mathcal{B}_1) \times \ldots \times \Delta(\mathcal{B}_n)$ is a MNE if for any $i \in [n]$ and for any $b_i^{'} \in \mathcal{B}_i$,
$\mathbb{E}_{\mathbf{b}} 
\left[ 
u_i(\mathbf{b}, v_i)
\right]
\ge 
\mathbb{E}_{\mathbf{b}_{-i}} 
\left[
u_i(b_i^{'},\mathbf{b}_{-i}, v_i)
\right]$.
\end{definition}

\begin{definition} [\textbf{Correlated Nash Equilibriun (CE)}]
\label{def:cne}
A bid profile of randomized bids $\mathbf{b} \in \Delta(\mathcal{B}_1 \times \ldots \times \mathcal{B}_n)$ is a CE if for any $i \in [n]$ and for any mapping $b_i^{'}(b_i)$,
$\mathbb{E}_{\mathbf{b}} 
\left[ 
u_i(\mathbf{b}, v_i) \mid b_i
\right]
\ge 
\mathbb{E}_{\mathbf{b}} 
\left[
u_i(b_i^{'},\mathbf{b}_{-i}, v_i) \mid b_i
\right]$.
\end{definition}

\begin{definition} [\textbf{Coarse Correlated Nash Equilibriun (CCE)}]
\label{def:ccne}
A bid profile of randomized bids $\mathbf{b} \in \Delta(\mathcal{B}_1 \times \ldots \times \mathcal{B}_n)$ is a CCE if for any $i \in [n]$ and for any $b_i^{'} \in \mathcal{B}_i$,
$\mathbb{E}_{\mathbf{b}} 
\left[ 
u_i(\mathbf{b}, v_i)
\right]
\ge 
\mathbb{E}_{\mathbf{b}} 
\left[
u_i(b_i^{'},\mathbf{b}_{-i}, v_i)
\right]$.
\end{definition}


For a given instance of valuations $\mathbf{v}$, the price of anarchy (PoA) with respect to an equilibrium notion $E$ is defined as: 
$PoA(\mathbf{v}) = \inf_{\mathbf{b} \in E} \frac{\mathbb{E}_{\mathbf{b}} 
\left[SW(\mathbf{b},\mathbf{v}) \right]}{OPT(\mathbf{v})}$. For example, the PoA with respect to PNE is $PoA(\mathbf{v}) = \inf_{\mathbf{b} \in PNE} \frac{SW(\mathbf{b},\mathbf{v})} {OPT(\mathbf{v})}$. The PoA for the other equilibrium types are defined in a similar manner.
For a family of valuations $\mathsf{V}$, 
$PoA(\mathsf{V}) = \min_{\mathbf{v} \in \mathsf{V}} PoA(\mathbf{v})$.



The following lemma will be useful in subsequent sections of this paper.

\begin{lemma}
\label{lem:s2pa_rev_bids}
Consider an S2PA and a valuation $\mathbf{v}$.
Let $S^*(\mathbf{v}) = (S_1^*(\mathbf{v}), \ldots, S_n^*(\mathbf{v}))$
be a welfare-maximizing allocation. Then, for every bid profile $\mathbf{b}$ the following holds:
\begin{eqnarray}
\nonumber
\sum_{i=1}^{n} \sum_{j \in S_i(\mathbf{b})} p_j(\mathbf{b})
&\geq&
\sum_{i=1}^{n} \sum_{j \in S^*_i(\mathbf{v}) \backslash S_i(\mathbf{b})} b_{ij}
\end{eqnarray}
\end{lemma}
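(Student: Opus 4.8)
The plan is to reduce both sides to per-item quantities and then compare item by item. First I would rewrite the left-hand side: in an S2PA every item has a unique winner, so the won-sets $\{S_i(\mathbf{b})\}_{i\in[n]}$ are pairwise disjoint and their union is the set of allocated items. Hence $\sum_{i=1}^n\sum_{j\in S_i(\mathbf{b})} p_j(\mathbf{b})$ is exactly the total revenue $\sum_{j\in[m]} p_j(\mathbf{b})$ (setting $p_j(\mathbf{b})=0$ for any unallocated item, which in fact does not occur since the highest bidder always wins).

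Next I would handle the right-hand side. Because $S^*(\mathbf{v})$ is an allocation, each item $j$ lies in $S^*_i(\mathbf{v})$ for at most one index $i$; therefore the double sum $\sum_{i}\sum_{j\in S^*_i(\mathbf{v})\setminus S_i(\mathbf{b})} b_{ij}$ contributes at most one term per item $j$, namely $b_{i(j),j}$ from the unique bidder $i(j)$ (if any) who owns $j$ in the optimal allocation but not under $\mathbf{b}$. So the right-hand side equals $\sum_{j} b_{i(j),j}$ where the sum ranges over those items $j$ for which $i(j)$ exists.

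The key step is the per-item inequality $b_{i(j),j}\le p_j(\mathbf{b})$ whenever $i(j)$ exists. This holds because $i(j)$ is by definition \emph{not} the winner of item $j$ under $\mathbf{b}$, and in a second-price auction $p_j(\mathbf{b})=\max_{k\neq w} b_{kj}$ where $w$ is the winner; thus every non-winning bid on $j$, in particular $b_{i(j),j}$, is at most $p_j(\mathbf{b})$. This also covers the tie-breaking case: if $i(j)$ submitted a bid equal to the top bid but lost the tie, then the winner $w$ also bid that amount, so $p_j(\mathbf{b})\ge b_{i(j),j}$ still holds. Summing this over the relevant items and adding the trivial bound $0\le p_j(\mathbf{b})$ for all other items gives $\sum_{i}\sum_{j\in S^*_i(\mathbf{v})\setminus S_i(\mathbf{b})} b_{ij}\le \sum_{j\in[m]} p_j(\mathbf{b})=\sum_{i}\sum_{j\in S_i(\mathbf{b})} p_j(\mathbf{b})$, which is the claim.

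I do not expect a genuine obstacle here: the statement is essentially a bookkeeping fact. The only point that needs to be stated carefully is the tie-breaking convention in the definition of $p_j$, which I would spell out exactly as above; the rest is just disjointness of the won-sets and the observation that an optimal allocation places each item with at most one bidder.
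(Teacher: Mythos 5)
Your proof is correct and is essentially the paper's argument: both rest on the single observation that the bid of the optimal-allocation owner of an item who loses that item under $\mathbf{b}$ is at most the second-highest bid, i.e.\ at most $p_j(\mathbf{b})$, together with the fact that the won-sets partition the items. The paper merely organizes the same bookkeeping as a double sum over player pairs $(i,l)$ with a change of summation order, whereas you index directly by item; the content is identical.
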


\begin{proof}
Let $S^*_{-i}(\mathbf{v}) = \bigcup\limits_{j \ne i} S^*_j(\mathbf{v})$.
Since payments are non-negative, $S_i(\mathbf{b}) \cap S^*_{-i}(\mathbf{v}) \subseteq S_i(\mathbf{b})$, and each item is sold in a separate second price auction, we get:


\begin{eqnarray}
\nonumber
\sum_{i=1}^{n} \sum_{j \in S_i(\mathbf{b})} p_j(\mathbf{b})
&\geq&
\sum_{i=1}^{n} \sum_{j \in S_i(\mathbf{b}) \cap S^*_{-i}(\mathbf{v})} p_j(\mathbf{b}) 
\ \ = \ \  
\sum_{i=1}^{n} \sum_{j \in S_i(\mathbf{b}) \cap S^*_{-i}(\mathbf{v})} max_{k \neq i} ~ b_{kj} \\
\label{eqn:lem:s2pa_rev_bids_1}
&=&
\sum_{i=1}^{n} \sum_{l=1,l \ne i}^{n} \sum_{j \in S_i(\mathbf{b}) \cap S^*_l(\mathbf{v})} max_{k \neq i} ~ b_{kj} 
\ \ \geq \ \ 
\sum_{i=1}^{n} \sum_{l=1,l \ne i}^{n} \sum_{j \in S_i(\mathbf{b}) \cap S^*_l(\mathbf{v})} b_{lj} 
\\
&=&
\nonumber
\sum_{l=1}^{n} \sum_{j \in S^*_l(\mathbf{v}) \backslash S_l(\mathbf{b})} b_{lj}
\end{eqnarray}

Inequality (\ref{eqn:lem:s2pa_rev_bids_1}) holds since $b_{lj}$ is at most the second highest bid on item $j \in S_i(\mathbf{b})$. Notice that the term in (\ref{eqn:lem:s2pa_rev_bids_1}) considers for each player $i$ all the items she wins in bid profile $\mathbf{b}$, which are allocated to some other player $l \ne i$ in the optimal allocation.
Instead, we can change the order of summation and consider for each player $l$ all the items which are allocated to her in the optimal allocation, but not in bid profile $\mathbf{b}$. This accounts for the last equality. 
\end{proof}


\paragraph{Incomplete information setting: solution concepts and Bayesian PoA}
In an incomplete information setting, 
player valuations are drawn from a commonly known, possibly correlated,  joint distribution $\mathcal{F} \in \Delta(\mathcal{V}_1 \times \ldots \times \mathcal{V}_n)$, and the valuation $v_i$ of each player is a private information which is known only to player $i$.
The strategy of player $i$ is a function $\sigma_i: \mathcal{V}_i \rightarrow \mathcal{B}_i$. Let $\Sigma_i$ denote the strategy space of player $i$ and $\Sigma = \Sigma_1 \times \ldots \times \Sigma_n$ the strategy space of all players. 
We denote  by $\mathbf{\sigma}(\mathbf{v}) = (\sigma_1(v_1), \ldots, (\sigma_n(v_n))$ the bid vector given a valuation profile $\mathbf{v}$.

In some cases, we assume that the joint distribution of the valuations is a product distribution, i.e., $\mathcal{F} = \mathcal{F}_1 \times \ldots \times \mathcal{F}_n \in \Delta(\mathcal{V}_1) \times \ldots \times \Delta(\mathcal{V}_n)$. In these cases, 
each valuation $v_i$ is independently drawn from the commonly known distribution $\mathcal{F}_i \in \Delta(\mathcal{V}_i)$.

The standard equilibrium concepts in the incomplete information setting are the Bayes Nash equilibrium (BNE) and the mixed Bayes Nash equilibrium (MBNE): 
\begin{definition} [\textbf{Bayes Nash Equilibriun (BNE)}]
\label{def:bne}
A strategy profile 
$\mathbf{\sigma}$
is a BNE if for any $i \in [n]$, any $v_i \in \mathcal{V}_i$ and any $b_i^{'} \in \mathcal{B}_i$,
\begin{eqnarray}
\label{eqn:bne}
\mathbb{E}_{\mathbf{v}_{-i} 
\mid {v_i}} 
\left[ 
u_i(\sigma_i(v_i), \mathbf{\sigma}_{-i}(\mathbf{v}_{-i}), v_i) 
\right]
\ge 
\mathbb{E}_{\mathbf{v}_{-i} 
\mid {v_i}} 
\left[
u_i(b_i^{'},\mathbf{\sigma}_{-i}(\mathbf{v}_{-i}), v_i)
\right]
\end{eqnarray}
\end{definition}

\begin{definition} [\textbf{Mixed Bayes Nash Equilibriun (MBNE)}]
\label{def:mbne}
A randomized strategy profile 
$\mathbf{\sigma}$ 
is a MBNE if for any $i \in [n]$, any $v_i \in \mathcal{V}_i$ and any $b_i^{'} \in \mathcal{B}_i$,
$$\mathbb{E}_{\mathbf{v}_{-i} 
\mid v_i} 
\mathbb{E}_{\mathbf{\sigma}}
\left[ 
u_i(\sigma_i(v_i), \mathbf{\sigma}_{-i}(\mathbf{v}_{-i}), v_i)
\right]
\ge 
\mathbb{E}_{\mathbf{v}_{-i} \mid v_i} 
\mathbb{E}_{\mathbf{\sigma}_{-i}}
\left[
u_i(b_i^{'},\mathbf{\sigma}_{-i}(\mathbf{v}_{-i}), v_i) 
\right]$$
\end{definition}

Note that if player valuations are independent, we can omit the conditioning on $v_i$ in Definitions \ref{def:bne} and \ref{def:mbne}.

The Bayes Nash price of anarchy is:
$$BPoA = \inf_{\mathcal{F},~\mathbf{\sigma} \in BNE} \frac{\mathbb{E}_{\mathbf{v}} 
\left[SW(\mathbf{\sigma}(\mathbf{v}),\mathbf{v}) \right]}{\mathbb{E}_{\mathbf{v}} 
\left[ OPT(\mathbf{v}) \right]}$$

The mixed Bayes Nash price of anarchy is defined similarly w.r.t. MBNE.

\subsection{Valuation Classes}
\label{subsec:val_class}
In what follows we present the valuation functions considered in this paper. As standard, for a valuation $v$, item $j$ and set $S$, we denote the marginal value of item $j$, given set $S$, as $v(j \mid S)$; i.e., $v(j \mid S) = v(S \cup \set{j}) - v(S)$. In a similar manner, the marginal value of a set $S^{'}$, given a set $S$, is $v(S^{'} \mid S) = v(S \cup S^{'}) - v(S)$. Following are the valuation classes we consider:

\begin{description}
	\item[\ud\ (UD):]
	A valuation function $v$ is UD if there exist values $v_1, \ldots, v_m$ such that for every set $S \subseteq [m]$, $v(S) = max_{j \in S} v_j$.
	\item[\sm\ (SM):]
	A valuation function $v$ is  SM if for
every two sets $S \subseteq T \subseteq [m]$ and element $j \notin T$,
$v(j \mid S) \ge v(j \mid T)$.
	\item[\xos\ (also known as fractionally subadditive):] A valuation function $v$ is XOS if there exists a set $\mathcal{L}$ of additive valuations $\{a_{\ell}(\cdot)\}_{\ell \in \mathcal{L}}$, such that for every set $S \subseteq [m]$, $v(S) = max_{\ell \in \mathcal{L}} a_{\ell}(S)$.
	\item[\sa\ (SA):]
	A valuation function $v$ is SA if for any subsets $S, T \subseteq [m]$, $v(S) + v(T) \ge v(S \cup T)$.
	\item[\mon\ (MON):]
	A valuation function $v$ is MON if $\forall S \subseteq T \subseteq [m], v(S) \le v(T)$.
\end{description}
A strict containment hierarchy of the above valuation classes is known: $UD \subset SM \subset XOS \subset SA \subset MON$.

\subsubsection{\texorpdfstring{$\alpha-$}{}SM Valuation Class}
\label{subsubsec:alpha-sm_val_class}
Classes of set functions are usually characterized by some convenient properties that make them useful in optimization, characterization, approximation, etc..
In practice, the input might only {\em approximately} adhere to some structural property. 
To fall within a particular structural property, fairly stringent constraints should be satisfied. 
The question, then, is whether the guarantees associated with these stringent constraints continue to hold approximately given that these constraints hold approximately. 
This motivates us to introduce a new class of valuation functions, parameterized by `how far' they are from submodular valuations:
\begin{definition}[$\mathbf{\alpha-}$\textbf{\sm\ (}$\mathbf{\alpha-}$\textbf{SM)}]
\label{def:alpha_sm}
A valuation function $v$ is $\alpha-$SM, for $0 < \alpha \le 1$, if for
every two sets $S \subseteq T \subseteq [m]$ and element $j \notin T$,
$v(j \mid S) \ge \alpha \cdot v(j \mid T)$.
\end{definition}



By definition, a SM valuation is precisely $1-$SM. 
Note that the definitions of $\alpha-$SM and XOS are incomparable.
In particular, there is no $\alpha > 0$, such that every XOS function is $\alpha-$SM (see Appendix \ref{app:xos_not_alpha_sm}), and for every $0 < \alpha < 1$, there exists an $\alpha-$SM function that is not XOS (see Appendix \ref{app:alpha_sm_not_xos}).



\begin{lemma}
\label{lem:alpha_sm_prop}
For any $\alpha-$SM function $v$ and any sets $S,S^{'}$: 
$\sum_{j \in S^{'}}v(j \mid S) \geq \alpha \cdot v(S{'} \mid S)$
\end{lemma}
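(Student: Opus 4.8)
The plan is to prove this by induction on the size of the set $S'$, peeling off one element at a time and using the $\alpha$-SM inequality from Definition~\ref{def:alpha_sm}. Concretely, write $S' = \set{j_1, j_2, \ldots, j_k}$ in some fixed order, and set $T_0 = \emptyset$ and $T_\ell = \set{j_1, \ldots, j_\ell}$. The telescoping identity
\begin{equation}
\nonumber
v(S' \mid S) = v(S \cup S') - v(S) = \sum_{\ell=1}^{k} \Big( v(S \cup T_\ell) - v(S \cup T_{\ell-1}) \Big) = \sum_{\ell=1}^{k} v(j_\ell \mid S \cup T_{\ell-1})
\end{equation}
decomposes the left-hand side into marginals. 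Now for each $\ell$, apply the $\alpha$-SM property with the nested pair $S \subseteq S \cup T_{\ell-1}$ and the element $j_\ell \notin S \cup T_{\ell-1}$ (assuming $j_\ell \notin S$; see the obstacle below), which gives $v(j_\ell \mid S) \geq \alpha \cdot v(j_\ell \mid S \cup T_{\ell-1})$. Summing over $\ell$ yields $\sum_{\ell=1}^k v(j_\ell \mid S) \geq \alpha \sum_{\ell=1}^k v(j_\ell \mid S \cup T_{\ell-1}) = \alpha \cdot v(S' \mid S)$, which is exactly the claim (after noting $\sum_{j \in S'} v(j \mid S) = \sum_{\ell=1}^k v(j_\ell \mid S)$).

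The main obstacle is handling the case where $S' \cap S \neq \emptyset$, since Definition~\ref{def:alpha_sm} requires the element $j$ to lie outside the larger set $T$, and here the larger set is $S \cup T_{\ell-1} \supseteq S$. The clean fix is to first reduce to the disjoint case: let $S'' = S' \setminus S$. For any $j \in S' \cap S$ we have $v(j \mid S) = 0$ by monotonicity (since $S \cup \set{j} = S$), so $\sum_{j \in S'} v(j \mid S) = \sum_{j \in S''} v(j \mid S)$; and $v(S' \mid S) = v(S \cup S') - v(S) = v(S \cup S'') - v(S) = v(S'' \mid S)$. Thus it suffices to prove the inequality for $S''$, which is disjoint from $S$, and then every element $j_\ell \in S''$ genuinely satisfies $j_\ell \notin S \cup T_{\ell-1}$, so the $\alpha$-SM step applies legitimately.

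One should also observe that the result does not even require $\alpha \le 1$ in any essential way beyond what Definition~\ref{def:alpha_sm} assumes, and that no appeal to non-negativity of the marginals is needed — only monotonicity of $v$ (used in the reduction step) and the defining inequality of $\alpha$-SM. I would present the telescoping sum and the per-term $\alpha$-SM bound as the heart of the argument, with the $S' \cap S$ reduction dispatched in a sentence at the start.
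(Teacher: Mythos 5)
Your proof is correct and follows essentially the same route as the paper's: order the elements of $S'$, telescope $v(S' \mid S)$ into successive marginals, and bound each term via the $\alpha$-SM inequality $v(j_\ell \mid S) \ge \alpha \cdot v(j_\ell \mid S \cup T_{\ell-1})$. Your explicit reduction to the case $S' \cap S = \emptyset$ is a small point of added care that the paper's proof silently skips, but it does not change the argument.
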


\begin{proof}
Let $S^{'} = \set{j_1, j_2, \ldots, j_{|S^{'}|}}$. 
As $v$ is $\alpha-$SM, we have
$v(j_i \mid S) 
\ge
 \alpha ~ v(j_i \mid S \cup \set{j_1, \ldots, j_{i-1}})$ for every $i=1, \ldots, |S^{'}|$.
Therefore, 
$\sum_{j \in S^{'}} v(j \mid S) 
=
\sum_{i=1}^{|S^{'}|} v(j_i \mid S) 
\ge 
\alpha \sum_{i=1}^{|S^{'}|} v(j_i \mid S \cup \set{j_1, \ldots, j_{i-1}})   
=
\alpha \cdot v(S{'} \mid S)$.
The inequality follows from $\alpha-$submodularity, and the last equality is due to telescoping sum.
\end{proof}

\begin{lemma}
\label{lem:alpha_sm_xos}
If a valuation function, $v$, is $\alpha-$SM, then there 
exists a set $\mathcal{L}$ of additive valuations $\{a^{\ell}(\cdot)\}_{\ell \in \mathcal{L}}$, such that for every set 
$S \subseteq [m]$, $v(S) \ge \alpha \cdot max_{\ell \in \mathcal{L}} \left[ a^{\ell}(S) \right]$ and there exists at least one $\ell$ such that $v(S) = a^{\ell}(S)$.
\end{lemma}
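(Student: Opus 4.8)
The plan is to construct the XOS-like representation of an $\alpha$-SM valuation $v$ by mimicking the standard construction that shows every submodular function admits a "supporting" additive valuation at each set. For each set $S \subseteq [m]$, order its elements as $S = \{j_1, \ldots, j_{|S|}\}$ (say, in increasing index order) and define an additive valuation $a^S$ by setting, for each item $j \in [m]$, a coefficient equal to the marginal value of $j$ at the appropriate prefix if $j \in S$, and something harmless (e.g.\ $v(j \mid S)$, or even $0$ — to be checked) if $j \notin S$. Concretely, for $j = j_i \in S$ put $a^S_{j} = v(j_i \mid \{j_1,\ldots,j_{i-1}\})$. Then by a telescoping sum $a^S(S) = \sum_{i=1}^{|S|} v(j_i \mid \{j_1,\ldots,j_{i-1}\}) = v(S)$, which gives the required "$\exists \ell$ with $v(S) = a^\ell(S)$" clause, with $\mathcal{L}$ indexed by all subsets of $[m]$ and the distinguished $\ell$ being $S$ itself.

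The substantive part is the inequality $v(S) \ge \alpha \cdot \max_{\ell \in \mathcal{L}} a^\ell(S)$, i.e.\ $v(S) \ge \alpha \cdot a^{T}(S)$ for every other set $T$. Fix $T = \{j_1, \ldots, j_{|T|}\}$ and consider $a^T(S) = \sum_{j \in S} a^T_j = \sum_{j \in S \cap T} v(j_i \mid \{j_1,\ldots,j_{i-1}\}) + \sum_{j \in S \setminus T} a^T_j$. For items in $S \cap T$: each coefficient $v(j_i \mid \{j_1,\ldots,j_{i-1}\})$ is a marginal of $v$ at a subset of $T$, and $\alpha$-submodularity lets me relate it to marginals of $v$ at supersets; but to sum these up and bound by $v(S)$ I want to telescope along $S$, not along $T$. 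The clean way is: choose the coefficients for $j \notin T$ to be $0$, so $a^T(S) = \sum_{j \in S \cap T} v(j \mid \text{prefix of }T)$, and then invoke $\alpha$-submodularity in the form $v(j \mid A) \le \frac{1}{\alpha} v(j \mid B)$ for $B \subseteq A$ (which is Definition~\ref{def:alpha_sm} rearranged) to replace each prefix-of-$T$ by a prefix-of-$S$ contained in it — but prefixes of $T$ need not be contained in prefixes of $S$. So instead I will apply Lemma~\ref{lem:alpha_sm_prop}: $\sum_{j \in S}v(j \mid \emptyset) \ge \alpha\, v(S \mid \emptyset) = \alpha\, v(S)$ — wait, that is the wrong direction. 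The correct route is to bound $a^T(S) = \sum_{j \in S \cap T} v(j \mid T_{<j})$ where $T_{<j}$ is the $T$-prefix before $j$; by $\alpha$-submodularity $v(j \mid T_{<j}) \le \tfrac1\alpha v(j \mid \emptyset) = \tfrac1\alpha v(j)$. Hmm, that gives $a^T(S) \le \tfrac1\alpha \sum_{j\in S\cap T} v(j)$, and one still needs $\sum_{j \in S} v(j) \le v(S)$, which is false in general.

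The main obstacle, then, is getting the direction of the $\alpha$-submodularity inequality to cooperate: I need an upper bound on $a^T(S)$ in terms of $v(S)$, and $\alpha$-submodularity naturally upper-bounds marginals at large sets by $1/\alpha$ times marginals at small sets, which pushes the wrong way. I expect the resolution is to define $a^T$'s coefficients as marginals at $T$-prefixes, observe $a^T(S) = \sum_{j \in S \cap T}v(j \mid T_{<j})$, then peel off the elements of $S\cap T$ one at a time in $T$-order: $v(j_1 \mid \emptyset) = v(j_1) \le \tfrac1\alpha v(j_1 \mid \text{rest})$... no. I think the actual argument must instead lower-bound: pick the prefix decomposition of $S$, write $v(S) = \sum_{i}v(s_i \mid S_{<i})$, and show termwise that the $T$-based coefficient of $s_i$ is at most $\tfrac1\alpha v(s_i \mid S_{<i})$ whenever $S_{<i} \subseteq T_{<s_i}$ — which can be forced by choosing the element orderings of $S$ and $T$ compatibly (e.g.\ order $[m]$ once and for all, and use that fixed order for every set, so that $S_{<i} = S \cap [m]_{<s_i} \subseteq T \cap [m]_{<s_i} = T_{<s_i}$ exactly when... again not automatic). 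So the key step I will need to nail down carefully is the choice of the additive valuations' coefficients (in particular the values on items outside the set, which I suspect should be the "full" marginals $v(j \mid T)$ rather than $0$) together with a global fixed item order, after which the bound $v(S) \ge \alpha\, a^T(S)$ should follow by a termwise application of Definition~\ref{def:alpha_sm} and a telescoping identity; everything else (monotonicity, the $v(S)=a^S(S)$ equality, finiteness of $\mathcal{L}$) is routine.
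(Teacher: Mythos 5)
Your proposal correctly identifies the shape of the argument --- a termwise application of Definition~\ref{def:alpha_sm} followed by a telescoping identity, with the equality clause supplied by a distinguished member of $\mathcal{L}$ --- but it never lands on a construction of $\mathcal{L}$ for which the required set containment actually holds, and you acknowledge as much (``prefixes of $T$ need not be contained in prefixes of $S$'', ``again not automatic''). Every variant you try indexes $\mathcal{L}$ by \emph{subsets} $T$ and then compares a $T$-prefix of an item $j$ with an $S$-prefix of $j$; these two prefixes are incomparable in general, so the termwise bound fails. Your closing guess (coefficients $v(j\mid T)$ for $j\notin T$, plus one global item order) does not repair this: for $j\in S\cap T$ you would still need $S\cap[m]_{<j}\subseteq T\cap[m]_{<j}$, which is false whenever $S$ contains an earlier item that $T$ lacks.

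The paper's resolution is to index $\mathcal{L}$ by the $m!$ \emph{permutations} of the full ground set $[m]$ rather than by subsets: for a permutation $\ell$, set $a^{\ell}_j = v(j \mid S^{\ell}_j)$, where $S^{\ell}_j$ is the set of items preceding $j$ in $\ell$. Then for any $S$ and any $\ell$, one compares the full prefix $S^{\ell}_j$ not with a prefix of $S$ taken in some independently chosen order, but with its own restriction $S\cap S^{\ell}_j$, which is trivially contained in it; $\alpha$-submodularity gives $v(j\mid S^{\ell}_j)\le \tfrac{1}{\alpha}\, v(j\mid S\cap S^{\ell}_j)$ for each $j\in S$, and summing over $j\in S$ \emph{in the order induced by $\ell$} makes the right-hand side telescope to $\tfrac{1}{\alpha}v(S)$. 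The equality clause comes from any permutation placing the items of $S$ first. The missing idea is thus precisely this: telescope $v(S)$ along the order that $\ell$ itself induces on $S$, so that the containment between the two prefixes is automatic rather than something you must engineer.
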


\begin{proof}
The proof is an extension of the proof in \citet{LLN06} that any submodular function is XOS.
Define $m!$ additive valuations $a^{\ell}$, one for each permutation of the items in $[m]$. Let $a^{\ell}_j = v(j \mid S^{\ell}_j)$, where $S^{\ell}_j$ is the set of items in permutation $\ell$ preceding item $j$. 
For any permutation $\ell$ and set $S = \set{1, 2, \ldots, k} \subseteq [m]$ with item $j$ denoting the $j$th item of $S$ in the permutation 
$\ell$,
\begin{eqnarray*}
a^{\ell}(S) 
&=&
\sum_{j \in S} a^{\ell}_j 
\ \ = \ \  
\sum_{j \in S} v(j \mid S^{\ell}_j) \\
&  \le &
\sum_{j \in S}
\frac{1}{\alpha} 
\left[
v(\set{1, 2, \ldots, j}) - v(\set{1, 2, \ldots, j-1})
\right] 
\ \  = \ \ 
\frac{1}{\alpha} v(S),
\end{eqnarray*}
where the inequality follows from the definition of $\alpha-$SM.
For any permutation $\ell$ in which the items of $S$ are placed first, we have $v(S) = a^{\ell}(S)$.
\end{proof}


\subsection{Smooth Auctions}
\label{subsec:smooth}

We use a smoothness definition based on \citet{R09} and \citet{RST17}:

\begin{definition} [\textbf{Smooth auction (based on \cite{R09}, \cite{RST17})}]
\label{def:smooth}
An auction is $(\lambda,\mu)-$smooth for parameters $\lambda,\mu \ge 0$ with respect to a bid space $\mathcal{B}' \subseteq \mathcal{B}_1 \times \ldots \times \mathcal{B}_n$, if for any valuation profile $\mathbf{v} \in \mathcal{V}_1 \times \ldots \times \mathcal{V}_n$ and any bid profile $\mathbf{b} \in \mathcal{B}'$ there exists a bid 
$b^*_i(\mathbf{v}) \in \mathcal{B}_i$ 
for each player $i$, s.t.:
\begin{eqnarray}
\label{eqn:smooth}
\sum_{i \in [n]} u_i(b^*_i(\mathbf{v}), \mathbf{b}_{-i},v_i) \ge \lambda \cdot OPT(\mathbf{v}) - \mu \cdot  SW(\mathbf{b}, \mathbf{v})
\end{eqnarray}
\end{definition}

It is shown in \cite{R09,RST17} that for every $(\lambda,\mu)-$smooth auction, the social welfare of any pure NE is at least $\frac{\lambda}{1+\mu}$. 
Via extension theorems, this bound extends to CCE in full-information settings and to Bayes NE in settings with incomplete information. 
These theorems are stated below, and their proofs appear in Appendix \ref{app:extens_thm_proof} for completeness.


\begin{theorem}
\label{thm:smooth_poa_copm_i}
\textbf{(based on \cite{R09}, \cite{RST17})}
If an auction is $(\lambda,\mu)-$smooth with respect to a bid space $\mathcal{B}' \subseteq \mathcal{B}_1 \times \ldots \times \mathcal{B}_n$, then the expected social welfare of any 
coarse correlated equilibrium, $\mathbf{b} \in \Delta(\mathcal{B}')$, of the auction is at least $\frac{\lambda}{1+\mu}$ of the optimal social welfare.
\end{theorem}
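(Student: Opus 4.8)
The plan is to mimic the standard smoothness-to-PoA argument of \citet{R09,RST17}, adapted to coarse correlated equilibria. First I would fix a CCE $\mathbf{b} \in \Delta(\mathcal{B}')$ and, for the valuation profile $\mathbf{v}$, invoke the $(\lambda,\mu)$-smoothness hypothesis: for every realization of the randomized bid profile (which lies in $\mathcal{B}'$) there is a deviation bid $b^*_i(\mathbf{v})$ for each player $i$ satisfying inequality~\eqref{eqn:smooth}. The key point is that $b^*_i(\mathbf{v})$ depends only on $\mathbf{v}$, not on $\mathbf{b}$, so it is a legitimate fixed unilateral deviation that can be plugged into the CCE inequality of Definition~\ref{def:ccne}.

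Next I would chain the inequalities. By the CCE condition, for each player $i$,
\[
\mathbb{E}_{\mathbf{b}}\left[u_i(\mathbf{b}, v_i)\right] \ge \mathbb{E}_{\mathbf{b}}\left[u_i(b^*_i(\mathbf{v}), \mathbf{b}_{-i}, v_i)\right].
\]
Summing over $i \in [n]$ and using linearity of expectation gives
\[
\mathbb{E}_{\mathbf{b}}\left[\sum_{i\in[n]} u_i(\mathbf{b}, v_i)\right] \ge \mathbb{E}_{\mathbf{b}}\left[\sum_{i\in[n]} u_i(b^*_i(\mathbf{v}), \mathbf{b}_{-i}, v_i)\right] \ge \mathbb{E}_{\mathbf{b}}\left[\lambda \cdot OPT(\mathbf{v}) - \mu\cdot SW(\mathbf{b},\mathbf{v})\right],
\]
where the last step applies \eqref{eqn:smooth} pointwise inside the expectation. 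On the left side, I would use the fact that in an auction the sum of utilities equals welfare minus revenue, $\sum_i u_i(\mathbf{b}, v_i) = SW(\mathbf{b},\mathbf{v}) - \mathcal{R}(\mathbf{b}) \le SW(\mathbf{b},\mathbf{v})$, since payments are non-negative. Combining, $\mathbb{E}_{\mathbf{b}}[SW(\mathbf{b},\mathbf{v})] \ge \lambda\cdot OPT(\mathbf{v}) - \mu\cdot \mathbb{E}_{\mathbf{b}}[SW(\mathbf{b},\mathbf{v})]$, and rearranging yields $\mathbb{E}_{\mathbf{b}}[SW(\mathbf{b},\mathbf{v})] \ge \frac{\lambda}{1+\mu} OPT(\mathbf{v})$, as claimed.

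The only subtlety — and the main thing to get right rather than a genuine obstacle — is ensuring that the deviation $b^*_i(\mathbf{v})$ is admissible as a deviation in the CCE definition (it is, since Definition~\ref{def:ccne} quantifies over all $b_i' \in \mathcal{B}_i$ and $b^*_i(\mathbf{v}) \in \mathcal{B}_i$), and that the smoothness inequality can be applied inside the expectation for each realization of $\mathbf{b}$ in the support, which requires exactly that the support of the CCE lies in $\mathcal{B}'$ — this is precisely the hypothesis $\mathbf{b}\in\Delta(\mathcal{B}')$. I would also note that $u_i(b^*_i(\mathbf{v}),\mathbf{b}_{-i},v_i)$ is well-defined for each realization because the deviation only changes player $i$'s bid. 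Since this is stated as being "based on \cite{R09}, \cite{RST17}" and proved in Appendix~\ref{app:extens_thm_proof}, I would keep the argument at this level of detail and refer to the appendix for the fully spelled-out version, including the analogous Bayesian extension.
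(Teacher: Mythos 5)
Your proof is correct and follows essentially the same route as the paper's own argument in Appendix~\ref{app:extens_thm_proof}: bound the sum of utilities by welfare via quasi-linearity and non-negative payments, apply the CCE inequality with the fixed deviation $b^*_i(\mathbf{v})$, invoke smoothness pointwise on the support, and rearrange. Your explicit remark that $b^*_i(\mathbf{v})$ must not depend on the realized bid profile is a worthwhile clarification of a point the paper's write-up leaves implicit, but it does not change the argument.
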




\begin{theorem}
\label{thm:smooth_poa_incopm_i}
\textbf{(based on \cite{R12}, \cite{ST13})}
If an auction is $(\lambda,\mu)-$smooth with respect to a bid space $\mathcal{B}' \subseteq \mathcal{B}_1 \times \ldots \times \mathcal{B}_n$, then 
for 
every product distribution $\mathcal{F}$, 
every mixed Bayes Nash equilibrium, 
$\sigma : \mathcal{V}_1 \times \ldots \times \mathcal{V}_n \rightarrow \Delta(\mathcal{B}')$ has expected social welfare at least $\frac{\lambda}{1+\mu}$ of the expected optimal social welfare.
\end{theorem}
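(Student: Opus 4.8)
The plan is to adapt the standard smoothness-extension argument of Roughgarden and Syrgkanis--Tardos to our auction setting. Fix a product distribution $\mathcal{F} = \mathcal{F}_1 \times \cdots \times \mathcal{F}_n$ and a mixed Bayes Nash equilibrium $\sigma$. For each player $i$ and each realized type $v_i$, I want to bound the expected utility of $i$ from below using a suitable deviation. The key move is to use, as the deviating bid for player $i$ when her type is $v_i$, a \emph{resampled} version of the smoothness witness: draw a fresh valuation profile $\mathbf{w} = (w_1, \ldots, w_n)$ from $\mathcal{F}$ independently of everything else, and have player $i$ deviate to $b_i^*(w_i, \mathbf{w}_{-i})$ — i.e., the bid the smoothness definition guarantees against the \emph{current} bid profile $\mathbf{b}_{-i} = \sigma_{-i}(\mathbf{v}_{-i})$ but computed for the hypothetical valuation profile $(w_i, \mathbf{v}_{-i})$. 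Because valuations are independent, $\mathbf{v}_{-i}$ and $w_i$ are independent, so the joint distribution of $(w_i, \mathbf{v}_{-i})$ is the same as that of $(v_i, \mathbf{v}_{-i})$, which is what will let the $OPT$ terms add up correctly after we sum over $i$ and take expectations.

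The steps, in order: (1) Write down the BNE inequality for player $i$ at type $v_i$ against the deviation $b_i^*(w_i, \mathbf{v}_{-i})$, giving $\mathbb{E}[u_i(\sigma(\mathbf{v}), v_i)] \ge \mathbb{E}[u_i(b_i^*(w_i, \mathbf{v}_{-i}), \sigma_{-i}(\mathbf{v}_{-i}), v_i)]$, where the right-hand expectation is over $\mathbf{v}_{-i}$, over the internal randomness of $\sigma$, and over the fresh draw $w_i \sim \mathcal{F}_i$. (2) Sum this over all $i \in [n]$ and take the expectation over $v_i \sim \mathcal{F}_i$ as well; the left side becomes $\mathbb{E}_{\mathbf{v}, \sigma}[\sum_i u_i(\sigma(\mathbf{v}), v_i)]$. (3) On the right side, relabel so that we are looking at $\mathbb{E}_{\mathbf{w}, \mathbf{v}, \sigma}[\sum_i u_i(b_i^*(w_i, \mathbf{v}_{-i}), \sigma_{-i}(\mathbf{v}_{-i}), v_i)]$ — but here I must be careful about \emph{whose} valuation appears in the utility. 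The subtlety is that $u_i$ uses player $i$'s true value $v_i$, whereas the smoothness witness $b_i^*$ is defined for the profile $(w_i, \mathbf{v}_{-i})$. I resolve this exactly as in \cite{ST13}: I instead consider the deviation keyed to the hypothetical profile $\mathbf{w}$ entirely, i.e. player $i$ pretends the whole world is $\mathbf{w}$, deviates to $b_i^*(\mathbf{w})$, and I compare $u_i(b_i^*(\mathbf{w}), \mathbf{b}_{-i}, v_i) \ge u_i(b_i^*(\mathbf{w}), \mathbf{b}_{-i}, w_i)$ only if valuations are monotone in the right way — which is generally false, so the cleaner route is to keep the utility in terms of $w_i$ by swapping the roles of $\mathbf{v}$ and $\mathbf{w}$ in the BNE application, using independence to argue the two profiles are exchangeable. (4) Once the right side is of the form $\mathbb{E}_{\mathbf{v}}\big[\mathbb{E}_{\mathbf{w}}[\sum_i u_i(b_i^*(\mathbf{w}), \sigma_{-i}(\mathbf{v}_{-i}), w_i)]\big]$, apply the $(\lambda,\mu)$-smoothness inequality \eqref{eqn:smooth} pointwise (for each fixed realization of $\mathbf{w}$ and of $\mathbf{b}_{-i} = \sigma_{-i}(\mathbf{v}_{-i})$), obtaining $\ge \lambda \cdot OPT(\mathbf{w}) - \mu \cdot SW(\sigma(\mathbf{v}), \mathbf{v})$; note the $SW$ term here should come out in terms of the actual equilibrium bids $\sigma(\mathbf{v})$ and actual valuations $\mathbf{v}$ after taking expectations, again by exchangeability of $\mathbf{v}$ and $\mathbf{w}$. (5) Taking expectations, let $W := \mathbb{E}_{\mathbf{v},\sigma}[SW(\sigma(\mathbf{v}),\mathbf{v})]$ and $O := \mathbb{E}_{\mathbf{v}}[OPT(\mathbf{v})]$; combining (2)–(4) with the elementary bound $\sum_i u_i \le SW$ (payments are non-negative) gives $W \ge \lambda O - \mu W$, hence $W \ge \frac{\lambda}{1+\mu} O$, as claimed.

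The main obstacle is step (3)/(4): correctly matching up which valuation profile the smoothness witness is computed against versus which valuation enters each $u_i$, so that after summing over players the $OPT$ term is $\lambda \, OPT(\mathbf{w})$ for the \emph{same} profile $\mathbf{w}$ used in all $n$ deviations, while the $SW$ penalty term is controlled by the genuine equilibrium welfare. This is precisely the place where the product-distribution hypothesis is essential — it is what makes $(w_i, \mathbf{v}_{-i})$ distributionally identical to a genuine profile and lets the ``swap'' arguments go through — and it is why the statement is restricted to product distributions rather than arbitrary correlated ones. Since the theorem is explicitly attributed to \cite{R12, ST13} and the excerpt says the proof is deferred to Appendix \ref{app:extens_thm_proof} ``for completeness,'' I would present the above as a streamlined recapitulation of that known argument, being careful only to check that nothing in our auction model (simultaneous item bidding, quasi-linear utilities, non-negative payments) breaks any step — and indeed nothing does, since the only auction-specific facts used are $\sum_i u_i \le SW$ and the existence of the smoothness witnesses, both of which we have.
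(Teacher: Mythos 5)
Your overall strategy is the same as the paper's: resample an independent valuation profile $\mathbf{w} \sim \mathcal{F}$, deviate to a smoothness witness built from a profile mixing $\mathbf{v}$ and $\mathbf{w}$, and then exploit independence to rename coordinates so that the summed deviation utilities take the form $\mathbb{E}_{\mathbf{v}}\,\mathbb{E}_{\mathbf{w}}\left[\sum_{i} u_i(b_i^*(\mathbf{w}), \sigma_{-i}(\mathbf{v}_{-i}), w_i)\right]$, to which the smoothness inequality applies pointwise; finishing with $\sum_i u_i \le SW$ and rearranging is also identical to the proof in Appendix~\ref{app:extens_thm_proof}. The target expression you reach in step (4) is exactly the one the paper derives.

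However, the deviation you actually commit to in step (1), $b_i^*(w_i, \mathbf{v}_{-i})$, is the wrong one, in two respects. First, it is not a legal deviation in the Bayes--Nash inequality: it depends on $\mathbf{v}_{-i}$, which player $i$ does not observe (a deviation may depend only on $v_i$ and on independent internal randomness). Second, even formally, renaming $v_i \leftrightarrow w_i$ in $\mathbb{E}\left[u_i(b_i^*(w_i,\mathbf{v}_{-i}), \sigma_{-i}(\mathbf{v}_{-i}), v_i)\right]$ does not produce $u_i(b_i^*(\mathbf{w}), \cdot\,, w_i)$. The correct choice --- and the paper's --- is $b_i^*(v_i, \mathbf{w}_{-i})$: player $i$ keeps her own true type and resamples only the opponents' types, so the deviation is implementable, and the swap $v_i \leftrightarrow w_i$ (valid because both are independent draws from $\mathcal{F}_i$, independent of $\mathbf{v}_{-i}$ and $\mathbf{w}_{-i}$, which is exactly where the product-distribution hypothesis enters) turns $u_i(b_i^*(v_i,\mathbf{w}_{-i}), \sigma_{-i}(\mathbf{v}_{-i}), v_i)$ into $u_i(b_i^*(\mathbf{w}), \sigma_{-i}(\mathbf{v}_{-i}), w_i)$ while leaving the opponents' bids $\sigma_{-i}(\mathbf{v}_{-i})$ untouched. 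Your steps (3)--(4) circle around this and then assert the correct final form without pinning down the deviation that produces it, so step (1) needs to be fixed accordingly. One further point, which you flag and which the paper's own proof glosses over equally: applying smoothness at valuation profile $\mathbf{w}$ against bid profile $\sigma(\mathbf{v})$ literally yields the penalty $\mu\, SW(\sigma(\mathbf{v}), \mathbf{w})$ rather than $\mu\, SW(\sigma(\mathbf{v}), \mathbf{v})$, and exchangeability alone does not identify the two expectations (in the latter the allocation and the valuations are correlated through $\mathbf{v}$). The paper simply writes $SW(\sigma(\mathbf{v}),\mathbf{v})$ at this step, so your treatment is no weaker than the source, but neither text resolves this point explicitly.
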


A standard assumption in essentially all previous work on the PoA of simultaneous second price item auction (e.g., \cite{CKS16}, \cite{FFGL13}, \cite{R12}, \cite{R09}, \cite{BR11}) is {\em no overbidding}, meaning that players do not overbid on items they win. 
Formally,

\begin{definition} [\textbf{No overbidding (NOB)}]
\label{def:nob_complete}
Given a valuation profile $\mathbf{v} \in \mathcal{V}_1 \times \ldots \times \mathcal{V}_n$, a bid profile $\mathbf{b} \in \mathcal{B}$ is said to satisfy NOB if for every player $i$ the following holds,
$$\sum_{j \in S_i(\mathbf{b})} b_{ij} \le v_i(S_i(\mathbf{b}))$$
\end{definition}

\begin{theorem} \textbf{(based on \cite{CKS16} and \cite{R09})}:
\label{thm:xos_smooth}
S2PA with XOS valuations is $(1,1)-$smooth, with respect to bid profiles satisfying NOB.   
\end{theorem}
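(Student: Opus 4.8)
The plan is to run the classical XOS‑deviation argument of \citet{ChristodoulouKS08} and \citet{R09}, with one extra double‑counting step at the end that is exactly where NOB is consumed. Fix a valuation profile $\mathbf{v}$ and a bid profile $\mathbf{b}$ satisfying NOB, and let $\mathbf{S}^*(\mathbf{v})=(S_1^*(\mathbf{v}),\dots,S_n^*(\mathbf{v}))$ be a welfare‑maximizing (hence partition) allocation. Since each $v_i$ is XOS, I would pick, for every player $i$, an additive valuation $a^i=(a^i_j)_{j\in[m]}$ supporting $v_i$ at $S^*_i(\mathbf{v})$, i.e.\ $v_i(S^*_i(\mathbf{v}))=\sum_{j\in S^*_i(\mathbf{v})}a^i_j$ and $v_i(T)\ge\sum_{j\in T}a^i_j$ for every $T\subseteq[m]$. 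Define the deviating bid $b^*_i(\mathbf{v})$ by $b^*_{ij}=a^i_j$ for $j\in S^*_i(\mathbf{v})$ and $b^*_{ij}=0$ otherwise; note it depends only on $\mathbf{v}$, as the smoothness definition requires.

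Next I would lower‑bound each player's deviation utility. If player $i$ deviates to $b^*_i(\mathbf{v})$ against $\mathbf{b}_{-i}$, let $W_i\subseteq S^*_i(\mathbf{v})$ be the items of $S^*_i(\mathbf{v})$ she wins; for $j\in W_i$ she pays $\max_{k\ne i}b_{kj}$, while for $j\in S^*_i(\mathbf{v})\setminus W_i$ not winning forces $\max_{k\ne i}b_{kj}\ge a^i_j$ (under any tie‑breaking). Any extra items she wins outside $S^*_i(\mathbf{v})$ are won at price $0$ and only raise her value by monotonicity, so using the support property,
\[
u_i(b^*_i(\mathbf{v}),\mathbf{b}_{-i},v_i)\ \ge\ v_i(W_i)-\sum_{j\in W_i}\max_{k\ne i}b_{kj}\ \ge\ \sum_{j\in W_i}a^i_j-\sum_{j\in W_i}\max_{k\ne i}b_{kj}.
\]
Since $\sum_{j\in W_i}a^i_j=v_i(S^*_i(\mathbf{v}))-\sum_{j\in S^*_i(\mathbf{v})\setminus W_i}a^i_j\ge v_i(S^*_i(\mathbf{v}))-\sum_{j\in S^*_i(\mathbf{v})\setminus W_i}\max_{k\ne i}b_{kj}$, the two payment sums merge and I obtain the clean bound
\[
u_i(b^*_i(\mathbf{v}),\mathbf{b}_{-i},v_i)\ \ge\ v_i(S^*_i(\mathbf{v}))-\sum_{j\in S^*_i(\mathbf{v})}\max_{k\ne i}b_{kj}.
\]

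Summing over $i$ gives $\sum_i u_i(b^*_i(\mathbf{v}),\mathbf{b}_{-i},v_i)\ge OPT(\mathbf{v})-\sum_i\sum_{j\in S^*_i(\mathbf{v})}\max_{k\ne i}b_{kj}$, so it remains to show the loss term is at most $SW(\mathbf{b},\mathbf{v})$. Because $\mathbf{S}^*(\mathbf{v})$ is a partition, rewrite the loss as $\sum_{j\in[m]}\max_{k\ne i^*(j)}b_{kj}$, where $i^*(j)$ is the owner of $j$ under $\mathbf{S}^*(\mathbf{v})$. For each item $j$, if its $\mathbf{b}$‑winner $w(j)$ differs from $i^*(j)$ the maximum is attained at $w(j)$, and if $w(j)=i^*(j)$ it is the second‑highest bid; in either case $\max_{k\ne i^*(j)}b_{kj}\le b_{w(j),j}$. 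Hence
\[
\sum_{j\in[m]}\max_{k\ne i^*(j)}b_{kj}\ \le\ \sum_{j\in[m]}b_{w(j),j}\ =\ \sum_{i\in[n]}\sum_{j\in S_i(\mathbf{b})}b_{ij}\ \le\ \sum_{i\in[n]}v_i(S_i(\mathbf{b}))\ =\ SW(\mathbf{b},\mathbf{v}),
\]
the last inequality being exactly NOB. Combining, $\sum_i u_i(b^*_i(\mathbf{v}),\mathbf{b}_{-i},v_i)\ge OPT(\mathbf{v})-SW(\mathbf{b},\mathbf{v})$, i.e.\ S2PA with XOS valuations is $(1,1)$‑smooth with respect to NOB bid profiles.

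The only delicate points I foresee are bookkeeping issues rather than conceptual ones: correctly handling the items of $S^*_i(\mathbf{v})$ that player $i$ \emph{loses} under the deviation (resolved by the XOS support inequality $v_i(W_i)\ge\sum_{j\in W_i}a^i_j$ together with $\max_{k\ne i}b_{kj}\ge a^i_j$ on the lost items) and the tie‑sensitive observation $\max_{k\ne i^*(j)}b_{kj}\le b_{w(j),j}$ that lets NOB enter. Neither step is hard, but both must be phrased carefully so that arbitrary consistent tie‑breaking never breaks an inequality.
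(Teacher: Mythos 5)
Your proof is correct and follows essentially the same route the paper takes (the theorem itself is cited from \citet{CKS16} and \citet{R09}, but the paper's proof of the $\alpha$-SM generalization in Theorem \ref{thm:alpha_sm_smooth} is the same deviation argument: bid the supporting additive valuation on the optimal bundle, bound the lost items by the opposing bids, and invoke NOB to charge the loss term to $SW(\mathbf{b},\mathbf{v})$). Your step $\max_{k\ne i^*(j)}b_{kj}\le b_{w(j),j}$ is a marginally sharper accounting than the paper's $\max_k b_{kj}=b_{w(j),j}$, but it lands in the same place and consumes NOB identically.
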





Theorem \ref{thm:xos_smooth} implies a lower bound of $\frac{1}{2}$ on the Bayesian PoA of S2PA with XOS valuations.
This result is tight, even with respect to unit-demand valuations in full information settings \cite{CKS16}.
We now extend the last result to $\alpha$-SM valuations.

\begin{theorem} 
\label{thm:alpha_sm_smooth}
S2PA with $\alpha-$SM valuations is $(\alpha,1)-$smooth, with respect to bid profiles satisfying NOB.   
\end{theorem}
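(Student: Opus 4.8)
The plan is to mimic the standard smoothness proof for S2PA with XOS valuations (Theorem~\ref{thm:xos_smooth}), but to track the loss incurred by the $\alpha$-submodular relaxation. Fix a valuation profile $\mathbf{v}$ and a bid profile $\mathbf{b}$ satisfying NOB. Let $S^*(\mathbf{v})$ be a welfare-maximizing allocation. For each player $i$ I would construct a deviating bid $b^*_i(\mathbf{v})$ as follows: player $i$ "targets" the items in $S^*_i(\mathbf{v})$, but only those she does not already win under $\mathbf{b}$, i.e.\ the set $T_i := S^*_i(\mathbf{v}) \setminus S_i(\mathbf{b})$. On each item $j \in T_i$ she bids her additive surrogate weight $a^{\ell}_j$, where $\ell$ is an additive valuation (from Lemma~\ref{lem:alpha_sm_xos}) achieving $v_i(S^*_i(\mathbf{v})) = a^{\ell}(S^*_i(\mathbf{v}))$; on all other items she bids $0$.

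The next step is to lower-bound $u_i(b^*_i(\mathbf{v}), \mathbf{b}_{-i}, v_i)$. When player $i$ deviates to $b^*_i$, she wins every item $j \in T_i$ on which her bid $a^{\ell}_j$ exceeds the competing bid $\max_{k\ne i} b_{kj} = p_j(\mathbf{b})$, paying $p_j(\mathbf{b})$ for it; for the remaining items in $T_i$ we have $a^{\ell}_j \le p_j(\mathbf{b})$. In either case, her value gained minus price paid on item $j$ is at least $a^{\ell}_j - p_j(\mathbf{b})$ (using that her marginal value for any set she wins is at least the sum of the corresponding $a^\ell_j$'s, since $a^\ell$ is additive and $v_i \ge a^\ell$ pointwise by Lemma~\ref{lem:alpha_sm_xos}; one must be a little careful to argue $v_i(\text{won set}) \ge \sum_{j \text{ won}} a^\ell_j$, which follows from $v_i(S) \ge a^\ell(S)$ for all $S$). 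Summing over $j \in T_i$ gives
\[
u_i(b^*_i(\mathbf{v}), \mathbf{b}_{-i}, v_i) \;\ge\; \sum_{j \in T_i} a^{\ell}_j - \sum_{j \in T_i} p_j(\mathbf{b}) \;\ge\; a^{\ell}(S^*_i(\mathbf{v})) - v_i(S_i(\mathbf{b})) - \sum_{j \in T_i} p_j(\mathbf{b}),
\]
where I used $a^\ell(S^*_i(\mathbf v)) = a^\ell(T_i) + a^\ell(S^*_i(\mathbf v)\cap S_i(\mathbf b)) \le a^\ell(T_i) + v_i(S_i(\mathbf b))$ — actually more simply $a^\ell(T_i) \ge a^\ell(S^*_i(\mathbf v)) - a^\ell(S_i(\mathbf b)\cap S^*_i(\mathbf v)) \ge v_i(S^*_i(\mathbf v)) - v_i(S_i(\mathbf b))$. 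Now apply Lemma~\ref{lem:alpha_sm_xos}: $v_i(S^*_i(\mathbf v)) \ge \alpha \cdot a^{\ell'}(S^*_i(\mathbf v))$ is the wrong direction; instead I use that the chosen $\ell$ satisfies $v_i(S^*_i(\mathbf v)) = a^\ell(S^*_i(\mathbf v))$ exactly, so no $\alpha$ loss enters here — the $\alpha$ must instead be harvested where we relate $\sum_{j\in T_i} a^\ell_j$ back to a clean quantity. Re-examining: since $a^\ell_j = v_i(j \mid S^\ell_j)$ and $v_i$ is $\alpha$-SM, $a^\ell_j \ge \alpha\, v_i(j \mid T)$ for the relevant larger $T$; the honest route is to bound $a^\ell(S^*_i(\mathbf v)\setminus S_i(\mathbf b))$ using $\alpha$-submodularity of the original $v_i$ via Lemma~\ref{lem:alpha_sm_prop}, obtaining a term $\alpha\, v_i(S^*_i(\mathbf v) \mid S_i(\mathbf b)) \ge \alpha(v_i(S^*_i(\mathbf v)) - v_i(S_i(\mathbf b)))$.

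Finally I would sum over all players $i$: the value terms give $\alpha \sum_i v_i(S^*_i(\mathbf v)) - \alpha\sum_i v_i(S_i(\mathbf b)) = \alpha\,\OPT(\mathbf v) - \alpha\, SW(\mathbf b,\mathbf v)$, and dropping $\alpha \le 1$ on the subtracted term (keeping coefficient $1$) absorbs it into the $-\mu\, SW$ slot with $\mu = 1$; the price terms $\sum_i \sum_{j \in T_i} p_j(\mathbf b)$ are bounded above by $\sum_i \sum_{j\in S_i(\mathbf b)} p_j(\mathbf b) \le SW(\mathbf b,\mathbf v)$ using NOB (each player's payment is at most her value), contributing another unit of $SW$ — so one must be careful whether the final $\mu$ is $1$ or $2$. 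The claimed statement is $\mu = 1$, which matches the $\alpha=1$ case of Theorem~\ref{thm:xos_smooth}, so the accounting should combine the "$-\alpha\,SW$ from values" and "$+SW$ from prices" more cleverly, presumably by using Lemma~\ref{lem:s2pa_rev_bids}-style bookkeeping so that the prices of items $i$ already holds are credited back. The main obstacle is exactly this bookkeeping: getting the price terms and the "already-won" value terms to telescope so that only a single $SW(\mathbf b,\mathbf v)$ appears on the right-hand side, rather than two; I expect this to follow the same manipulation used in the XOS proof of \citet{CKS16}/\citet{R09}, with $\alpha$ entering only the $\OPT$ coefficient through Lemma~\ref{lem:alpha_sm_prop} and Lemma~\ref{lem:alpha_sm_xos}.
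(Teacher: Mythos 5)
There is a genuine gap here, and it is the one you yourself flag at the end: your accounting produces $\mu=2$, not $\mu=1$, and the ``cleverer bookkeeping'' you hope will rescue it is not a detail but the actual content of the proof. Two specific problems. First, you assert that $v_i \ge a^{\ell}$ pointwise ``by Lemma~\ref{lem:alpha_sm_xos}''; the lemma only gives $v_i(S) \ge \alpha\cdot a^{\ell}(S)$, with equality for the one distinguished set. So for $\alpha<1$ you cannot lower-bound the value of the set won under the deviation by $\sum_j a^{\ell}_j$, nor upper-bound $a^{\ell}(S^*_i(\mathbf v)\cap S_i(\mathbf b))$ by $v_i(S_i(\mathbf b))$ (only by $\tfrac{1}{\alpha}v_i(S_i(\mathbf b))$); your subsequent attempt to harvest the $\alpha$ via Lemma~\ref{lem:alpha_sm_prop} also does not type-check, since the $a^{\ell}_j$ are marginals with respect to permutation prefixes of $S^*_i(\mathbf v)$, not with respect to $S_i(\mathbf b)$. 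Second, restricting the deviation to $T_i=S^*_i(\mathbf v)\setminus S_i(\mathbf b)$ forces you to subtract $v_i(S_i(\mathbf b))$ on the value side \emph{and} pay a price term bounded by $SW(\mathbf b,\mathbf v)$ via NOB; these two charges do not telescope, and that is exactly why you end up one $SW$ short.

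The paper avoids both issues by changing the deviation and the place where $\alpha$ is paid. Player $i$ bids $a^*_{ij}$ on \emph{all} of $S^*_i(\mathbf v)$ (not just $T_i$), where $a^*_i$ is the additive valuation with $v_i(S^*_i(\mathbf v))=a^*_i(S^*_i(\mathbf v))$. If $W_i$ is the set she then wins, her value satisfies $v_i(W_i)\ge \alpha\, a^*_i(W_i)$ by Lemma~\ref{lem:alpha_sm_xos}, so
$u_i(b^*_i,\mathbf b_{-i},v_i)\ge \sum_{j\in W_i}\bigl(\alpha a^*_{ij}-\max_{k\ne i}b_{kj}\bigr)\ge \sum_{j\in S^*_i(\mathbf v)}\bigl(\alpha a^*_{ij}-\max_{k\ne i}b_{kj}\bigr)$,
since for lost items $\alpha a^*_{ij}\le a^*_{ij}\le \max_{k\ne i}b_{kj}$. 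Summing over $i$, the value terms give exactly $\alpha\sum_i a^*_i(S^*_i(\mathbf v))=\alpha\, OPT(\mathbf v)$ with no subtracted welfare, and the price terms are bounded once and for all by $\sum_{j\in[m]}\max_k b_{kj}=\sum_i\sum_{j\in S_i(\mathbf b)}b_{ij}\le SW(\mathbf b,\mathbf v)$ using NOB. The single $-SW$ thus comes entirely from the price side; no Lemma~\ref{lem:s2pa_rev_bids}-style crediting is needed. Your proposal has the right ingredients but does not close, and as written would only establish $(\alpha,2)$-smoothness.
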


\begin{proof}
Let $\mathbf{v} \in \mathcal{V}_1 \times \ldots \times \mathcal{V}_n$ be 
an $\alpha-$SM valuation profile and let $\mathbf{b}$ be a PNE satisfying NOB. 
From Lemma \ref{lem:alpha_sm_xos}, for every valuation $v_i$
there exists a set $\{a^{\ell}_i(\cdot)\}$, such that for every set 
$S \subseteq [m]$, $v_i(S) \ge \alpha \cdot max_{\ell} \left[ a^{\ell}_i(S) \right]$ and there exists $\ell$ such that $v_i(S) = a^{\ell}_i(S)$.
Let $S^*(\mathbf{v}) = (S_1^*(\mathbf{v}), \ldots, S_n^*(\mathbf{v}))$ be a welfare maximizing allocation, and let $a^*_i$ be an additive valuation such that $v_i(S_i^*(\mathbf{v})) = a^{*}_i(S_i^*(\mathbf{v}))$.  
Consider the following hypothetical deviation for player $i$: $b^*_{ij} = a^*_{ij}$ if $j \in S_i^*(\mathbf{v})$, and $b^*_{ij} = 0$ otherwise.

Now let us consider the utility of player $i$ when deviating. As $b^*_{ij} = 0$ for every item $j \notin S_i^*(\mathbf{v})$, each such item contributes non-negative utility to $i$ and we can ignore this contribution while lower bounding $i$'s utility under $b^*_i$.
Consider item $j \in S_i^*(\mathbf{v})$. 
If $a^*_{ij} \ge \max_{k \ne i} b_{kj}$, player $i$ wins item $j$.
Otherwise, $i$ does not win item $j$, and the term $a^*_{ij} - \max_{k \ne i} b_{kj}$ is non-positive. 
Since $v_i(S) \ge \alpha \cdot a^*_i(S)$ for every set $S$, and since $\alpha \le 1$, we get:
\begin{eqnarray*}
\sum_{i \in [n]}
u_i(b^*_i(\mathbf{v}), \mathbf{b}_{-i},v_i) 
& \ge &
\sum_{i \in [n]}
\left [
\sum_{j \in S_i^*(\mathbf{v}) \cap S_i(\mathbf{b})}
\left(
\alpha \cdot a^*_{ij} - \max_{k \ne i} b_{kj}
\right) + 
\sum_{j \in S_i^*(\mathbf{v}) \backslash S_i(\mathbf{b})}
\left(
a^*_{ij} - \max_{k \ne i} b_{kj}
\right) 
\right] \\
& \ge &
\sum_{i \in [n]}
\sum_{j \in S_i^*(\mathbf{v})}
\left(
\alpha \cdot a^*_{ij} - \max_{k \ne i} b_{kj}
\right) \\
& \ge &
\alpha ~ \sum_{i \in [n]} v_i(S_i^*(\mathbf{v})) - 
\sum_{i \in [n]}
\sum_{j \in S_i^*(\mathbf{v})} 
\max_{k} b_{kj} \\
& \ge &
\alpha \cdot OPT(\mathbf{v}) - 
\sum_{i \in [n]}
\sum_{j \in S_i(\mathbf{b})} 
\max_{k} b_{kj} \\
& = &
\alpha \cdot OPT(\mathbf{v}) - 
\sum_{i \in [n]}
\sum_{j \in S_i(\mathbf{b})} b_{kj} \\
& \ge &
\alpha \cdot OPT(\mathbf{v}) - 
\sum_{i \in [n]} v_i(S_i(\mathbf{b})) \\
& = &
\alpha \cdot OPT(\mathbf{v}) - 
SW(\mathbf{b}, \mathbf{v}).
\end{eqnarray*}

The third inequality follows from the choice of $a^*_i$ and by the payment structure of 2nd price. 
The forth inequality follows by the fact that all items are allocated in equilibrium. Finally, the last inequality follows from NOB. 


\end{proof}

\section{Revenue Guaranteed Auctions}
\label{sec:rev_grnt}

Following is the definition of revenue guaranteed auctions. We then discuss the implications of this property in both full information and incomplete information settings.
\begin{definition} [\textbf{Revenue guaranteed auction}]
\label{def:rev_guarant}
An auction is $(\gamma,\delta)-$revenue guaranteed  for some $0 \le \gamma \le \delta \le 1$ with respect to a bid space $\mathcal{B}' \subseteq \mathcal{B}_1 \times \ldots \times \mathcal{B}_n$, if for any valuation profile $\mathbf{v} \in \mathcal{V}_1 \times \ldots \times \mathcal{V}_n$ and for any bid profile $\mathbf{b} \in \mathcal{B}'$
the revenue of the auction is at least $\gamma \cdot OPT(\mathbf{v}) - \delta \cdot  SW(\mathbf{b}, \mathbf{v})$.
\end{definition}

\subsection{Full Information: Revenue Guaranteed Auctions}
\label{subsec:full_info_rev_grnt}


The following theorem establishes welfare guarantees on every pure bid profile of a $(\gamma,\delta)-$revenue guaranteed auction in which the sum of player utilities is non-negative.

\begin{theorem}
	\label{thm:rev_guarant_poa_copm_i-pure}
	If an auction is $(\gamma,\delta)-$revenue guaranteed with respect to a bid space $\mathcal{B}' \subseteq \mathcal{B}_1 \times \cdots \times \mathcal{B}_n$,
	then for any pure bid profile $\mathbf{b} \in \mathcal{B}'$, 
	in which the sum of player utilities is non-negative, 
	the social welfare is at least $\frac{\gamma}{1+\delta}$ of the optimal social welfare.
\end{theorem}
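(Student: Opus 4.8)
The plan is to exploit the basic accounting identity that, in any auction with quasi-linear utilities, social welfare decomposes into revenue plus the sum of player utilities. Concretely, for a fixed valuation profile $\mathbf{v}$ and a pure bid profile $\mathbf{b}$,
\[
\sum_{i \in [n]} u_i(\mathbf{b}, v_i) = \sum_{i \in [n]} \bigl( v_i(S_i(\mathbf{b})) - P_i(\mathbf{b}) \bigr) = SW(\mathbf{b}, \mathbf{v}) - \mathcal{R}(\mathbf{b}),
\]
so that $\mathcal{R}(\mathbf{b}) = SW(\mathbf{b}, \mathbf{v}) - \sum_{i} u_i(\mathbf{b}, v_i)$. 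This is the only structural fact about the auction that is needed; everything else follows from the two hypotheses.

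Next I would invoke the assumption that the sum of player utilities is non-negative, i.e. $\sum_i u_i(\mathbf{b}, v_i) \ge 0$. Plugging this into the identity gives the one-sided bound $\mathcal{R}(\mathbf{b}) \le SW(\mathbf{b}, \mathbf{v})$: revenue cannot exceed welfare when no player is, in aggregate, losing money. On the other side, the $(\gamma,\delta)$-revenue-guaranteed property applied to this $\mathbf{v}$ and $\mathbf{b} \in \mathcal{B}'$ yields $\mathcal{R}(\mathbf{b}) \ge \gamma \cdot OPT(\mathbf{v}) - \delta \cdot SW(\mathbf{b}, \mathbf{v})$. Chaining the two inequalities,
\[
SW(\mathbf{b}, \mathbf{v}) \ \ge\ \mathcal{R}(\mathbf{b}) \ \ge\ \gamma \cdot OPT(\mathbf{v}) - \delta \cdot SW(\mathbf{b}, \mathbf{v}),
\]
and rearranging gives $(1+\delta) \cdot SW(\mathbf{b}, \mathbf{v}) \ge \gamma \cdot OPT(\mathbf{v})$, hence $SW(\mathbf{b}, \mathbf{v}) \ge \frac{\gamma}{1+\delta} \cdot OPT(\mathbf{v})$, as claimed.

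There is essentially no hard step here: the argument is a single application of the welfare/revenue/utility identity combined with the definition of revenue-guaranteed. The only point that deserves a word of care is the role of the non-negative-utility hypothesis — it is exactly what converts the revenue lower bound from the definition into a welfare lower bound, and it is why the statement is phrased for such bid profiles rather than all of $\mathcal{B}'$. (It is also worth noting, though not needed for the inequality, that the constraint $0 \le \gamma \le \delta \le 1$ in Definition \ref{def:rev_guarant} keeps the resulting ratio $\frac{\gamma}{1+\delta}$ in $[0,1)$, consistent with it being a PoA bound.)
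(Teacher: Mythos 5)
Your proposal is correct and is essentially identical to the paper's own proof: both use the quasi-linear identity $\sum_i u_i(\mathbf{b},v_i) = SW(\mathbf{b},\mathbf{v}) - \sum_i P_i(\mathbf{b})$ together with the non-negativity of the utility sum to bound revenue above by welfare, then chain this with the revenue-guaranteed lower bound and rearrange. No gaps.
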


\begin{proof}
	Using quasi-linear utilities and non-negative sum of player utilities, we get:
	$$0 \le  
	\sum_{i \in [n]}
	u_i(\mathbf{b}, v_i)
	 = 
	\sum_{i \in [n]}
	v_i(S_i(\mathbf{b}))
	 - 
	\sum_{i \in [n]}
	P_i(\mathbf{b})
	 = 
	SW(\mathbf{b}, \mathbf{v})
	 - 
	\sum_{i \in [n]}
	P_i(\mathbf{b})
	$$
	By the $(\gamma,\delta)-$revenue guaranteed property, 
	$$\sum_{i \in [n]} P_i(b) \ge 
	\gamma OPT(\mathbf{v}) - \delta SW(b, \mathbf{v}).$$
	
	Punting it all together, we get
\begin{equation}
	0 
	\ \ \le \ \   
	SW(\mathbf{b}, \mathbf{v})
	- 
	\sum_{i \in [n]}
	P_i(\mathbf{b})
	\ \ \le \ \  
	(1+\delta)SW(\mathbf{b}, \mathbf{v})
	- \gamma OPT(\mathbf{v})
	\label{eqn:rev_guarant_poa_copm_i_2}
\end{equation}
	Rearranging, we get: $SW(\mathbf{b}, \mathbf{v}) \ge \frac{\gamma}{1+\delta} OPT(\mathbf{v})$, as required.
\end{proof}

Definition \ref{def:rev_guarant} considers pure bid profiles, but Theorem~\ref{thm:rev_guarant_poa_copm_i-pure} applies to the more general setting of randomized bid profiles, possibly correlated, as cast in the following extension theorem.

\begin{theorem}
\label{thm:rev_guarant_poa_copm_i}
If an auction is $(\gamma,\delta)-$revenue guaranteed with respect to a bid space $\mathcal{B}' \subseteq \mathcal{B}_1 \times \ldots \times \mathcal{B}_n$,
then for any bid profile $\mathbf{b} \in \Delta(\mathcal{B}')$, 
in which the sum of the expected utilities of the players is non-negative, 
the expected social welfare is at least $\frac{\gamma}{1+\delta}$ of the optimal social welfare.
\end{theorem}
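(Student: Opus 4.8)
The plan is to reduce the randomized (possibly correlated) case to the pure case already handled in Theorem~\ref{thm:rev_guarant_poa_copm_i-pure}, by taking expectations over the randomness of the bid profile. Write $\mathbf{b} \in \Delta(\mathcal{B}')$ for the given distribution over pure bid profiles. First I would observe that for each realization $\mathbf{b}' \in \mathcal{B}'$ in the support, the $(\gamma,\delta)$-revenue guaranteed property applies directly, giving $\sum_{i\in[n]} P_i(\mathbf{b}') \ge \gamma\, OPT(\mathbf{v}) - \delta\, SW(\mathbf{b}',\mathbf{v})$. Taking expectations over $\mathbf{b}' \sim \mathbf{b}$ and using linearity of expectation yields $\mathbb{E}_{\mathbf{b}}\big[\sum_i P_i(\mathbf{b})\big] \ge \gamma\, OPT(\mathbf{v}) - \delta\, \mathbb{E}_{\mathbf{b}}\big[SW(\mathbf{b},\mathbf{v})\big]$, since $OPT(\mathbf{v})$ does not depend on the realized bids.

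Next I would combine this with the quasi-linearity of utilities exactly as in the pure case: for each realization, $\sum_i u_i(\mathbf{b}',v_i) = SW(\mathbf{b}',\mathbf{v}) - \sum_i P_i(\mathbf{b}')$, so taking expectations gives $\mathbb{E}_{\mathbf{b}}\big[\sum_i u_i(\mathbf{b},v_i)\big] = \mathbb{E}_{\mathbf{b}}\big[SW(\mathbf{b},\mathbf{v})\big] - \mathbb{E}_{\mathbf{b}}\big[\sum_i P_i(\mathbf{b})\big]$. By hypothesis the left-hand side is non-negative. Substituting the revenue bound and rearranging exactly as in inequality~(\ref{eqn:rev_guarant_poa_copm_i_2}) — now with $SW$ and revenue replaced by their expectations — gives $0 \le (1+\delta)\,\mathbb{E}_{\mathbf{b}}\big[SW(\mathbf{b},\mathbf{v})\big] - \gamma\, OPT(\mathbf{v})$, hence $\mathbb{E}_{\mathbf{b}}\big[SW(\mathbf{b},\mathbf{v})\big] \ge \frac{\gamma}{1+\delta}\, OPT(\mathbf{v})$, as required.

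I do not anticipate a genuine obstacle here; the argument is essentially "take expectations and repeat the pure-case computation." The only point requiring a little care is that the non-negativity hypothesis is stated for the \emph{sum of the expected utilities} (i.e. $\mathbb{E}_{\mathbf{b}}[\sum_i u_i]$), not for each realization individually — so the argument must be carried out at the level of expectations throughout rather than realization-by-realization, which is exactly what linearity of expectation permits. One should also note that the revenue-guaranteed inequality holds pointwise on the support of $\mathbf{b}$ (each realization lies in $\mathcal{B}'$), which is what licenses taking the expectation of that inequality. No measurability or interchange-of-limits subtleties arise since everything is a finite sum and, in the settings considered, the bid space may be taken finite or the expectations are over well-defined distributions.
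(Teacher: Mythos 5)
Your proposal is correct and matches the paper's argument exactly: the paper also proves this by taking the pure-profile computation of Theorem~\ref{thm:rev_guarant_poa_copm_i-pure}, applying the revenue-guaranteed inequality pointwise on the support of $\mathbf{b}$, and adding an expectation over $\mathbf{b}$ to every term via linearity of expectation. Your remark that the non-negativity hypothesis is only on the sum of expected utilities, so the argument must be run at the level of expectations, is the one point of care and you handle it correctly.
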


The proof is identical to the proof of Theorem~\ref{thm:rev_guarant_poa_copm_i-pure}, except adding expectation over $\mathbf{b}$ to every term, using the fact the the auction is $(\gamma,\delta)-$revenue guaranteed for every $b$ in the support of $\mathbf{b}$, and using linearity of expectation.

%

Clearly, in every equilibrium (including CCE) the expected utility of every player is non-negative.    
It therefore follows that the expected welfare in any CCE is at least $\frac{\gamma}{1+\delta}$ of the optimal social welfare. 


For an auction that is both smooth and revenue guaranteed, we give a better bound on the price of anarchy:

\begin{theorem}
\label{thm:smooth_rev_guarant_poa_full_info}
If an auction is $(\lambda,\mu)-$smooth with respect to a bid space $\mathcal{B}'$ and $(\gamma,\delta)-$revenue guaranteed with respect to a bid space $\mathcal{B}"$, then the expected social welfare at any 
CCE $\in \Delta(\mathcal{B}' \cap \mathcal{B}")$ of the auction is at least $\frac{\lambda + \gamma}{1+\mu + \delta}$ of the optimal social welfare.
\end{theorem}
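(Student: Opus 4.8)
The plan is to combine the defining inequalities of the two properties at a coarse correlated equilibrium $\mathbf{b} \in \Delta(\mathcal{B}' \cap \mathcal{B}'')$, exactly in the spirit of the proof of Theorem~\ref{thm:rev_guarant_poa_copm_i-pure}, but now using the smoothness deviation to lower bound the sum of utilities rather than merely using $\sum_i u_i \ge 0$. Throughout I work with expected quantities over $\mathbf{b}$, writing $\OPT = \OPT(\mathbf{v})$ and $\Exp_{\mathbf{b}}[SW(\mathbf{b},\mathbf{v})]$ for the expected welfare, and invoke linearity of expectation together with the facts that the auction is $(\lambda,\mu)$-smooth for every $\mathbf{b}$ in the support and $(\gamma,\delta)$-revenue guaranteed for every $\mathbf{b}$ in the support.

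First I would use the CCE property. For each player $i$ let $b^*_i(\mathbf{v})$ be the smoothness deviation guaranteed by Definition~\ref{def:smooth}. Since $\mathbf{b}$ is a CCE, $\Exp_{\mathbf{b}}[u_i(\mathbf{b},v_i)] \ge \Exp_{\mathbf{b}}[u_i(b^*_i(\mathbf{v}),\mathbf{b}_{-i},v_i)]$ for every $i$; summing over $i$ and applying the $(\lambda,\mu)$-smoothness inequality inside the expectation gives
\begin{equation}
\sum_{i\in[n]} \Exp_{\mathbf{b}}[u_i(\mathbf{b},v_i)] \ \ge\ \lambda\,\OPT - \mu\,\Exp_{\mathbf{b}}[SW(\mathbf{b},\mathbf{v})].
\label{eqn:smooth_rev_1}
\end{equation}
Next, by quasi-linearity, $\sum_i \Exp_{\mathbf{b}}[u_i(\mathbf{b},v_i)] = \Exp_{\mathbf{b}}[SW(\mathbf{b},\mathbf{v})] - \Exp_{\mathbf{b}}[\mathcal{R}(\mathbf{b})]$, and by the $(\gamma,\delta)$-revenue guaranteed property (taken in expectation over the support of $\mathbf{b}$), $\Exp_{\mathbf{b}}[\mathcal{R}(\mathbf{b})] \ge \gamma\,\OPT - \delta\,\Exp_{\mathbf{b}}[SW(\mathbf{b},\mathbf{v})]$. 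Substituting this revenue bound into the quasi-linearity identity and then into \eqref{eqn:smooth_rev_1} yields
\begin{equation}
\Exp_{\mathbf{b}}[SW(\mathbf{b},\mathbf{v})] - \gamma\,\OPT + \delta\,\Exp_{\mathbf{b}}[SW(\mathbf{b},\mathbf{v})] \ \ge\ \lambda\,\OPT - \mu\,\Exp_{\mathbf{b}}[SW(\mathbf{b},\mathbf{v})].
\label{eqn:smooth_rev_2}
\end{equation}
Collecting the $\Exp_{\mathbf{b}}[SW]$ terms on one side gives $(1+\mu+\delta)\,\Exp_{\mathbf{b}}[SW(\mathbf{b},\mathbf{v})] \ge (\lambda+\gamma)\,\OPT$, i.e. $\Exp_{\mathbf{b}}[SW(\mathbf{b},\mathbf{v})] \ge \frac{\lambda+\gamma}{1+\mu+\delta}\,\OPT$, as claimed.

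There is no serious obstacle here; the only points requiring a little care are bookkeeping ones. One must make sure the deviation $b^*_i(\mathbf{v})$ is admissible as a CCE deviation — it is a fixed bid depending only on $\mathbf{v}$, which is exactly the form allowed in Definition~\ref{def:ccne} — and one must ensure the intersected support $\mathcal{B}' \cap \mathcal{B}''$ is where both the smoothness and the revenue-guaranteed inequalities simultaneously apply, which is precisely why the hypothesis places the CCE in $\Delta(\mathcal{B}'\cap\mathcal{B}'')$. Finally, as in Theorem~\ref{thm:rev_guarant_poa_copm_i}, the passage from the pure-profile inequalities to their expected versions is just linearity of expectation applied termwise. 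An analogous argument, replacing the CCE step by the mixed Bayes–Nash deviation argument underlying Theorem~\ref{thm:smooth_poa_incopm_i} and the correlated-revenue extension of Theorem~\ref{thm:rev_guarant_poa_copm_i}, gives the same bound for the BPoA under product distributions, which I would state as a remark rather than reprove in full.
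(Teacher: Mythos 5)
Your proof is correct and follows essentially the same route as the paper: lower-bound the expected sum of utilities via the CCE property applied to the smoothness deviations, upper-bound that same quantity via quasi-linearity plus the revenue guarantee, and equate the two. The algebra and the bookkeeping about where each hypothesis applies (support in $\mathcal{B}'\cap\mathcal{B}''$, linearity of expectation) match the paper's argument exactly.
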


\begin{proof}
The proof follows by the proofs of Theorem \ref{thm:smooth_poa_copm_i} and Theorem \ref{thm:rev_guarant_poa_copm_i}.
Let $\mathbf{b} \in \Delta(\mathcal{B}' \cap \mathcal{B}")$ be a CCE of the auction.
The proof of Theorem \ref{thm:smooth_poa_copm_i} shows that:
$$\sum_{i \in [n]}
\mathbb{E}_{\mathbf{b}}
\left[ 
u_i(\mathbf{b}, v_i)
\right] \ge \lambda \cdot OPT(\mathbf{v}) - \mu \cdot \mathbb{E}_{\mathbf{b}}
\left[ 
SW(\mathbf{b}, \mathbf{v})
\right]$$
From Equation (\ref{eqn:rev_guarant_poa_copm_i_2}) we get,
$$\mathbb{E}_{\mathbf{b}}
\left[ 
SW(\mathbf{b}, \mathbf{v})
\right] - 
\sum_{i \in [n]}
\mathbb{E}_{\mathbf{b}}
\left[ 
P_i(\mathbf{b})
\right] \le
(1+\delta) \cdot \mathbb{E}_{\mathbf{b}}
\left[ SW(\mathbf{b}, \mathbf{v})
\right] - \gamma \cdot OPT(\mathbf{v})$$
As utilities are quasi-linear, the left hand side of the above two inequalities are equal. 
Rearranging, we get: $\mathbb{E}_{\mathbf{b}} \left[ SW(\mathbf{b}, \mathbf{v}) \right] \ge \frac{\lambda + \gamma}{1+\mu + \delta} OPT(\mathbf{v})$, as required.
\end{proof}

\subsection{Incomplete Information: Extension Theorem for Revenue Guaranteed Auctions}
\label{subsec:incomp_info_rev_grnt}

In a similar manner to the smoothness extension theorem, we can prove an extension theorem for the revenue guarantee property, which 
gives 
expected 
welfare guarantees for settings with
incomplete information. 
However, this extension theorem is stronger, in the sense that it holds with respect to correlated distributions and not only product prior distributions.

\begin{theorem}
\label{thm:rev_guarant_poa_incopm_i}
If an auction is $(\gamma,\delta)-$revenue guaranteed with respect to a bid space $\mathcal{B}' \subseteq \mathcal{B}_1 \times \cdots \times \mathcal{B}_n$,
then for every 
joint distribution $\mathcal{F} \in \Delta(\mathcal{V}_1 \times \ldots \times \mathcal{V}_n)$, possibly correlated, 
and every strategy profile 
$\sigma : \mathcal{V}_1 \times \ldots \times \mathcal{V}_n \rightarrow \Delta(\mathcal{B}')$,
in which the expected sum of player utilities is non-negative,  
the expected social welfare is at least $\frac{\gamma}{1+\delta}$ of the expected optimal social welfare. 
\end{theorem}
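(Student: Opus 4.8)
The plan is to mirror the full-information argument of Theorem~\ref{thm:rev_guarant_poa_copm_i-pure}, but carry out the bookkeeping pointwise in the realized valuation profile and then take an outer expectation over $\mathbf{v}\sim\mathcal{F}$. Fix a joint distribution $\mathcal{F}\in\Delta(\mathcal{V}_1\times\cdots\times\mathcal{V}_n)$ and a strategy profile $\sigma$ mapping into $\Delta(\mathcal{B}')$ with nonnegative expected sum of utilities. For each realized $\mathbf{v}$ and each realized bid profile $\mathbf{b}=\sigma(\mathbf{v})$ in the support, the auction's $(\gamma,\delta)$-revenue guarantee with respect to $\mathcal{B}'$ gives $\sum_{i\in[n]}P_i(\mathbf{b})\ge \gamma\cdot OPT(\mathbf{v})-\delta\cdot SW(\mathbf{b},\mathbf{v})$. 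Quasi-linearity gives the identity $\sum_i u_i(\mathbf{b},v_i)=SW(\mathbf{b},\mathbf{v})-\sum_i P_i(\mathbf{b})$, exactly as in the proof of Theorem~\ref{thm:rev_guarant_poa_copm_i-pure}.

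First I would take the expectation of the revenue inequality over the randomness in $\sigma$ and over $\mathbf{v}\sim\mathcal{F}$, using linearity of expectation, to obtain
$$\mathbb{E}_{\mathbf{v}}\mathbb{E}_{\sigma}\Big[\textstyle\sum_{i\in[n]}P_i(\sigma(\mathbf{v}))\Big]\ \ge\ \gamma\cdot\mathbb{E}_{\mathbf{v}}[OPT(\mathbf{v})]-\delta\cdot\mathbb{E}_{\mathbf{v}}\mathbb{E}_{\sigma}[SW(\sigma(\mathbf{v}),\mathbf{v})].$$
Next I would combine this with the expected version of the quasi-linear identity, $\mathbb{E}_{\mathbf{v}}\mathbb{E}_{\sigma}[\sum_i u_i]=\mathbb{E}_{\mathbf{v}}\mathbb{E}_{\sigma}[SW]-\mathbb{E}_{\mathbf{v}}\mathbb{E}_{\sigma}[\sum_i P_i]$, and the hypothesis that $\mathbb{E}_{\mathbf{v}}\mathbb{E}_{\sigma}[\sum_i u_i]\ge 0$. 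Substituting the revenue bound into the identity yields $0\le \mathbb{E}[SW]-\mathbb{E}[\sum_i P_i]\le (1+\delta)\mathbb{E}[SW]-\gamma\,\mathbb{E}[OPT]$, which upon rearranging gives $\mathbb{E}_{\mathbf{v}}\mathbb{E}_{\sigma}[SW(\sigma(\mathbf{v}),\mathbf{v})]\ge\frac{\gamma}{1+\delta}\,\mathbb{E}_{\mathbf{v}}[OPT(\mathbf{v})]$, which is the claim.

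The point worth stressing — and the only place where this differs structurally from the product-distribution smoothness extension theorem (Theorem~\ref{thm:smooth_poa_incopm_i}) — is that the revenue guarantee is a pointwise inequality in $(\mathbf{v},\mathbf{b})$ that involves no deviating bid and no conditional-expectation/independence manipulation. Hence there is nothing to couple across players' types, and no need to sample a ``hypothetical'' type profile; one simply takes expectations of a deterministic inequality. This is exactly why the result survives arbitrary correlation in $\mathcal{F}$. The main (and only mild) obstacle is purely bookkeeping: being careful that the ``non-negative expected sum of utilities'' hypothesis is applied to the correct double expectation $\mathbb{E}_{\mathbf{v}}\mathbb{E}_{\sigma}$, and that $OPT(\mathbf{v})$ is a random variable under the outer expectation rather than a fixed constant, so the final bound is stated as a ratio of expectations exactly as in the definition of $BPoA$.
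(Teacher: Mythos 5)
Your proof is correct and follows essentially the same route as the paper's: apply the $(\gamma,\delta)$-revenue guarantee pointwise on the support, combine with the quasi-linear identity and the non-negativity of the expected sum of utilities, take expectations, and rearrange. Your closing remark on why no independence is needed also matches the paper's own explanation for why the bound survives correlated distributions.
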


\begin{proof}
We give a proof for pure strategies. The proof for mixed strategies 
follows by adding in a straightforward way 
another expectation over the random actions chosen in the strategy profile $\mathbf{\sigma}$.
As the utility of each player is quasi-linear and the expected sum of player utilities is non-negative, we use linearity of expectation and get,
\begin{eqnarray*}
0 
& \le &  
\sum_{i \in [n]}
\mathbb{E}_{\mathbf{v}}
\left[ 
u_i(\mathbf{\sigma}(\mathbf{v}), v_i)
\right] \\
& = & 
\sum_{i \in [n]}
\mathbb{E}_{\mathbf{v}}
\left[ 
v_i(S_i(\mathbf{\sigma}(\mathbf{v})))
\right] - 
\sum_{i \in [n]}
\mathbb{E}_{\mathbf{v}}
\left[ 
P_i(\mathbf{\sigma}(\mathbf{v}))
\right] \\
& = & 
\mathbb{E}_{\mathbf{v}}
\left[ 
SW(\mathbf{\sigma}(\mathbf{v}), \mathbf{v})
\right] - 
\sum_{i \in [n]}
\mathbb{E}_{\mathbf{v}}
\left[ 
P_i(\mathbf{\sigma}(\mathbf{v}))
\right]
\end{eqnarray*}
By the $(\gamma,\delta)-$revenue guaranteed property, 
for each $v$ in the support of $\mathbf{v}$, 
$$\sum_{i \in [n]} P_i(\mathbf{\sigma}(v)) \ge 
\gamma \cdot OPT(v) - \delta \cdot SW( \mathbf{\sigma}(v),v)$$

Punting it all together, we get,
\begin{eqnarray}
\nonumber
0 
& \le & 
\mathbb{E}_{\mathbf{v}}
\left[ 
SW(\mathbf{\sigma}(\mathbf{v}), \mathbf{v})
\right] - 
\sum_{i \in [n]}
\mathbb{E}_{\mathbf{v}}
\left[ 
P_i(\mathbf{\sigma}(\mathbf{v}))
\right] \\
\label{eqn:rev_guarant_poa_incopm_i_3}
& \le &
\mathbb{E}_{\mathbf{v}}
\left[ 
SW(\mathbf{\sigma}(\mathbf{v}), \mathbf{v})
\right] -
\mathbb{E}_{\mathbf{v}}
\left[ 
\gamma OPT(\mathbf{v}) - \delta SW(\mathbf{\sigma}(\mathbf{v}), \mathbf{v})
\right] \\
& = &
\label{eqn:rev_guarant_poa_incopm_i_2}
(1+\delta) \cdot \mathbb{E}_{\mathbf{v}}
\left[ SW(\mathbf{\sigma}(\mathbf{v}), \mathbf{v})
\right] - \gamma \cdot \mathbb{E}_{\mathbf{v}} \left[ OPT(\mathbf{v}) \right]
\end{eqnarray}

Rearranging, we get: $\mathbb{E}_{\mathbf{v}}
\left[ 
SW(\mathbf{\sigma}(\mathbf{v}), \mathbf{v})
\right] \ge \frac{\gamma}{1+\delta} \mathbb{E}_{\mathbf{v}} \left[ OPT(\mathbf{v}) \right]$, as required.
\end{proof}


As the expected utility of each player is non-negative at any equilibrium strategy profile, we infer that if an auction is $(\gamma,\delta)-$revenue guaranteed with respect to a bid space $\mathcal{B}'$, then for every 
joint distribution $\mathcal{F} \in \Delta(\mathcal{V}_1 \times \ldots \times \mathcal{V}_n)$, possibly correlated, 
the expected social welfare 
at any mixed Bayes Nash equilibrium, $\sigma : \mathcal{V}_1 \times \ldots \times \mathcal{V}_n \rightarrow \Delta(\mathcal{B}')$,
is at least $\frac{\gamma}{1+\delta}$ of the expected optimal social welfare.

For an auction that is both smooth and revenue guaranteed, we give a better bound on the price of anarchy, if the joint distribution $\mathcal{F}$ is a product distribution: 

\begin{theorem}
\label{thm:smooth_rev_guarant_poa_incomp}
If an auction is $(\lambda,\mu)-$smooth with respect to a bid space $\mathcal{B}'$ and $(\gamma,\delta)-$revenue guaranteed with respect to a bid space $\mathcal{B}"$,
then for every product distribution $\mathcal{F}$, 
every mixed Bayes Nash equilibrium, $\sigma : \mathcal{V}_1 \times \ldots \times \mathcal{V}_n \rightarrow \Delta(\mathcal{B}' \cap \mathcal{B}")$, has expected social welfare at least $\frac{\lambda + \gamma}{1+\mu + \delta}$ of the expected optimal social welfare.
\end{theorem}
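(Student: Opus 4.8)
The plan is to combine the two extension theorems (Theorem~\ref{thm:smooth_poa_incopm_i} and Theorem~\ref{thm:rev_guarant_poa_incopm_i}) in exactly the way Theorem~\ref{thm:smooth_rev_guarant_poa_full_info} combined their full-information analogues, now for a product distribution $\mathcal{F}$. Fix a mixed Bayes Nash equilibrium $\sigma : \mathcal{V}_1 \times \cdots \times \mathcal{V}_n \rightarrow \Delta(\mathcal{B}' \cap \mathcal{B}")$. Since $\mathcal{B}' \cap \mathcal{B}" \subseteq \mathcal{B}'$, the strategy profile $\sigma$ is also a valid MBNE with respect to the smoothness bid space $\mathcal{B}'$, and likewise it maps into $\mathcal{B}"$ so the revenue-guaranteed property applies pointwise. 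So both tools are available for the same $\sigma$.

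First I would extract the smoothness inequality in expectation. The proof of Theorem~\ref{thm:smooth_poa_incopm_i} (in Appendix~\ref{app:extens_thm_proof}) shows, for a product distribution, that summing the equilibrium condition against the hypothetical deviations $b^*_i(v_i)$ and using independence of the $v_{-i}$ yields
\begin{equation}
\nonumber
\sum_{i \in [n]} \mathbb{E}_{\mathbf{v}}\left[ u_i(\sigma(\mathbf{v}), v_i) \right] \ \ge\ \lambda \cdot \mathbb{E}_{\mathbf{v}}\left[ OPT(\mathbf{v}) \right] - \mu \cdot \mathbb{E}_{\mathbf{v}}\left[ SW(\sigma(\mathbf{v}), \mathbf{v}) \right].
\end{equation}
Next I would take the revenue-guaranteed bound from the proof of Theorem~\ref{thm:rev_guarant_poa_incopm_i}: applying the $(\gamma,\delta)$-revenue guaranteed property pointwise for each $v$ in the support and taking expectations gives
\begin{equation}
\nonumber
\sum_{i \in [n]} \mathbb{E}_{\mathbf{v}}\left[ P_i(\sigma(\mathbf{v})) \right] \ \ge\ \gamma \cdot \mathbb{E}_{\mathbf{v}}\left[ OPT(\mathbf{v}) \right] - \delta \cdot \mathbb{E}_{\mathbf{v}}\left[ SW(\sigma(\mathbf{v}), \mathbf{v}) \right].
\end{equation}

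Then I would glue these together through the quasi-linear identity $\sum_i \mathbb{E}_{\mathbf{v}}[u_i(\sigma(\mathbf{v}),v_i)] = \mathbb{E}_{\mathbf{v}}[SW(\sigma(\mathbf{v}),\mathbf{v})] - \sum_i \mathbb{E}_{\mathbf{v}}[P_i(\sigma(\mathbf{v}))]$, which holds by linearity of expectation. Substituting the smoothness lower bound for the left-hand utility sum and the revenue lower bound for the payment sum gives
\begin{equation}
\nonumber
\lambda \cdot \mathbb{E}_{\mathbf{v}}[OPT(\mathbf{v})] - \mu \cdot \mathbb{E}_{\mathbf{v}}[SW(\sigma(\mathbf{v}),\mathbf{v})] \ \le\ \mathbb{E}_{\mathbf{v}}[SW(\sigma(\mathbf{v}),\mathbf{v})] - \left(\gamma \cdot \mathbb{E}_{\mathbf{v}}[OPT(\mathbf{v})] - \delta \cdot \mathbb{E}_{\mathbf{v}}[SW(\sigma(\mathbf{v}),\mathbf{v})]\right),
\end{equation}
and rearranging yields $(\lambda + \gamma)\,\mathbb{E}_{\mathbf{v}}[OPT(\mathbf{v})] \le (1 + \mu + \delta)\,\mathbb{E}_{\mathbf{v}}[SW(\sigma(\mathbf{v}),\mathbf{v})]$, which is the claim. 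The mixed case is handled, as in the earlier proofs, by inserting an extra expectation over the internal randomization of $\sigma$ and invoking linearity again. The only real subtlety — and the one place to be careful rather than an actual obstacle — is making sure the product-distribution hypothesis is used exactly where the smoothness extension needs it (decoupling $v_i$ from $v_{-i}$ in the deviation term) while the revenue-guaranteed part is invoked only pointwise, so that it remains valid verbatim; no correlation assumption is needed for the revenue half, which is why the combined theorem only requires $\mathcal{F}$ to be a product distribution rather than anything stronger.
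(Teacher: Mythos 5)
Your proposal is correct and follows essentially the same route as the paper: it invokes the expected smoothness inequality from the proof of Theorem~\ref{thm:smooth_poa_incopm_i} (where the product-distribution hypothesis is used), the pointwise revenue-guaranteed bound in expectation from the proof of Theorem~\ref{thm:rev_guarant_poa_incopm_i}, and chains them through the quasi-linear identity relating utilities, welfare, and payments before rearranging. The observation that the correlation assumption is only needed on the smoothness side matches the paper's treatment exactly.
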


\begin{proof}
We give a proof for pure Bayes Nash equilibrium. The proof for mixed Bayes Nash equilibrium 
follows by adding in a straightforward way 
another expectation over the random actions chosen in the strategy profile $\mathbf{\sigma}$.
The proof follows by the proofs of Theorem \ref{thm:smooth_poa_incopm_i} and Theorem \ref{thm:rev_guarant_poa_incopm_i}.

The proof of Theorem \ref{thm:smooth_poa_incopm_i} shows that:
$$\mathbb{E}_{\mathbf{v}} 
\left[ 
\sum_{i \in [n]}
u_i(\mathbf{\sigma}(\mathbf{v}), v_i)
\right] 
\ge
\lambda \cdot \mathbb{E}_{\mathbf{v}} 
\left[
OPT(\mathbf{v})
\right] - 
\mu \cdot \mathbb{E}_{\mathbf{v}} 
\left[ SW(\mathbf{\sigma}(\mathbf{v}), \mathbf{v})
\right]$$
From Equation (\ref{eqn:rev_guarant_poa_incopm_i_2}) we get,
$$\mathbb{E}_{\mathbf{v}}
\left[ 
SW(\mathbf{\sigma}(\mathbf{v}), \mathbf{v})
\right] - 
\sum_{i \in [n]}
\mathbb{E}_{\mathbf{v}}
\left[ 
P_i(\mathbf{\sigma}(\mathbf{v}))
\right]
\le 
(1+\delta) \cdot \mathbb{E}_{\mathbf{v}}
\left[ SW(\mathbf{\sigma}(\mathbf{v}), \mathbf{v})
\right] - \gamma \cdot  \mathbb{E}_{\mathbf{v}} \left[ OPT(\mathbf{v}) \right]
$$
As utilities are quasi-linear, the left hand side of the above two inequalities are equal. 
Rearranging, we get: $\mathbb{E}_{\mathbf{v}}
\left[ 
SW(\mathbf{\sigma}(\mathbf{v}), \mathbf{v})
\right] \ge \frac{\lambda + \gamma}{1+\mu + \delta} \mathbb{E}_{\mathbf{v}} \left[ OPT(\mathbf{v}) \right]$, as required.
\end{proof}

\paragraph{\textbf{Remark}}
Note that we defined revenue guaranteed auctions for full information with respect to a bid space $\mathcal{B}'$, and then used an extension theorem to prove positive results for incomplete information on strategy space $\sigma : \mathcal{V}_1 \times \ldots \times \mathcal{V}_n \rightarrow \Delta(\mathcal{B}')$.
A different approach is to add an incomplete information definition to revenue guaranteed auctions with respect to a strategy space $\Sigma'$ and by that get positive results in incomplete information for a wider strategy space.
Moreover, there might be auctions that are revenue guaranteed only in expectation and hence do not fall into the current definition.
We give such definitions and corresponding theorems in Appendix \ref{app:rev_guarant_incomp}.


\section{Simultaneous Second Price Auctions with No-Underbidding}
\label{sec:s2pa_nub}
We begin this section by defining what it means to {\em underbid} on an item. Let $\mathbf{b}_{-j}$ denote the bids of all bidders on items $[m]\setminus \{j\}$.

\begin{definition} [\textbf{item underbidding}]
	\label{def:underbid_j}
	Fix $\mathbf{b}_{-j}$. Player $i$ is said to {\em underbid} on item $j$ if: 
	$b_{ij} < v_i(j \mid S_i(\mathbf{b}_{-j}))$, where $S_i(\mathbf{b}_{-j}) = \set{k \mid k \ne j, b_{ik} = \max_l \set{b_{lk}}}$.
\end{definition}

That is, we say that player $i$ underbids on item $j$ in a bid profile $\bids$ if $i$'s bid on item $j$ is smaller than the marginal valuation of $j$ with respect to the set of items other than $j$ won by $i$.

We next show that underbidding is weakly dominated in a precise sense that we define next.
Consider a bid profile $\mathbf{b}$. Let $\mathbf{b}_{-j}$ be the bids of all bidders on all items except $j$, and let $b_{-ij}$ be the bids on item $j$ of all players, except player $i$. 

\begin{definition} [\textbf{weakly dominated}]
	\label{def:wdominated}
	A bid $b'_{ij}$ is \emph{weakly dominated} by bid $b_{ij}$, with respect to $\mathbf{b}_{-j}$, if the following two conditions hold:
	\begin{enumerate}
		\item
		$u_i(b_{ij},b_{-ij}, \mathbf{b}_{-j};v_i) \ge u_i(b'_{ij},b_{-ij}, \mathbf{b}_{-j};v_i)$, for every $b_{-ij}$
		\label{eq:weakly}
		\item
		There exists $b_{-ij}$ such that the inequality in (\ref{eq:weakly}) holds strictly.
	\end{enumerate}
\end{definition}

The following lemma shows that underbidding on an item in a bid profile is weakly dominated by bidding its marginal value.

\begin{lemma}
\label{lem:i_underbid_wd}
	In S2PA, 
	for every player $i$, every item $j$, and every bid profile $\mathbf{b}_{-j}$, underbidding on item $j$ is weakly dominated by bidding $b_{ij} = v_i(j \mid S_i(b_{-j}))$, with respect to $\mathbf{b}_{-j}$.
\end{lemma}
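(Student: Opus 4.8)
The claim is that for a fixed profile $\mathbf{b}_{-j}$ of bids on all items other than $j$, the bid $\beta := v_i(j \mid S_i(\mathbf{b}_{-j}))$ weakly dominates any underbid $b'_{ij} < \beta$, with respect to $\mathbf{b}_{-j}$. Since the allocation and payments for items in $[m] \setminus \{j\}$ are entirely determined by $\mathbf{b}_{-j}$ and do not depend on $i$'s bid on item $j$, I would first observe that the contribution of those items to $u_i$ is a constant $C$ independent of whether $i$ bids $\beta$ or $b'_{ij}$ on item $j$; hence only the outcome on item $j$ matters, and it suffices to compare $i$'s net gain from item $j$ under the two bids. Write $t := \max_{k \ne i} b_{kj}$ for the competing (second-price) threshold on item $j$, and note that if $i$ wins item $j$ her marginal gain is exactly $v_i(j \mid S_i(\mathbf{b}_{-j})) - t = \beta - t$ (because she already holds $S_i(\mathbf{b}_{-j})$), while if she loses it her marginal gain from $j$ is $0$.

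The plan is then to do the standard second-price case analysis on the value of $t$ relative to the interval $[b'_{ij}, \beta]$. First, if $t > \beta$: bidding $\beta$ loses item $j$ (gain $0$) and bidding $b'_{ij} < \beta < t$ also loses it (gain $0$), so the utilities are equal. Second, if $t < b'_{ij}$: both bids win item $j$, paying $t$ in either case, so both yield gain $\beta - t > 0$ and the utilities are again equal. Third, if $b'_{ij} \le t \le \beta$ (subject to the tie-breaking convention, handled by treating $t = b'_{ij}$ or $t = \beta$ as needed): bidding $\beta$ wins item $j$ for net gain $\beta - t \ge 0$, whereas bidding $b'_{ij}$ loses it (or wins it at price $t \ge b'_{ij}$ under a tie — but even then the gain is $\beta - t$, the same), for net gain $0 \le \beta - t$. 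In every case $u_i(\beta, b_{-ij}, \mathbf{b}_{-j}; v_i) \ge u_i(b'_{ij}, b_{-ij}, \mathbf{b}_{-j}; v_i)$, establishing condition~\ref{eq:weakly} of Definition~\ref{def:wdominated}.

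For the strictness condition, I would exhibit a single choice of $b_{-ij}$ making the inequality strict: pick $b_{-ij}$ with $t := \max_{k \ne i} b_{kj}$ strictly between $b'_{ij}$ and $\beta$ — e.g. $t = (b'_{ij} + \beta)/2$, which exists precisely because $b'_{ij} < \beta$. Then bidding $\beta$ wins item $j$ with positive net gain $\beta - t > 0$, while bidding $b'_{ij}$ loses item $j$ for net gain $0$; adding back the common constant $C$, the inequality is strict. One mild subtlety to address is that $\beta$ could be $0$ (if the marginal value is zero), in which case there is no underbid $b'_{ij} < 0$ to dominate and the statement is vacuous — so one may assume $\beta > 0$, which guarantees the open interval $(b'_{ij}, \beta)$ is nonempty.

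The only real obstacle is bookkeeping around the tie-breaking rule: when $t$ equals one of the endpoints $b'_{ij}$ or $\beta$, whether $i$ wins depends on how ties are resolved, and one must check that neither resolution violates the inequality. This is handled by the observation above that even when $i$ wins item $j$ at price $t$, her net gain $\beta - t$ is nonincreasing in $t$ and equals $0$ exactly at $t = \beta$, so winning at any $t \le \beta$ is at least as good as the alternative of not winning — making the case analysis robust to the tie-breaking convention. I expect this to be the bulk of the (short) argument; everything else is the elementary second-price utility computation.
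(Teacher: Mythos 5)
Your proof is correct and takes essentially the same route as the paper's: the standard three-way second-price case analysis on the competing threshold $t=\max_{k\ne i} b_{kj}$ relative to $b'_{ij}$ and $\beta=v_i(j\mid S_i(\mathbf{b}_{-j}))$, followed by exhibiting a threshold strictly between the two bids to witness the strict inequality. Your additional care about tie-breaking, the constant contribution of the items other than $j$, and the vacuous case $\beta=0$ only makes the argument more complete than the paper's version.
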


\begin{proof}
	Fix $\mathbf{b}_{-j}$, and denote $p_j = \max_{l \ne i} \set{b_{lj}}$.  
	Let $b'_{ij}$ be an underbidding bid on item $j$ by player $i$, i.e., $b'_{ij} < v_i(j \mid S_i(\mathbf{b}_{-j}))$.
	We first show that $u_i(b_{ij},b_{-ij}, \mathbf{b}_{-j};v_i) \ge u_i(b'_{ij},b_{-ij}, \mathbf{b}_{-j};v_i)$ for every $b_{-ij}$.
	\begin{itemize}
		\item
		If $b'_{ij} \ge p_j$, player $i$ wins item $j$ under both $b'_{ij}$ and $b_{ij}$, pays $p_j$ on item $j$ in both cases, thus $u_i(b_{ij},b_{-ij}, \mathbf{b}_{-j};v_i) = u_i(b'_{ij},b_{-ij}, \mathbf{b}_{-j};v_i)$.
		
		\item
		If $b_{ij} < p_j$, player $i$ doesn't win item $j$ under both $b'_{ij}$ and $b_{ij}$, pays $0$ on item $j$ in both cases, thus $u_i(b_{ij},b_{-ij}, \mathbf{b}_{-j};v_i) = u_i(b'_{ij},b_{-ij}, \mathbf{b}_{-j};v_i)$. 
		
		\item
		If $b_{ij} \ge p_j$ but $b'_{ij} < p_j$, player $i$ wins item $j$ under $b_{ij}$ and pays $p_j$ on item $j$, but she doesn't win item $j$ under $b'_{ij}$. 
		As $v_i(j \mid S_i(b_{-j})) = b_{ij} \ge p_j$, we have $u_i(b_{ij},b_{-ij}, \mathbf{b}_{-j};v_i) \ge u_i(b'_{ij},b_{-ij}, \mathbf{b}_{-j};v_i)$.
	\end{itemize}
	Next, we show that there exists $b_{-ij}$ such that $u_i(b_{ij},b_{-ij}, \mathbf{b}_{-j};v_i) > u_i(b'_{ij},b_{-ij}, \mathbf{b}_{-j};v_i)$. Let $\epsilon > 0$ be such that $b_{ij} = b'_{ij} + \epsilon$. Consider $b_{-ij}$ and player $l \ne i$ such that $b_{lj} = b'_{ij} + \frac{\epsilon}{2} = p_j$. Then, player $i$ doesn't win item $j$ under $b'_{ij}$, but she does win item $j$ under $b_{ij}$ with a payment of $b'_{ij} + \frac{\epsilon}{2} < b_{ij} = v_i(j|S_i(b_{-j}))$ on item $j$.
	Therefore, $u_i(b_{ij},b_{-ij}, \mathbf{b}_{-j};v_i) > u_i(b'_{ij},b_{-ij}, \mathbf{b}_{-j};v_i)$.
\end{proof}

Motivated by the above analysis, we next define the notion of item no underbidding (iNUB):

\begin{definition} [\textbf{Item No-UnderBidding (iNUB)}]
	\label{def:iNUB}
	Given a valuation profile $\mathbf{v} \in \mathcal{V}_1 \times \ldots \times \mathcal{V}_n$, we say that a bid profile $\mathbf{b} \in \mathcal{B}$ satisfies iNUB if 
	there exists a welfare maximizing allocation, $S^*(\mathbf{v}) = (S_1^*(\mathbf{v}), \ldots, S_n^*(\mathbf{v}))$, such that 
	for every player $i$ 
	and every item 
	$j \in  S^*_i(\mathbf{v})\backslash S_i(\mathbf{b})$ 
	it holds that: 
	$
	b_{ij} \ge v_i(j \mid S_i(\mathbf{b}))
	$.
	\end{definition}


We also define the following notion of set no underbidding (sNUB), as follows: 

\begin{definition} [\textbf{Set No underbidding (sNUB)}]
\label{def:sNUB}
Given a valuation profile $\mathbf{v} \in \mathcal{V}_1 \times \ldots \times \mathcal{V}_n$, we say that a bid profile $\mathbf{b} \in \mathcal{B}$ satisfies sNUB if 
there exists a welfare maximizing allocation, $S^*(\mathbf{v}) = (S_1^*(\mathbf{v}), \ldots, S_n^*(\mathbf{v}))$, such that
for every player $i$, it holds that
\begin{eqnarray}
\nonumber
\sum_{j \in S'} b_{ij} \ge  v_i(S' \mid  S_i(\mathbf{b})), \ \ \ \mbox{where} \ \ \ S' = S^*_i(\mathbf{v})\backslash S_i(\mathbf{b}).
\end{eqnarray}
\end{definition}

In Section \ref{sec:s2pa_sm} we show that if valuations are submodular, every bid profile that satisfies iNUB, also satisfies sNUB.
The opposite is not necessarily true, as demonstrated in Appendix \ref{app:ex_pne_snub_not_inub}.
For unit-demand bidders, iNUB and sNUB coincide (as one can assume w.l.o.g. that every bidder receives a single item in an optimal allocation). 


The following theorem shows that sNUB is a powerful property.

\begin{theorem}
\label{thm:mon_11_rev_guarant}
S2PA with monotone valuation functions is $(1,1)-$revenue guaranteed with respect to 
bid profiles satisfying sNUB.
\end{theorem}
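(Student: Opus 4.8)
The plan is to chain together three ingredients: the revenue lower bound from \Cref{lem:s2pa_rev_bids}, the sNUB inequality, and monotonicity of the valuations. Fix a valuation profile $\mathbf{v}$ and a bid profile $\mathbf{b}$ that satisfies sNUB, and let $S^*(\mathbf{v}) = (S^*_1(\mathbf{v}), \ldots, S^*_n(\mathbf{v}))$ be the welfare-maximizing allocation witnessing sNUB (so that in particular $OPT(\mathbf{v}) = \sum_{i \in [n]} v_i(S^*_i(\mathbf{v}))$). The goal is to show $\mathcal{R}(\mathbf{b}) \ge OPT(\mathbf{v}) - SW(\mathbf{b},\mathbf{v})$.

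First I would invoke \Cref{lem:s2pa_rev_bids} to get $\mathcal{R}(\mathbf{b}) = \sum_{i} \sum_{j \in S_i(\mathbf{b})} p_j(\mathbf{b}) \ge \sum_{i} \sum_{j \in S^*_i(\mathbf{v}) \setminus S_i(\mathbf{b})} b_{ij}$. Next, writing $S'_i = S^*_i(\mathbf{v}) \setminus S_i(\mathbf{b})$ for each player $i$, the sNUB condition gives $\sum_{j \in S'_i} b_{ij} \ge v_i(S'_i \mid S_i(\mathbf{b})) = v_i(S_i(\mathbf{b}) \cup S'_i) - v_i(S_i(\mathbf{b}))$. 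Since $S_i(\mathbf{b}) \cup S'_i = S_i(\mathbf{b}) \cup S^*_i(\mathbf{v}) \supseteq S^*_i(\mathbf{v})$, monotonicity of $v_i$ yields $v_i(S_i(\mathbf{b}) \cup S'_i) \ge v_i(S^*_i(\mathbf{v}))$, hence $\sum_{j \in S'_i} b_{ij} \ge v_i(S^*_i(\mathbf{v})) - v_i(S_i(\mathbf{b}))$.

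Summing this over all $i \in [n]$ and combining with the bound from \Cref{lem:s2pa_rev_bids} gives
\[
\mathcal{R}(\mathbf{b}) \ \ge\ \sum_{i \in [n]} \big( v_i(S^*_i(\mathbf{v})) - v_i(S_i(\mathbf{b})) \big) \ =\ OPT(\mathbf{v}) - SW(\mathbf{b},\mathbf{v}),
\]
which is precisely the $(1,1)$-revenue guarantee. Since the argument uses only monotonicity of the $v_i$, it covers arbitrary monotone valuations.

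I do not expect a serious obstacle here; the only points requiring care are bookkeeping ones: making sure the \emph{same} welfare-maximizing allocation $S^*(\mathbf{v})$ is used both in the statement of sNUB and in the application of \Cref{lem:s2pa_rev_bids} (the lemma holds for any welfare-maximizing allocation, so this is fine), and correctly identifying $v_i(S'_i \mid S_i(\mathbf{b}))$ with $v_i(S_i(\mathbf{b}) \cup S^*_i(\mathbf{v})) - v_i(S_i(\mathbf{b}))$ before applying monotonicity. If anything, one should note explicitly that the inequality $\mathcal{R}(\mathbf{b}) \ge OPT(\mathbf{v}) - SW(\mathbf{b},\mathbf{v})$ is meaningful (the right side may be negative, in which case it is trivially satisfied since payments are non-negative), but this requires no extra work.
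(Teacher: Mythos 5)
Your proposal is correct and follows exactly the same route as the paper's proof: Lemma \ref{lem:s2pa_rev_bids} to lower-bound revenue by the losing optimal-set bids, then the sNUB inequality, then monotonicity via $S_i(\mathbf{b}) \cup S^*_i(\mathbf{v}) \supseteq S^*_i(\mathbf{v})$. Your bookkeeping remark about using the same welfare-maximizing allocation in both sNUB and the lemma is a sensible (and valid) precaution that the paper leaves implicit.
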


\begin{proof}
In what follows, the first inequality follows by Lemma \ref{lem:s2pa_rev_bids}, 
and the second inequality follows by the fact that $\mathbf{b}$ satisfies sNUB. The last inequality follows by monotonicity of valuations.

\begin{eqnarray*}
\sum_{i=1}^{n} \sum_{j \in S_i(\mathbf{b})} p_j(\mathbf{b})
&\geq&
\sum_{i=1}^{n} \sum_{j \in S^*_i(\mathbf{v}) \backslash S_i(\mathbf{b})} b_{ij} \\
& \ge &
\sum_{i=1}^{n} 
\left[
v_i \left( ~ S_i(\mathbf{b}) \cup \left( S^*_i(\mathbf{v}) \backslash S_i(\mathbf{b})\right)~ \right) - v_i \left( ~ S_i(\mathbf{b})~ \right)
\right] \\
& = &
\sum_{i=1}^{n} 
[v_i(~S_i(\mathbf{b}) \cup S^*_i(\mathbf{v})~) - v_i(~S_i(\mathbf{b})~)] \\
&\geq&
\sum_{i=1}^{n}  [v_i(~S^*_i(\mathbf{v})~) - v_i(~S_i(\mathbf{b})~)] \\
&=&
OPT(\mathbf{v}) - 
SW(\mathbf{b},\mathbf{v})
\end{eqnarray*}
\end{proof}

The following corollary follows directly by Theorems \ref{thm:mon_11_rev_guarant} and \ref{thm:rev_guarant_poa_incopm_i}. 

\begin{corollary}
\label{corr:mon_snub_bpoa_1_2}
In an S2PA with monotone valuations, 
for every 
joint distribution $\mathcal{F} \in \Delta(\mathcal{V}_1 \times \ldots \times \mathcal{V}_n)$, possibly correlated, 
every mixed Bayes Nash equilibrium that satisfies sNUB has expected social welfare at least $\frac{1}{2}$ of the expected optimal social welfare.
\end{corollary}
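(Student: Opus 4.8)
The plan is to combine the $(1,1)$-revenue guarantee of Theorem \ref{thm:mon_11_rev_guarant} with the incomplete-information extension theorem, Theorem \ref{thm:rev_guarant_poa_incopm_i}, instantiated at $\gamma = \delta = 1$. First I would take the bid space $\mathcal{B}'$ to be, for each valuation profile $\mathbf{v}$, the set of bid profiles that satisfy sNUB with respect to $\mathbf{v}$. Theorem \ref{thm:mon_11_rev_guarant} says precisely that S2PA with monotone valuations is $(1,1)$-revenue guaranteed with respect to this bid space: for every $\mathbf{v}$ and every $\mathbf{b}$ satisfying sNUB, the revenue is at least $OPT(\mathbf{v}) - SW(\mathbf{b},\mathbf{v})$.

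Next I would observe that a mixed Bayes Nash equilibrium $\sigma$ that satisfies sNUB is, by definition, a strategy profile $\sigma : \mathcal{V}_1 \times \cdots \times \mathcal{V}_n \to \Delta(\mathcal{B}')$: for every realized valuation profile $\mathbf{v}$ in the support of $\mathcal{F}$, the (random) bid profile $\sigma(\mathbf{v})$ lies in the sNUB bid space associated with $\mathbf{v}$. The only remaining hypothesis of Theorem \ref{thm:rev_guarant_poa_incopm_i} is that the expected sum of player utilities is non-negative. This holds at any equilibrium: each player has the option to bid $0$ on every item, which guarantees her a non-negative utility regardless of the other players' bids, so her interim expected utility under the equilibrium strategy is non-negative; summing over players and taking expectation over $\mathbf{v}$ yields a non-negative expected sum of utilities.

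Finally I would apply Theorem \ref{thm:rev_guarant_poa_incopm_i} with $\gamma = \delta = 1$ to conclude that the expected social welfare of $\sigma$ is at least $\frac{1}{1+1} = \frac{1}{2}$ of the expected optimal social welfare, for any joint (possibly correlated) distribution $\mathcal{F}$. There is essentially no obstacle here beyond bookkeeping, since the argument invokes the equilibrium property only through the non-negativity of expected utilities. The one point to handle carefully is that the sNUB constraint---and hence membership in $\mathcal{B}'$---is a property of the pair $(\mathbf{v},\mathbf{b})$ realized together, so one must check that the pointwise revenue bound used inside the proof of Theorem \ref{thm:rev_guarant_poa_incopm_i} is indeed available for every $\mathbf{v}$ in the support; this is immediate once $\mathcal{B}'$ is read as the $\mathbf{v}$-indexed family of sNUB-feasible profiles.
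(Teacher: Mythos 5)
Your proof is correct and matches the paper's own derivation: the corollary is stated there as following directly from Theorem \ref{thm:mon_11_rev_guarant} and Theorem \ref{thm:rev_guarant_poa_incopm_i}, with the non-negativity of expected equilibrium utilities supplied by the zero-bid deviation, exactly as you argue. Your extra remark about reading $\mathcal{B}'$ as the $\mathbf{v}$-indexed family of sNUB-feasible bid profiles is a fair point of bookkeeping that the paper glosses over, but it does not change the argument.
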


\paragraph{\textbf{Remark}}
In this section we give a full information definition of no-underbidding bid profiles (sNUB) and prove Theorem \ref{thm:mon_11_rev_guarant} accordingly. In Appendix \ref{app:rev_guarant_incomp} we give an incomplete information definition of no-underbidding strategy profiles, which requires no-underbidding in expectation. This broader definition, together with a broader definition of incomplete information revenue guaranteed auctions, allows us to get positive results in incomplete information setting for a wider strategy space.


\section{S2PA with Submodular Valuations}
\label{sec:s2pa_sm}
In this section we study S2PA with submodular (and $\alpha$-submodular) valuations.
We first show that for this class of valuations, the notion of iNUB suffices for establishing positive results.


\begin{theorem}
\label{thm:sm_rg}
Every S2PA with $\alpha-$SM valuations is $(\alpha,\alpha)-$revenue guaranteed with respect to bid profiles satisfying iNUB.
\end{theorem}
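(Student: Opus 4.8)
The plan is to mimic the proof of Theorem~\ref{thm:mon_11_rev_guarant}, starting from the revenue lower bound of Lemma~\ref{lem:s2pa_rev_bids}, which says that for every bid profile $\mathbf{b}$ the revenue $\sum_i \sum_{j \in S_i(\mathbf{b})} p_j(\mathbf{b})$ is at least $\sum_i \sum_{j \in S^*_i(\mathbf{v}) \backslash S_i(\mathbf{b})} b_{ij}$. So the entire burden is to lower bound $\sum_i \sum_{j \in S^*_i(\mathbf{v})\backslash S_i(\mathbf{b})} b_{ij}$ by $\alpha \cdot OPT(\mathbf{v}) - \alpha \cdot SW(\mathbf{b},\mathbf{v})$.

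First I would fix, for a bid profile $\mathbf{b}$ satisfying iNUB, the welfare-maximizing allocation $S^*(\mathbf{v})$ promised by Definition~\ref{def:iNUB}, and write $S' = S^*_i(\mathbf{v})\backslash S_i(\mathbf{b})$ for each player $i$. By iNUB, every item $j \in S'$ satisfies $b_{ij} \ge v_i(j \mid S_i(\mathbf{b}))$, so summing over $j \in S'$ gives $\sum_{j \in S'} b_{ij} \ge \sum_{j \in S'} v_i(j \mid S_i(\mathbf{b}))$. Now the key step: apply Lemma~\ref{lem:alpha_sm_prop} with the base set $S_i(\mathbf{b})$ and the set $S'$, which yields $\sum_{j \in S'} v_i(j \mid S_i(\mathbf{b})) \ge \alpha \cdot v_i(S' \mid S_i(\mathbf{b})) = \alpha\bigl(v_i(S_i(\mathbf{b}) \cup S'_i) - v_i(S_i(\mathbf{b}))\bigr)$, where $S_i(\mathbf{b}) \cup S' = S_i(\mathbf{b}) \cup S^*_i(\mathbf{v})$. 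Then monotonicity gives $v_i(S_i(\mathbf{b}) \cup S^*_i(\mathbf{v})) \ge v_i(S^*_i(\mathbf{v}))$, so the per-player contribution is at least $\alpha\bigl(v_i(S^*_i(\mathbf{v})) - v_i(S_i(\mathbf{b}))\bigr)$. Summing over all players $i \in [n]$ and telescoping the welfare terms gives $\alpha \cdot OPT(\mathbf{v}) - \alpha \cdot SW(\mathbf{b},\mathbf{v})$, which is exactly the $(\alpha,\alpha)$-revenue guaranteed bound.

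The only real subtlety — and the step I expect to need the most care — is the interaction between the base set in the iNUB inequality and the base set in Lemma~\ref{lem:alpha_sm_prop}: both must be $S_i(\mathbf{b})$ for the chaining to work, and one must check that $v_i(S_i(\mathbf{b}) \cup (S^*_i(\mathbf{v})\backslash S_i(\mathbf{b}))) = v_i(S_i(\mathbf{b}) \cup S^*_i(\mathbf{v}))$, which is immediate set algebra. Everything else is a routine chain of inequalities, and the proof is essentially a one-line modification of the monotone $(1,1)$ case with the $\alpha$-submodular inequality of Lemma~\ref{lem:alpha_sm_prop} inserted to convert the sum of marginals into a marginal of the set (paying the factor $\alpha$), so I would present it as a single displayed computation analogous to the one in the proof of Theorem~\ref{thm:mon_11_rev_guarant}.
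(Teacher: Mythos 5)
Your proof is correct and follows essentially the same route as the paper's: Lemma~\ref{lem:s2pa_rev_bids}, then iNUB, then Lemma~\ref{lem:alpha_sm_prop}, then monotonicity. The only (immaterial) difference is that you apply Lemma~\ref{lem:alpha_sm_prop} directly to $S' = S^*_i(\mathbf{v})\backslash S_i(\mathbf{b})$, whereas the paper first extends the sum of marginals from $S'$ to all of $S^*_i(\mathbf{v})$ (using $v_i(j \mid S_i(\mathbf{b}))=0$ for $j\in S_i(\mathbf{b})$) and then applies the lemma to $S^*_i(\mathbf{v})$; both yield the same bound since $S_i(\mathbf{b})\cup S' = S_i(\mathbf{b})\cup S^*_i(\mathbf{v})$.
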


\begin{proof}
	In what follows, the first inequality follows by Lemma \ref{lem:s2pa_rev_bids}, and the second inequality follows by the fact that $\mathbf{b}$ satisfies iNUB.

\begin{eqnarray}
\nonumber
\sum_{i=1}^{n} \sum_{j \in S_i(\mathbf{b})} p_j(\mathbf{b})
&\geq&
\sum_{i=1}^{n} \sum_{j \in S^*_i(\mathbf{v}) \backslash S_i(\mathbf{b})} b_{ij} \\
\nonumber
& \ge & 
\sum_{i=1}^{n} 
\sum_{j \in S^*_i(\mathbf{v}) \backslash S_i(\mathbf{b})} 
v_i(~j \mid S_i(\mathbf{b})~) \\
\label{eqn:sm_rev_guarant_4}
& = &
\sum_{i=1}^{n} 
\sum_{j \in S^*_i(\mathbf{v})} 
v_i(~j \mid S_i(\mathbf{b})~) \\
\label{eqn:sm_rev_guarant_5}
& \ge &
\sum_{i=1}^{n} \alpha \cdot
v_i(~S^*_i(\mathbf{v}) \mid  S_i(\mathbf{b})~) \\
\label{eqn:sm_rev_guarant_6}
&\geq&
\sum_{i=1}^{n} \alpha \cdot [~v_i(~S^*_i(\mathbf{v})~) - v_i(~S_i(\mathbf{b})~)~] \\
\nonumber
&=&
\alpha \cdot OPT(\mathbf{v}) - 
\alpha \cdot SW(\mathbf{b},\mathbf{v})
\end{eqnarray}
Equality (\ref{eqn:sm_rev_guarant_4}) is due to the fact that  $v_i(j \mid S_i(\mathbf{b})) = 0$ for every $j \in S_i(\mathbf{b})$.
Inequality (\ref{eqn:sm_rev_guarant_5}) follows from 
Lemma \ref{lem:alpha_sm_prop}, and Inequality (\ref{eqn:sm_rev_guarant_6}) is due to monotonicity of valuations.
\end{proof}

An immediate corollary from Theorems \ref{thm:rev_guarant_poa_incopm_i} and \ref{thm:sm_rg} is:
\begin{corollary}
\label{cor:sm_rg_poa}
In an S2PA with $\alpha-$SM valuations, 
for every 
joint distribution $\mathcal{F} \in \Delta(\mathcal{V}_1 \times \ldots \times \mathcal{V}_n)$, possibly correlated, 
and every strategy profile $\mathbf{\sigma}$ 
that satisfies iNUB for which the expected sum of player utilities is non-negative, 
the expected social welfare is at least $\frac{\alpha}{1+\alpha}$ of the expected optimal social welfare.
In particular, for SM valuations (where $\alpha=1$), we get at least $\frac{1}{2}$ of the expected optimal social welfare.
\end{corollary}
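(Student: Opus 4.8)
The statement in question is Corollary~\ref{cor:sm_rg_poa}, which the text explicitly says follows ``immediately'' from Theorem~\ref{thm:rev_guarant_poa_incopm_i} and Theorem~\ref{thm:sm_rg}. So the plan is essentially to chain these two results together, and there is no genuine obstacle beyond bookkeeping.

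The plan is as follows. First I would invoke Theorem~\ref{thm:sm_rg}: since valuations are $\alpha$-\sm, the S2PA is $(\alpha,\alpha)$-revenue guaranteed with respect to the bid space $\mathcal{B}'$ consisting of all bid profiles that satisfy iNUB (for the relevant valuation profile). That is, for every $\mathbf{v}$ and every $\mathbf{b} \in \mathcal{B}'$, the revenue is at least $\alpha \cdot OPT(\mathbf{v}) - \alpha \cdot SW(\mathbf{b},\mathbf{v})$. Next I would feed this into Theorem~\ref{thm:rev_guarant_poa_incopm_i} with $\gamma = \delta = \alpha$ (noting $0 \le \alpha \le \alpha \le 1$, so the hypothesis $0 \le \gamma \le \delta \le 1$ of Definition~\ref{def:rev_guarant} holds): for every joint, possibly correlated, distribution $\mathcal{F}$ and every strategy profile $\sigma$ mapping into $\Delta(\mathcal{B}')$ with non-negative expected sum of player utilities, the expected social welfare is at least $\frac{\gamma}{1+\delta} = \frac{\alpha}{1+\alpha}$ of the expected optimal welfare. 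Since a strategy profile satisfying iNUB is precisely one taking values in $\Delta(\mathcal{B}')$ (the iNUB condition being imposed pointwise for each realized valuation profile), this is exactly the claimed conclusion.

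Finally I would specialize to $\alpha = 1$, which is the case of \sm\ valuations by the remark after Definition~\ref{def:alpha_sm} that a \sm\ valuation is exactly $1$-\sm, giving the bound $\frac{1}{1+1} = \frac12$. I would keep the proof to two or three sentences, essentially: ``By Theorem~\ref{thm:sm_rg}, S2PA with $\alpha$-\sm\ valuations is $(\alpha,\alpha)$-revenue guaranteed with respect to bid profiles satisfying iNUB; applying Theorem~\ref{thm:rev_guarant_poa_incopm_i} with $\gamma = \delta = \alpha$ yields the claim, and setting $\alpha = 1$ gives the \sm\ case.''

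The only ``obstacle'' — really just a point to be careful about — is matching up the quantifier structure: Theorem~\ref{thm:rev_guarant_poa_incopm_i} is stated for strategy profiles $\sigma: \mathcal{V}_1 \times \cdots \times \mathcal{V}_n \to \Delta(\mathcal{B}')$ whose \emph{expected} sum of player utilities is non-negative, and Corollary~\ref{cor:sm_rg_poa} asks for exactly the same hypothesis, so the translation is transparent; one just has to observe that ``$\sigma$ satisfies iNUB'' is the same as ``$\sigma$ ranges over $\Delta(\mathcal{B}')$ where $\mathcal{B}'$ is the iNUB bid space,'' and that the revenue-guarantee inequality from Theorem~\ref{thm:sm_rg} holds for every valuation profile in the support of $\mathcal{F}$, which is what the extension theorem consumes. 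No new computation is required.
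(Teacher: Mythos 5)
Your proposal is correct and follows exactly the route the paper intends: the paper itself presents Corollary~\ref{cor:sm_rg_poa} as an immediate consequence of Theorem~\ref{thm:sm_rg} (the $(\alpha,\alpha)$-revenue guarantee under iNUB) combined with the extension Theorem~\ref{thm:rev_guarant_poa_incopm_i} applied with $\gamma=\delta=\alpha$. Your attention to the quantifier matching and to the constraint $0\le\gamma\le\delta\le 1$ is appropriate but raises no issue; nothing is missing.
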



The $1/2$ bound for submodular valuations is tight, even with respect to unit-demand valuations and even in equilibrium, as  shown in the following proposition.

\begin{proposition}
\label{prop:sm_wnub_poa_eq_1_2}
There exists an S2PA with unit-demand valuations that admits a PNE bid profile that satisfies iNUB,
where the social welfare in equilibrium is $\frac{1}{2}$ of the optimal social welfare.
\end{proposition}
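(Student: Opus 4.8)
The goal is to exhibit a concrete S2PA instance with unit-demand valuations, a pure Nash equilibrium bid profile $\mathbf{b}$ satisfying iNUB, and with $SW(\mathbf{b},\mathbf{v}) = \tfrac12 OPT(\mathbf{v})$. The natural candidate is a symmetric two-bidder, two-item example in the spirit of Example~\ref{ex:ud-poa-2} and Example~\ref{ex:ud-poa-2-3}, where the ``wrong'' allocation is stabilized by bids that are high enough to deter a profitable deviation yet low enough to respect iNUB on the item each bidder fails to win. First I would set up two items $x,y$ and two unit-demand bidders with $v_1(x) = v_2(y) = H$ (the high value) and $v_1(y) = v_2(x) = L$ (the low value), with $L < H \le 2L$ — for instance $H=2, L=1$ won't quite work as in Example~\ref{ex:ud-poa-2} that profile \emph{violates} iNUB, so I would instead use $H=L$ would collapse the welfare gap; the cleanest choice mirroring Example~\ref{ex:ud-poa-2-3} is $v_1(x)=v_2(y)=2$, $v_1(y)=v_2(x)=1$ but with the bid profile adjusted so iNUB holds, or simply reuse $v_1(x)=v_2(y)=3$, $v_1(y)=v_2(x)=2$ from Example~\ref{ex:ud-poa-2-3} if a $2/3$ ratio is acceptable — but the proposition claims exactly $1/2$, so I need the welfare in equilibrium to be half of optimal, which forces the low value to be negligible relative to the high one, \emph{while} still keeping iNUB satisfiable.

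The tension is exactly the main obstacle: iNUB requires that on the item $j \in S_i^*(\mathbf{v}) \setminus S_i(\mathbf{b})$ (the item bidder $i$ gets in the optimal allocation but not in equilibrium), bidder $i$ bids at least $v_i(j \mid S_i(\mathbf{b}))$; for unit-demand bidders with $S_i(\mathbf{b})$ already containing an item of value $\ge v_i(j)$, this marginal value is $0$, so iNUB is automatically satisfied whenever each bidder wins \emph{some} item in equilibrium that is at least as valuable to her as the optimal item she misses. This is the key trick to resolve the apparent conflict: make the equilibrium allocation give bidder~1 item~$y$ and bidder~2 item~$x$, and choose values so that $v_1(y) \ge v_1(x)$... but that contradicts $v_1(x)$ being the high value. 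So instead the resolution must be that the equilibrium allocation is \emph{identical} to a welfare-maximizing allocation in which the roles are swapped — i.e., there are two welfare-maximizing allocations only if $H=L$. Since we need a strict welfare gap, the honest route is: pick $S_i^*(\mathbf{v})$ to be the good allocation ($1\mapsto x$, $2\mapsto y$), let equilibrium give the bad one ($1\mapsto y$, $2\mapsto x$), and verify iNUB on item $x$ for bidder~1: need $b_{1x} \ge v_1(x\mid\{y\}) = v_1(\{x,y\}) - v_1(y) = H - L$ (unit-demand: $v_1(\{x,y\})=\max(H,L)=H$). So bidder~1 must bid at least $H-L$ on $x$; symmetrically bidder~2 bids at least $H-L$ on $y$. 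For this to be a Nash equilibrium, bidder~1's deviation to grab $x$ (winning it and paying $b_{2x}$) must be unprofitable: $v_1(x) - b_{2x} \le v_1(y) - b_{1y}$, i.e. $H - b_{2x} \le L - b_{1y}$.

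To make this concrete I would choose, say, $H = 2$, $L = 1$, bid profile $b_{1x} = b_{2y} = 1$ and $b_{1y} = b_{2x} = 1$: then iNUB on $x$ for bidder~1 needs $b_{1x} = 1 \ge H - L = 1$ — tight, OK; the Nash condition for bidder~1 is $H - b_{2x} = 2 - 1 = 1 \le L - b_{1y} = 1 - 1 = 0$, which \emph{fails}. The fix is to raise $b_{1y}$ to $0$... it already is $1$; so I should instead lower bidder~1's utility from winning $y$ is $1-1=0$, and from winning $x$ it is $2-1=1>0$, so bidder~1 strictly prefers to deviate — the profile in Example~\ref{ex:ud-poa-2} with $b_{1y}=b_{2x}=1$, $b_{1x}=b_{2y}=0$ is the Nash equilibrium but violates iNUB. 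The genuine construction therefore needs the low value \emph{larger}: take $v_1(x)=v_2(y)=1+\epsilon$ close to... no — to get ratio exactly $1/2$ I want $L/H \to 0$, contradicting $b_{1x}\ge H-L \approx H$ together with the NE inequality $H - b_{2x} \le L - b_{1y} \le L$. That would force $b_{2x} \ge H - L$, consistent, but also $b_{1y} \le L - (H - b_{2x})$; if $b_{2x} = H - L$ then $b_{1y} \le 2L - H$, which is negative once $H > 2L$, so we need $H \le 2L$ — and the welfare ratio is $\frac{2L}{H+L}$, minimized at $H = 2L$ giving exactly $\frac{2L}{3L} = \tfrac23$, \emph{not} $\tfrac12$. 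Hence the $1/2$-tight example cannot be two items; I would use $m$ items with one high-value item and $m-1$ low-value ones so the optimum collects the high value \emph{plus} the lows while equilibrium strands the high-value bidder on a single low item — scaling $m\to\infty$ (or picking $m$ large and values suitably) drives the ratio to $1/2$. Concretely: bidders $1,2$; item $x$ with $v_1(x)=v_2(x)=2$; items $y_1,\dots,y_k$ each of value... this still caps at $2/3$ per the same algebra. The actual resolution, which I would track down carefully, is to use \emph{more bidders}: $n$ bidders competing so that second-price payments are high enough to make iNUB vacuous (marginal $0$) for the loser while the NE deviation is blocked by a high competing bid — the classic $1/2$ lower-bound gadget for S2PA (two bidders, one item, values $1$ and $1-\epsilon$, extended appropriately). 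I expect the main obstacle to be precisely this reconciliation: arranging the numbers so that iNUB holds with the welfare ratio pinned at exactly $1/2$ rather than $2/3$; once the right gadget (likely a single-item-per-bidder matching market with a shifted optimal allocation, or the standard $n$-bidder extension) is identified, verifying the PNE conditions, NOB (which is not required here but harmless), and iNUB is a routine finite check.
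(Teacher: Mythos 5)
There is a genuine gap: you never produce a concrete instance, and the algebra by which you rule out a two-bidder, two-item construction (concluding the ratio "caps at $2/3$") contains two errors that send you down the wrong path. First, in the deviation analysis for bidder~1, the payment for winning item $y$ in a second-price auction is the \emph{opponent's} bid $b_{2y}$, not her own bid $b_{1y}$; the Nash condition is $H - b_{2x} \le L - b_{2y}$, and since iNUB places no upper bound on $b_{2x}$ (bidder~2 \emph{wins} $x$ in equilibrium, so $x \notin S_2^*(\mathbf{v})\setminus S_2(\mathbf{b})$, and NOB is not assumed in this proposition), $b_{2x}$ can be made arbitrarily large to block the deviation. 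Second, the optimal welfare in your symmetric gadget is $2H$ (each bidder gets her high-valued item), not $H+L$, so the equilibrium-to-optimal ratio is $2L/2H = L/H$, which already equals $1/2$ at $H=2L$ — even your (incorrect) constraint $H \le 2L$ would not have forced $2/3$.

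The paper's proof is exactly the two-item, two-bidder example you dismissed: $v_1(x)=v_2(y)=2$, $v_1(y)=v_2(x)=1$, with bids $b_{1x}=b_{2y}=1$ and $b_{1y}=b_{2x}=100$. Each bidder wins her low-valued item at price~$1$ for utility~$0$; grabbing the high-valued item would cost~$100 > 2$, so the profile is a PNE. iNUB holds with equality on the missed optimal item, since $b_{1x}=1 = v_1(x \mid \{y\}) = 2-1$, and symmetrically for bidder~2. The welfare is $2$ against $OPT=4$. The idea you were missing is that the equilibrium is sustained by massive \emph{overbids} on the items won — permissible precisely because the proposition does not impose NOB.
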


\begin{proof}
Consider an S2PA with two unit demand players 
and 2 items, $\set{x,y}$, where $v_1(x) = 2$, $v_1(y) = 1$, $v_2(x) = 1$ and $v_2(y) = 2$. 
An optimal allocation gives item $x$ to player $1$ and item $y$ to player $2$, for a welfare of $4$.
Consider the following bid profile $\mathbf{b}$: 
$b_{1x} = 1$, $b_{1y} = 100$, $b_{2x} = 100$ and $b_{2y} = 1$. Player $1$ wins item $y$ for a price of $1$,  and player $2$ wins item $x$ for a price of $1$. It is easy to see that $\mathbf{b}$ is a PNE that satisfies iNUB. 
The social welfare of this equilibrium is $2$, which is $\frac{1}{2}$ of the optimal social welfare.
\end{proof}

We next show that for submodular valuations, iNUB implies sNUB. 

\begin{proposition}
\label{prop:sm_inub_snub}
For every SM valuation $\mathbf{v}$, every bid profile $\mathbf{b}$ that satisfies iNUB also satisfies sNUB.
\end{proposition}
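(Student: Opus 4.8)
The plan is to unfold both definitions and reduce the claim to a pointwise submodularity inequality. Fix a valuation profile $\mathbf{v}$ with every $v_i$ submodular, and fix a bid profile $\mathbf{b}$ that satisfies iNUB. By Definition~\ref{def:iNUB} there is a welfare-maximizing allocation $S^*(\mathbf{v})$ such that for every player $i$ and every item $j \in S^*_i(\mathbf{v})\setminus S_i(\mathbf{b})$ we have $b_{ij} \ge v_i(j \mid S_i(\mathbf{b}))$. I want to show that this same allocation $S^*(\mathbf{v})$ witnesses sNUB, i.e.\ that for every player $i$, writing $S' = S^*_i(\mathbf{v})\setminus S_i(\mathbf{b})$, it holds that $\sum_{j \in S'} b_{ij} \ge v_i(S' \mid S_i(\mathbf{b}))$.

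The key step is to sum the iNUB inequalities over $j \in S'$ and then invoke submodularity. Summing, $\sum_{j \in S'} b_{ij} \ge \sum_{j \in S'} v_i(j \mid S_i(\mathbf{b}))$. Now I apply Lemma~\ref{lem:alpha_sm_prop} with $\alpha = 1$ (a submodular valuation is $1$-SM), taking the two sets in that lemma to be $S_i(\mathbf{b})$ and $S'$, which gives $\sum_{j \in S'} v_i(j \mid S_i(\mathbf{b})) \ge v_i(S' \mid S_i(\mathbf{b}))$. Chaining the two inequalities yields exactly the sNUB condition for player $i$, and since $i$ was arbitrary, $\mathbf{b}$ satisfies sNUB. (Equivalently one could give the one-line telescoping argument directly: order $S' = \{j_1,\dots,j_k\}$, use submodularity $v_i(j_t \mid S_i(\mathbf{b})) \ge v_i(j_t \mid S_i(\mathbf{b}) \cup \{j_1,\dots,j_{t-1}\})$ for each $t$, and telescope the right-hand side to $v_i(S' \mid S_i(\mathbf{b}))$.)

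There is essentially no obstacle here beyond bookkeeping: the only subtlety is making sure the \emph{same} optimal allocation is used for both conditions, which is immediate since we simply carry over the allocation guaranteed by iNUB, and that Lemma~\ref{lem:alpha_sm_prop} applies to submodular valuations because $\text{SM} = 1\text{-SM}$ by the remark following Definition~\ref{def:alpha_sm}. So the proof is short. A concrete write-up:

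\begin{proof}
Let $\mathbf{v}$ be a submodular valuation profile and let $\mathbf{b}$ be a bid profile satisfying iNUB, witnessed by the welfare-maximizing allocation $S^*(\mathbf{v})$. We claim the same allocation witnesses sNUB. Fix a player $i$ and let $S' = S^*_i(\mathbf{v})\setminus S_i(\mathbf{b})$. By iNUB, $b_{ij} \ge v_i(j \mid S_i(\mathbf{b}))$ for every $j \in S'$, so
\[
\sum_{j \in S'} b_{ij} \ \ge\ \sum_{j \in S'} v_i(j \mid S_i(\mathbf{b})).
\]
Since $v_i$ is submodular, it is $1$-SM, so Lemma~\ref{lem:alpha_sm_prop} (with $\alpha = 1$, $S = S_i(\mathbf{b})$, $S^{'} = S'$) gives $\sum_{j \in S'} v_i(j \mid S_i(\mathbf{b})) \ge v_i(S' \mid S_i(\mathbf{b}))$. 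Combining, $\sum_{j \in S'} b_{ij} \ge v_i(S' \mid S_i(\mathbf{b}))$, which is the sNUB condition for player $i$. As $i$ was arbitrary, $\mathbf{b}$ satisfies sNUB.
\end{proof}
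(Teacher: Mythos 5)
Your proof is correct and is essentially identical to the paper's: both sum the pointwise iNUB inequalities over $S' = S_i^*(\mathbf{v})\setminus S_i(\mathbf{b})$ and then apply Lemma~\ref{lem:alpha_sm_prop} with $\alpha=1$ to pass from the sum of marginals to the set marginal. Your explicit note that the same welfare-maximizing allocation witnesses both conditions is a small clarity improvement over the paper's write-up, but the argument is the same.
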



\begin{proof}
	By iNUB, for every item $j \in S_i^*(\mathbf{v})\setminus S_i(\mathbf{b})$, it holds that $b_{ij} \geq v_i(j \mid S_i(\mathbf{b}))$. 
	It follows that
	$$
	\sum_{j \in S_i^*(\mathbf{v})\setminus S_i(\mathbf{b})}b_{ij} \geq 
	\sum_{j \in S_i^*(\mathbf{v})\setminus S_i(\mathbf{b})}v_i(j \mid S_i(\mathbf{b})) 
	\ge v_i(S_i^*(\mathbf{v})\setminus S_i(\mathbf{b}) \mid S_i(\mathbf{b})),
	$$ 
	where the last inequality follows by Lemma \ref{lem:alpha_sm_prop} for $\alpha=1$. 
\end{proof}


Therefore, Theorem \ref{thm:mon_11_rev_guarant} and Corollary \ref{corr:mon_snub_bpoa_1_2} also apply to every SM valuation that satisfies iNUB. That is,

\begin{corollary}
\label{{cor:sm_inub_11_rev_guarant}}
		S2PA with submodular valuation functions is $(1,1)-$revenue guaranteed with respect to 
		bid profiles satisfying iNUB.
\end{corollary}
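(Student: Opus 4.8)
The plan is to derive this corollary as an immediate consequence of the two results that precede it in this section, without any fresh calculation. First I would recall that every submodular valuation is in particular monotone (this is part of the valuation hierarchy $UD \subset SM \subset XOS \subset SA \subset MON$ recorded in Section~\ref{subsec:val_class}), so Theorem~\ref{thm:mon_11_rev_guarant} applies verbatim to markets with submodular valuations: S2PA with submodular valuations is $(1,1)$-revenue guaranteed with respect to any bid profile satisfying sNUB.

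The only gap to close is that the corollary is stated for bid profiles satisfying \emph{iNUB} rather than sNUB. Here I would invoke Proposition~\ref{prop:sm_inub_snub}, which asserts exactly that for submodular valuations every bid profile satisfying iNUB also satisfies sNUB. Combining the two: fix a submodular valuation profile $\mathbf{v}$ and a bid profile $\mathbf{b}$ satisfying iNUB; by Proposition~\ref{prop:sm_inub_snub}, $\mathbf{b}$ satisfies sNUB; by Theorem~\ref{thm:mon_11_rev_guarant}, the revenue $\sum_{i}\sum_{j\in S_i(\mathbf{b})} p_j(\mathbf{b})$ is at least $OPT(\mathbf{v}) - SW(\mathbf{b},\mathbf{v})$, which is precisely the $(1,1)$-revenue guarantee.

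There is essentially no obstacle here — the statement is a corollary in the strict sense, obtained by chaining an inclusion of valuation classes with the implication iNUB $\Rightarrow$ sNUB for submodular valuations. If one wanted a fully self-contained argument one could instead repeat the short chain of inequalities from the proof of Theorem~\ref{thm:sm_rg} specialized to $\alpha=1$ (Lemma~\ref{lem:s2pa_rev_bids}, then iNUB, then Lemma~\ref{lem:alpha_sm_prop} with $\alpha=1$, then monotonicity), but invoking the already-proved Theorem~\ref{thm:mon_11_rev_guarant} and Proposition~\ref{prop:sm_inub_snub} is cleaner and is the route I would take.
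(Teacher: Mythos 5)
Your proposal is correct and matches the paper's own derivation exactly: the corollary is obtained by combining Proposition~\ref{prop:sm_inub_snub} (iNUB implies sNUB for submodular valuations) with Theorem~\ref{thm:mon_11_rev_guarant} (the $(1,1)$-revenue guarantee under sNUB for monotone, hence in particular submodular, valuations). Nothing further is needed.
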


\begin{corollary}
\label{corr:sm_inub_bpoa_1_2}
In an S2PA with submodular valuations, 
for every 
joint distribution $\mathcal{F} \in \Delta(\mathcal{V}_1 \times \ldots \times \mathcal{V}_n)$, possibly correlated, 
every mixed Bayes Nash equilibrium that satisfies iNUB has expected social welfare at least $\frac{1}{2}$ of the expected optimal social welfare.
\end{corollary}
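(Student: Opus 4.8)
The plan is to obtain this corollary by chaining three results already established: Proposition~\ref{prop:sm_inub_snub} (for submodular valuations, iNUB implies sNUB), Theorem~\ref{thm:mon_11_rev_guarant} (S2PA with monotone valuations is $(1,1)$-revenue guaranteed with respect to sNUB bid profiles), and the incomplete-information extension theorem, Theorem~\ref{thm:rev_guarant_poa_incopm_i} (a $(\gamma,\delta)$-revenue guaranteed auction has expected social welfare at least $\tfrac{\gamma}{1+\delta}$ of the expected optimum for every, possibly correlated, joint distribution, provided the expected sum of player utilities is non-negative). Submodular valuations are in particular monotone, so all three ingredients apply here.

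Concretely, I would proceed as follows. First, fix a joint distribution $\mathcal{F}$ and a mixed Bayes Nash equilibrium $\sigma$ that satisfies iNUB. By Proposition~\ref{prop:sm_inub_snub}, applied pointwise to each realized valuation profile $\mathbf{v}$ and each bid profile in the support of $\sigma(\mathbf{v})$, the strategy profile $\sigma$ also satisfies sNUB; equivalently, $\sigma$ takes values in $\Delta(\mathcal{B}')$ where $\mathcal{B}'$ is the sNUB bid space. Second, by Theorem~\ref{thm:mon_11_rev_guarant}, S2PA with (submodular, hence) monotone valuations is $(1,1)$-revenue guaranteed with respect to $\mathcal{B}'$. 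Third, I would note that at any mixed Bayes Nash equilibrium the expected utility of each player is non-negative (deviating to the all-zero bid yields non-negative utility, so the equilibrium inequality forces $\mathbb{E}[u_i]\ge 0$), hence the expected sum of utilities is non-negative. Then Theorem~\ref{thm:rev_guarant_poa_incopm_i} with $\gamma=\delta=1$ gives expected social welfare at least $\tfrac{1}{1+1}=\tfrac{1}{2}$ of the expected optimal social welfare.

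There is essentially no hard step: the argument is a bookkeeping composition of earlier facts. The only points needing a moment's care are the pointwise translation of the iNUB/sNUB conditions from individual bid profiles to (the support of) mixed strategy profiles, and the standard observation that expected utilities are non-negative in an MBNE; neither of these is a genuine obstacle, and both are routine. (One could equivalently shortcut the first step by citing the corollary stating that S2PA with submodular valuations is $(1,1)$-revenue guaranteed directly with respect to iNUB bid profiles, and then invoke Theorem~\ref{thm:rev_guarant_poa_incopm_i}.)
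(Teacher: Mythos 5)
Your proposal is correct and follows exactly the paper's own route: Proposition~\ref{prop:sm_inub_snub} to convert iNUB into sNUB for submodular valuations, Theorem~\ref{thm:mon_11_rev_guarant} for the $(1,1)$-revenue guarantee, and the extension theorem (Theorem~\ref{thm:rev_guarant_poa_incopm_i}) together with the standard non-negativity of expected equilibrium utilities to conclude the $\tfrac{1}{2}$ bound. The parenthetical shortcut you mention is likewise what the paper records as Corollary~\ref{{cor:sm_inub_11_rev_guarant}}, so there is nothing to add.
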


For bid profiles that satisfy both iNUB and NOB, we get a better bound as a direct corollary from Theorems
\ref{thm:sm_rg},  
\ref{thm:alpha_sm_smooth}, and
\ref{thm:smooth_rev_guarant_poa_incomp}.

\begin{corollary}
\label{cor:sm_smooth_rg_poa}
In an S2PA with $\alpha-$SM valuations,
for 
every product distribution $\mathcal{F}$, 
every mixed Bayes Nash equilibrium that satisfies both NOB and iNUB
has expected social welfare at least $\frac{2 \alpha}{2+\alpha}$ of the expected optimal social welfare. In particular, for SM valuations (where $\alpha=1$) this amounts to at least $\frac{2}{3}$ of the expected optimal social welfare, and this is tight. 
\end{corollary}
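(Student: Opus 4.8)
The plan is to obtain the statement by directly composing three results already in hand: the $(\alpha,\alpha)$-revenue-guarantee of Theorem~\ref{thm:sm_rg}, the $(\alpha,1)$-smoothness of Theorem~\ref{thm:alpha_sm_smooth}, and the incomplete-information composition bound of Theorem~\ref{thm:smooth_rev_guarant_poa_incomp}. Essentially no new argument is needed; the work is just identifying the right bid spaces and plugging in parameters.

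First I would fix the two bid spaces. Let $\mathcal{B}'$ denote the set of bid profiles satisfying NOB and $\mathcal{B}''$ the set of bid profiles satisfying iNUB (with respect to the given valuation profile). By Theorem~\ref{thm:alpha_sm_smooth}, an S2PA with $\alpha$-SM valuations is $(\alpha,1)$-smooth with respect to $\mathcal{B}'$, and by Theorem~\ref{thm:sm_rg} it is $(\alpha,\alpha)$-revenue guaranteed with respect to $\mathcal{B}''$. A mixed Bayes Nash equilibrium $\sigma$ that satisfies \emph{both} NOB and iNUB is exactly a strategy profile $\sigma:\mathcal{V}_1\times\cdots\times\mathcal{V}_n\to\Delta(\mathcal{B}'\cap\mathcal{B}'')$, so the hypotheses of Theorem~\ref{thm:smooth_rev_guarant_poa_incomp} are met with $\lambda=\alpha$, $\mu=1$, $\gamma=\alpha$, $\delta=\alpha$. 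Applying that theorem, for every product distribution $\mathcal{F}$ the expected social welfare of $\sigma$ is at least
\[
\frac{\lambda+\gamma}{1+\mu+\delta}=\frac{2\alpha}{2+\alpha}
\]
of the expected optimal social welfare; setting $\alpha=1$ yields the $\frac{2}{3}$ bound for SM valuations.

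For tightness I would invoke Example~\ref{ex:ud-poa-2-3}, which gives a full-information market with unit-demand (hence SM, hence $1$-SM) valuations and a PNE adhering to both NOB and iNUB whose welfare is exactly $2/3$ of the optimum. Since a PNE in a full-information instance is a degenerate mixed Bayes Nash equilibrium under a point-mass product distribution, no bound larger than $2/3$ can hold in the stated generality, so the corollary is tight even with respect to PNE and even for unit-demand valuations.

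There is no real obstacle here, but a careful write-up should check two bookkeeping points: (i) that "satisfies both NOB and iNUB" is literally the intersection bid space $\mathcal{B}'\cap\mathcal{B}''$ required by Theorem~\ref{thm:smooth_rev_guarant_poa_incomp}; and (ii) that the parameters $(\gamma,\delta)=(\alpha,\alpha)$ satisfy the constraint $0\le\gamma\le\delta\le 1$ of Definition~\ref{def:rev_guarant}, which holds since $0<\alpha\le 1$. One may also note in passing that the two source theorems may refer to different welfare-maximizing allocations (iNUB fixes one, the smoothness deviation uses its own), but this is harmless since both conclusions are expressed purely in terms of $OPT(\mathbf{v})$ and $SW(\mathbf{b},\mathbf{v})$ and Theorem~\ref{thm:smooth_rev_guarant_poa_incomp} merely adds the two resulting inequalities.
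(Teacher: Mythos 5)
Your proposal is correct and is exactly the paper's own derivation: the corollary is stated there as an immediate consequence of Theorems~\ref{thm:sm_rg}, \ref{thm:alpha_sm_smooth}, and \ref{thm:smooth_rev_guarant_poa_incomp}, with tightness via Example~\ref{ex:ud-poa-2-3}. Your bookkeeping remarks (intersection of bid spaces, the constraint $\gamma\le\delta\le 1$) are sensible and raise no issues.
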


The bound of $2/3$ for submodular valuations is tight even with respect to a PNE with unit-demand valuations. This is shown in Example \ref{ex:ud-poa-2-3} in Section \ref{sec:intro}.


A remark about existence of pure NE is in order. In the next section, we show that every S2PA with XOS valuations admits a pure NE that satisfies both NOB and sNUB (Theorem \ref{thm:pne_xos}). Since every submodular valuation is XOS, the existence result applies also to submodular valuations. 

\section{S2PA with XOS Valuations}
\label{sec:s2pa_xos}
\subsection{XOS Valuations under iNUB}
\label{sec:s2pa_xos_inub}

For XOS valuations, iNUB does not imply sNUB, thus iNUB does not lead automatically to PoA lower bounds. 
Indeed, Example~\ref{ex:xos_inub_2_m_poa} shows an instance of S2PA with XOS valuations that admits a PNE bid profile that satisfies iNUB, with social welfare that is only a $\frac{2}{m}$ fraction of the optimal social welfare. 


\begin{example}
\label{ex:xos_inub_2_m_poa}
Consider an S2PA with two players, with the following XOS valuation functions $v_1$ and $v_2$, respectively, over $m$ items:
\begin{eqnarray*}
v_1(S) &=& max \set{a^1(S), a^2(S)}, \\
v_2(S) &=& max \set{a^3(S), a^4(S)}
\end{eqnarray*}

where:
\begin{eqnarray*}
a^1 
&=& 
(a^1_1, a^1_2, \ldots, a^1_m) 
=
(2,2,0,0,0,0, \ldots, 0), \\
a^2 
&=& 
(a^2_1, a^2_2, \ldots, a^2_m) 
=
(0,0,1,1,0,0, \ldots, 0), \\
a^3 
&=& 
(a^3_1, a^3_2, \ldots, a^3_m) 
=
(0,0,2,2,2,2, \ldots, 2), \\
a^4 
&=& 
(a^4_1, a^4_2, \ldots, a^4_m) 
=
(1,1,0,0,0,0, \ldots, 0).
\end{eqnarray*}
The optimal allocation has welfare $2m$, giving the first two items to player $1$ and the last $m-2$ items to player $2$.
Consider the following bid profile $\mathbf{b} = (b_1,b_2)$, where:
\begin{eqnarray*}
b_1 &=& (0,0,2,2,2,2, \ldots, 2) \\
b_2 &=& (2,2,0,0,0,0, \ldots, 0)
\end{eqnarray*}
Player $2$ wins the first two items and player $1$ wins the last $m-2$ items.
It is easy to see that $\mathbf{b}$ is an equilibrium which satisfies iNUB. 
The social welfare of this equilibrium is $4$, which is $\frac{2}{m}$ of the optimal social welfare.
\end{example}

\subsection{XOS Valuations under sNUB}
\label{sec:s2pa_xos_snub}
As Theorem \ref{thm:mon_11_rev_guarant} and Corollary \ref{corr:mon_snub_bpoa_1_2} apply for arbitrary monotone valuation functions, the PoA is at least $\frac{1}{2}$ with respect to bid profiles satisfying sNUB. 
An immediate corollary from Theorems 
\ref{thm:xos_smooth},
\ref{thm:mon_11_rev_guarant} and
\ref{thm:smooth_rev_guarant_poa_incomp}  is:

\begin{corollary}
\label{cor:xos_smooth_rg_poa_2_3}
In an S2PA with XOS valuations,
for 
every product distribution $\mathcal{F}$, 
every mixed Bayes Nash equilibrium that satisfies both NOB and sNUB has expected social welfare at least $\frac{2}{3}$ of the expected optimal social welfare.
\end{corollary}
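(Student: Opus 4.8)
The plan is to obtain the bound by a direct substitution of parameters into the composition theorem for auctions that are simultaneously smooth and revenue guaranteed, using the two ingredients already established for S2PA with XOS valuations.

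First I would record the two underlying properties. Since $\xos \subset \sa \subset \mon$, an XOS valuation is in particular monotone, so Theorem~\ref{thm:mon_11_rev_guarant} applies verbatim: S2PA with XOS valuations is $(1,1)$-revenue guaranteed with respect to the bid space $\mathcal{B}''$ consisting of all bid profiles that satisfy sNUB (for the given valuation profile). In parallel, Theorem~\ref{thm:xos_smooth} states that S2PA with XOS valuations is $(1,1)$-smooth with respect to the bid space $\mathcal{B}'$ consisting of all bid profiles that satisfy NOB.

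Next I would invoke Theorem~\ref{thm:smooth_rev_guarant_poa_incomp} with $\lambda=\mu=\gamma=\delta=1$, taking $\mathcal{B}'$ to be the NOB bid profiles and $\mathcal{B}''$ to be the sNUB bid profiles. A mixed Bayes Nash equilibrium $\sigma$ that satisfies both NOB and sNUB is, by definition, a strategy profile $\sigma:\mathcal{V}_1\times\cdots\times\mathcal{V}_n\to\Delta(\mathcal{B}'\cap\mathcal{B}'')$, so the hypothesis of Theorem~\ref{thm:smooth_rev_guarant_poa_incomp} is met. It follows that for every product distribution $\mathcal{F}$ the expected social welfare of $\sigma$ is at least $\frac{\lambda+\gamma}{1+\mu+\delta}=\frac{2}{3}$ of the expected optimal social welfare, which is exactly the claim.

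There is no genuine obstacle here — this is an immediate corollary — so the only thing to be careful about is bookkeeping: one must check that the smoothness and revenue-guaranteed properties are stated relative to bid spaces whose intersection is precisely the set of profiles satisfying both NOB and sNUB, and that ``mixed Bayes Nash equilibrium satisfying both NOB and sNUB'' is read as requiring every profile in the support of $\sigma(\mathbf{v})$ to satisfy both conditions, which is what legitimately places $\sigma$ in $\Delta(\mathcal{B}'\cap\mathcal{B}'')$. With that interpretation fixed, Theorem~\ref{thm:smooth_rev_guarant_poa_incomp} closes the argument with no further computation.
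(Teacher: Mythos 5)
Your proof is correct and is exactly the paper's argument: the paper derives this corollary immediately from Theorem~\ref{thm:xos_smooth} ($(1,1)$-smoothness under NOB), Theorem~\ref{thm:mon_11_rev_guarant} ($(1,1)$-revenue guaranteed under sNUB, applicable since XOS valuations are monotone), and the composition Theorem~\ref{thm:smooth_rev_guarant_poa_incomp}, yielding $\frac{1+1}{1+1+1}=\frac{2}{3}$. Your bookkeeping remarks about the bid spaces and the support of $\sigma$ match the intended reading.
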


As in the case of submodular valuations, this result is tight (see Example \ref{ex:ud-poa-2-3}).



As for existence of equilibria, we next show that every S2PA with XOS valuations admits a pure NE that satisfies both NOB and sNUB. 
 
\begin{theorem}
\label{thm:pne_xos}
In S2PA with XOS valuations there always exists at least one pure Nash equilibrium that satisfies both NOB and sNUB.
\end{theorem}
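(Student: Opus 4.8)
The plan is to exhibit an explicit pure Nash equilibrium built from a welfare‑optimal allocation, so that NOB and sNUB become essentially automatic, and the only real work is the equilibrium verification. First I would fix a welfare‑maximizing allocation $S^*(\mathbf{v}) = (S^*_1(\mathbf{v}),\dots,S^*_n(\mathbf{v}))$, and for each player $i$ use the XOS structure of $v_i$ to choose an additive clause $a_i = (a_{i1},\dots,a_{im})$ that is tight at $S^*_i(\mathbf{v})$, i.e.\ $a_i(S^*_i(\mathbf{v})) = v_i(S^*_i(\mathbf{v}))$ and $a_i(S) \le v_i(S)$ for every $S$ (take the clause attaining the maximum defining $v_i$ at $S^*_i(\mathbf{v})$; since valuations, and hence the clauses, are nonnegative, we may assume $a_{ij}\ge 0$). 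Define item prices $p_j := a_{ij}$ for the unique $i$ with $j \in S^*_i(\mathbf{v})$, and define the bid profile $\mathbf{b}$ by $b_{ij} = p_j$ for $j \in S^*_i(\mathbf{v})$ and $b_{ij} = 0$ otherwise. Then on each item $j$ only its optimal owner bids positively (modulo zero‑price items, handled by an infinitesimal perturbation or by favorable tie‑breaking, which is without loss), so $S_i(\mathbf{b}) = S^*_i(\mathbf{v})$ and every item is won at second price $0$.

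Granting this, NOB and sNUB are immediate: NOB holds with equality since $\sum_{j \in S_i(\mathbf{b})} b_{ij} = \sum_{j \in S^*_i(\mathbf{v})} a_{ij} = a_i(S^*_i(\mathbf{v})) = v_i(S^*_i(\mathbf{v}))$; and sNUB holds with the same witnessing allocation $S^*(\mathbf{v})$, because $S' = S^*_i(\mathbf{v}) \setminus S_i(\mathbf{b}) = \emptyset$, so the defining inequality is just $0 \ge v_i(\emptyset \mid S_i(\mathbf{b})) = 0$. So the whole content of the theorem is that $\mathbf{b}$ is a PNE.

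For the equilibrium check, consider any deviation of player $k$. Since nobody else bids positively on the items of $S^*_k(\mathbf{v})$, player $k$ keeps winning all of $S^*_k(\mathbf{v})$ at price $0$, and to additionally win a set $T \subseteq [m] \setminus S^*_k(\mathbf{v})$ she must outbid the current owners and therefore pays exactly $\sum_{j \in T} p_j$; monotonicity rules out any gain from deviations that merely \emph{drop} owned items. Hence the best deviation utility is $\max_{T \subseteq [m]\setminus S^*_k(\mathbf{v})}\bigl(v_k(S^*_k(\mathbf{v}) \cup T) - \sum_{j \in T} p_j\bigr)$, and it suffices to show this never exceeds the current utility $v_k(S^*_k(\mathbf{v}))$. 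This is the crux, and it is where optimality of $S^*(\mathbf{v})$ is used, through an exchange argument: if $v_k(S^*_k(\mathbf{v}) \cup T) - \sum_{j \in T} p_j > v_k(S^*_k(\mathbf{v}))$ for some $T$, reallocate $T$ to player $k$ and strip each $j \in T$ from its previous owner $l = i(j)$; by additivity of $a_l$ and $a_l \le v_l$, owner $l$ loses at most $\sum_{j \in T \cap S^*_l(\mathbf{v})} a_{lj}$ in value, so the total welfare loss over all stripped owners is at most $\sum_{j \in T} p_j$, while player $k$'s gain is $v_k(S^*_k(\mathbf{v}) \cup T) - v_k(S^*_k(\mathbf{v})) > \sum_{j \in T} p_j$, a net strict increase in social welfare --- contradicting optimality of $S^*(\mathbf{v})$.

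I expect the exchange argument of the last paragraph (essentially the Christodoulou--Kov\'acs--Schapira construction for XOS valuations) to be the main obstacle; everything else is bookkeeping. The only additional points that warrant care are: (a) legitimacy of the bids/payments, which needs the clauses to be nonnegative — fine here since valuations are nonnegative; and (b) the tie‑breaking subtlety for items $j \in S^*_i(\mathbf{v})$ with $p_j = a_{ij} = 0$, where I would either argue such items can be dropped from $S^*_i(\mathbf{v})$ without changing $v_i(S^*_i(\mathbf{v}))$ (since $a_i$ is already tight without them) and hence reassigned, or perturb $b_{ij}$ to $p_j + \varepsilon$ for small $\varepsilon>0$ and note NOB and the equilibrium inequalities survive in the limit $\varepsilon \to 0$.
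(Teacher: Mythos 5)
Your proposal is correct and uses essentially the same construction as the paper: bid the tight XOS clause on one's optimal bundle and zero elsewhere, observe $S_i(\mathbf{b})=S^*_i(\mathbf{v})$ so that NOB holds with equality and sNUB holds vacuously since $S'=\emptyset$. The only difference is that you carry out the equilibrium verification (the exchange argument) and the zero-bid tie-breaking discussion explicitly, whereas the paper delegates the ``PNE satisfying NOB'' part to \citet{CKS16}.
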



\begin{proof}
\citet{CKS16} showed that every S2PA with XOS valuations admits a PNE satisfying NOB. We show that the same PNE satisfies sNUB as well.
Let $S^*(\mathbf{v}) = (S_1^*(\mathbf{v}), \ldots, S_n^*(\mathbf{v}))$ be a welfare maximizing allocation, and let $a_i^*$ be an additive valuation such that $v_i(S_i^*(\mathbf{v})) = \sum_{j \in S_i^*(\mathbf{v})} a^*_{ij}$. Consider the bid profile in which every player bids according to the maximizing additive valuation with
respect to her set $S_i^*(\mathbf{v})$, i.e., $b_{ij} = a^*_{ij}$ for every $j \in S_i^*(\mathbf{v})$ and $b_{ij} = 0$ otherwise. One can easily verify that this bid profile is a PNE that satisfies NOB. 
It thus remains to show that it also satisfies sNUB.
Recall that sNUB imposes restrictions on the bid values of the set $S' = S_i^*(\mathbf{v}) \backslash S_i(\mathbf{b})$.
Under the above bid profile we have $S_i(\mathbf{b}) = S_i^*(\mathbf{v})$, i.e., $S' = \emptyset$ and sNUB holds trivially.
\end{proof}

\section{S2PA with Subadditive Valuations}
\label{sec:s2pa_sa}
Recall that S2PA with arbitrary monotone valuations is $(1,1)-$revenue guaranteed with respect to bid profiles that satisfy sNUB (Theorem \ref{thm:mon_11_rev_guarant}). 
Hence, the Bayesian PoA for equilibria satisfying sNUB is at least $\frac{1}{2}$ and this bound is tight (Proposition \ref{prop:sm_wnub_poa_eq_1_2}). 


For subadditive valuations, \citet{BR11} showed that the social welfare of any PNE (if it exists) satisfying \emph{strong} NOB is at least $\frac{1}{2}$ of the optimal social welfare.\footnote{
In the \emph{strong} no-overbidding assumption
the sum of bids on \emph{any} set of items does not exceed the value of that set, i.e., for every $i$ and every subset $S \subseteq [m]$, $\sum_{j \in S} b_{ij} \le v_i(S)$. 
} They also showed that this bound is tight.
In their proof, they used the fact that the revenue is non-negative. Under our sNUB condition, the auction is $(1,1)-$revenue guaranteed, implying that the revenue is lower bounded by $OPT(\mathbf{v}) - SW(\mathbf{b},\mathbf{v})$. Plugging this lower bound into their proof, we get:

\begin{theorem}
	\label{thm:subadd_nob_snub_2_3}
	In an S2PA with subadditive valuations and at least one
	pure Nash equilibrium that satisfies both strong NOB and sNUB, the social welfare of such a PNE is at least $\frac{2}{3}$ of the expected optimal social welfare.
\end{theorem}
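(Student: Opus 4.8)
The plan is to adapt the argument of \citet{BR11} for the $1/2$ bound under strong NOB, replacing their use of the trivial lower bound ``revenue $\geq 0$'' with the stronger bound ``revenue $\geq \OPT(\mathbf{v}) - SW(\mathbf{b},\mathbf{v})$'' which is available to us because, by Theorem~\ref{thm:mon_11_rev_guarant}, S2PA with subadditive (in fact arbitrary monotone) valuations is $(1,1)$-revenue guaranteed with respect to bid profiles satisfying sNUB. Concretely, let $\mathbf{b}$ be a PNE satisfying both strong NOB and sNUB, and let $S^*(\mathbf{v})$ be the welfare-maximizing allocation witnessing sNUB. The first step is to recall the core deviation argument from \citet{BR11}: for each player $i$, consider the deviation in which $i$ bids so as to try to win exactly the items in $S^*_i(\mathbf{v})$ (e.g.\ bidding slightly above the current per-item prices $p_j(\mathbf{b})$ on those items, and $0$ elsewhere). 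Since $\mathbf{b}$ is a Nash equilibrium, $i$'s equilibrium utility is at least the utility of this deviation; summing over $i$ and using subadditivity together with strong NOB, one obtains an inequality of the form $SW(\mathbf{b},\mathbf{v}) \geq \OPT(\mathbf{v}) - \sum_i \sum_{j \in S^*_i(\mathbf{v})} p_j(\mathbf{b})$, i.e.\ $SW(\mathbf{b},\mathbf{v}) \geq \OPT(\mathbf{v}) - \mathcal{R}(\mathbf{b})$ after bounding the prices on the optimal items by the total revenue.

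The second step is where the improvement enters: instead of dropping the revenue term or bounding it only by the no-overbidding welfare, substitute the revenue-guaranteed inequality $\mathcal{R}(\mathbf{b}) = \sum_{i}\sum_{j \in S_i(\mathbf{b})} p_j(\mathbf{b}) \geq \OPT(\mathbf{v}) - SW(\mathbf{b},\mathbf{v})$ from Theorem~\ref{thm:mon_11_rev_guarant}. But one must also retain the original upper bound on the revenue in terms of the equilibrium welfare — namely, strong NOB gives $\mathcal{R}(\mathbf{b}) = \sum_i \sum_{j \in S_i(\mathbf{b})} p_j(\mathbf{b}) \leq \sum_i \sum_{j \in S_i(\mathbf{b})} b_{ij} \leq \sum_i v_i(S_i(\mathbf{b})) = SW(\mathbf{b},\mathbf{v})$, since the winner's payment on an item is at most her own bid on it. Combining the deviation inequality $SW(\mathbf{b},\mathbf{v}) \geq \OPT(\mathbf{v}) - \mathcal{R}(\mathbf{b})$ with $\mathcal{R}(\mathbf{b}) \leq SW(\mathbf{b},\mathbf{v})$ alone would only reproduce the $1/2$ bound; the point is that we have an additional handle. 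The cleanest route is: from the deviation bound, $\OPT(\mathbf{v}) \leq SW(\mathbf{b},\mathbf{v}) + \mathcal{R}(\mathbf{b})$, and now bound $\mathcal{R}(\mathbf{b})$ more carefully. We have two lower-type constraints tying the quantities together; solving the resulting linear system in the unknowns $SW(\mathbf{b},\mathbf{v})$, $\OPT(\mathbf{v})$, $\mathcal{R}(\mathbf{b})$ should pin down $SW(\mathbf{b},\mathbf{v})/\OPT(\mathbf{v}) \geq 2/3$.

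Let me be more precise about the linear algebra, since that is the crux. Normalize $\OPT(\mathbf{v}) = 1$ and write $w = SW(\mathbf{b},\mathbf{v})$, $r = \mathcal{R}(\mathbf{b})$. The deviation argument of \citet{BR11} together with subadditivity and strong NOB should yield $w + r \geq 1$ (this is their welfare-plus-revenue bound), or possibly $w + 2r \ge 1$ if prices on optimal items get double-counted — I would need to check which form their argument actually produces. Theorem~\ref{thm:mon_11_rev_guarant} gives $r \geq 1 - w$, which is the same as $w + r \ge 1$; and strong NOB gives $r \leq w$. If the deviation bound is $w + r \geq 1$ and we use $r \le w$, we only get $w \ge 1/2$. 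So the genuine improvement must come from a sharper deviation inequality in \citet{BR11} — likely $w + 2r \geq 1$, coming from charging each optimal item's value against the price a deviating player would pay (which is the item's current price $p_j$) while separately the revenue $r$ already collects those prices. Given $w + 2r \geq 1$ and $r \leq w$, we get $w + 2w \geq w + 2r \geq 1$, hence $w \geq 1/3$ — still not enough. So we instead combine $w + 2r \ge 1$ with the revenue-guarantee $r \ge 1 - w$: then $w + 2(1-w) \le w + 2r$, giving $2 - w \le w + 2r$; that is not directly it either. The correct combination is: $w + 2r \ge 1$ and additionally the revenue-guarantee rewritten, plus I suspect \citet{BR11}'s actual inequality after plugging in is $2w + r \ge 1$ or similar. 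The main obstacle — and the one step I would work out carefully by re-deriving the \citet{BR11} argument line by line — is identifying exactly which linear inequality their deviation argument produces once the revenue term is kept symbolic, and then checking that this inequality together with $\mathcal{R}(\mathbf{b}) \ge \OPT(\mathbf{v}) - SW(\mathbf{b},\mathbf{v})$ and $\mathcal{R}(\mathbf{b}) \le SW(\mathbf{b},\mathbf{v})$ forces the ratio up to $2/3$. Everything else (the deviation construction, invoking subadditivity, invoking strong NOB for the revenue upper bound, invoking sNUB to get the right welfare-maximizing allocation, and invoking Theorem~\ref{thm:mon_11_rev_guarant} for the revenue lower bound) is routine given the machinery already in the paper.
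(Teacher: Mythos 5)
Your overall strategy is the same as the paper's: rerun the argument of \citet{BR11} for strong NOB, keep the revenue term symbolic, and close the system with the lower bound $\mathcal{R}(\mathbf{b}) \ge OPT(\mathbf{v}) - SW(\mathbf{b},\mathbf{v})$ supplied by sNUB via Theorem~\ref{thm:mon_11_rev_guarant}. However, there is a genuine gap at exactly the point you flag as unresolved: you never identify the correct form of the \citet{BR11} inequality, and none of the candidates you test ($w+r\ge 1$, $w+2r\ge 1$, $2w+r\ge 1$ in your normalization $OPT=1$) yields $2/3$ under any of the combinations you try. The inequality that \citet{BR11} actually prove is a bound on the \emph{sum of utilities}, $\sum_{i} u_i(\mathbf{b},v_i) \ge OPT(\mathbf{v}) - SW(\mathbf{b},\mathbf{v})$. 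By quasi-linearity $\sum_i u_i = SW - \mathcal{R}$, so in your notation this reads $w - r \ge 1 - w$, i.e.\ $2w - r \ge 1$: the revenue enters with a \emph{negative} coefficient, so the equilibrium condition delivers an \emph{upper} bound $r \le 2w - 1$. Clashing this against the sNUB lower bound $r \ge 1 - w$ gives $1 - w \le 2w - 1$, hence $w \ge 2/3$. Your first-step inequality ``$SW \ge OPT - \mathcal{R}$'' (i.e.\ $w + r \ge 1$) is obtained by discarding the payments from the deviating players' utilities; it is literally the same constraint as the revenue guarantee $r \ge 1 - w$, which is why your two inequalities give you nothing new and you remain stuck at $1/2$.

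A secondary point: the constraint you lean on from strong NOB, $\mathcal{R}(\mathbf{b}) \le SW(\mathbf{b},\mathbf{v})$, plays no role in the paper's proof and cannot by itself break the $1/2$ barrier (as you yourself observe). Strong NOB is needed only \emph{inside} the \citet{BR11} deviation argument, to bound the prices the deviating player faces by the current welfare; once that argument outputs $\sum_i u_i \ge OPT - SW$, the rest is the three-line sandwich above. So the missing idea is not a cleverer linear combination but the correct reading of the \citet{BR11} bound as a statement about total utility rather than about welfare.
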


\begin{proof}
	\citet{BR11} showed that for subadditive valuations the following holds for any PNE, $\mathbf{b}$, satisfying strong NOB, if exists\footnote{This may look as a smoothness argument. However, while the hypothetical deviation considered in the smoothness proof depends on $\mathbf{v}$ but not on $\mathbf{b}$, the proof in \cite{BR11} invokes the Nash equilibrium hypothesis for player $i$ with an hypothetical deviation that depends on the bid vectors $\mathbf{b}_{-i}$ of the other players.}
	$$\sum_{i \in [n]} u_i(\mathbf{b},v_i) \ge OPT(\mathbf{v}) - SW(\mathbf{b}, \mathbf{v})$$
	Since we assume that $\mathbf{b}$ also satisfies sNUB, we have that:
	$$\sum_{i=1}^{n} \sum_{j \in S_i(\mathbf{b})} p_j(\mathbf{b})
	\geq
	OPT(\mathbf{v}) - 
	SW(\mathbf{b},\mathbf{v})
	$$
	As utilities are quasi-linear, we get,
	\begin{eqnarray*}
		\sum_{i \in [n]} u_i(\mathbf{b},v_i) & = &
		SW(\mathbf{b},\mathbf{v}) - \sum_{i=1}^{n} \sum_{j \in S_i(\mathbf{b})} p_j(\mathbf{b}) \\
		& \le &
		SW(\mathbf{b},\mathbf{v}) - \left[
		OPT(\mathbf{v}) - 
		SW(\mathbf{b},\mathbf{v}) 
		\right] \\
		& = & 
		2 \cdot SW(\mathbf{b},\mathbf{v}) - OPT(\mathbf{v})
	\end{eqnarray*}
	Putting it all together, we get:
	\begin{eqnarray*}
		2 \cdot SW(\mathbf{b},\mathbf{v}) - OPT(\mathbf{v})
		& \ge &
		\sum_{i \in [n]} u_i(\mathbf{b},v_i) \\
		& \ge &
		OPT(\mathbf{v}) - SW(\mathbf{b}, \mathbf{v})
	\end{eqnarray*}
	Rearranging, we get: $SW(\mathbf{b}, \mathbf{v}) \ge \frac{2}{3} OPT(\mathbf{v})$, as required.
\end{proof}

\citet{BR11} generalized the results from PNE to CCE and showed that in an S2PA with subadditive valuations  the social welfare of every CCE satisfying strong NOB is at least $\frac{1}{2}$ of the optimal social welfare. Adding the assumption that the CCE satisfies also sNUB, it is straightforward to show (in a similar manner to the proof of Theorem \ref{thm:subadd_nob_snub_2_3} ) that:

\begin{theorem}
\label{thm:subadd_nob_snub_cce_2_3}
	In an S2PA with subadditive valuations, the social welfare of every CCE satisfying both strong NOB and sNUB, is at least $\frac{2}{3}$ of the optimal social welfare.
\end{theorem}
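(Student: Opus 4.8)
The plan is to follow the same template as the proof of Theorem~\ref{thm:subadd_nob_snub_2_3}, but to invoke the CCE‑version of the \citet{BR11} argument rather than its PNE version. First I would recall that \citet{BR11} established, for subadditive valuations and every CCE $\mathbf{b}$ satisfying strong NOB, the randomized analogue of the welfare inequality used in the PNE case; concretely, taking expectations over the CCE distribution,
\[
\sum_{i \in [n]} \mathbb{E}_{\mathbf{b}}\!\left[u_i(\mathbf{b},v_i)\right] \ \ge\ OPT(\mathbf{v}) - \mathbb{E}_{\mathbf{b}}\!\left[SW(\mathbf{b},\mathbf{v})\right].
\]
This is exactly the statement that underlies their $\tfrac12$ bound for CCE, and it is obtained by invoking the CCE deviation condition for each player $i$ with the same (bid‑profile‑dependent) hypothetical deviation used in the PNE proof, then summing over $i$ and taking expectations.

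Next I would bring in the sNUB hypothesis. By Theorem~\ref{thm:mon_11_rev_guarant}, S2PA with subadditive (in particular monotone) valuations is $(1,1)$‑revenue guaranteed with respect to bid profiles satisfying sNUB; applying this pointwise to every $\mathbf{b}$ in the support of the CCE and taking expectations gives
\[
\sum_{i=1}^{n} \mathbb{E}_{\mathbf{b}}\!\left[\sum_{j \in S_i(\mathbf{b})} p_j(\mathbf{b})\right] \ \ge\ OPT(\mathbf{v}) - \mathbb{E}_{\mathbf{b}}\!\left[SW(\mathbf{b},\mathbf{v})\right].
\]
Then, using quasi‑linearity of utilities (so that the expected sum of utilities equals expected welfare minus expected revenue), the revenue lower bound above yields $\sum_i \mathbb{E}_{\mathbf{b}}[u_i(\mathbf{b},v_i)] \le 2\,\mathbb{E}_{\mathbf{b}}[SW(\mathbf{b},\mathbf{v})] - OPT(\mathbf{v})$. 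Combining this upper bound on the expected sum of utilities with the \citet{BR11} lower bound on the same quantity gives $2\,\mathbb{E}_{\mathbf{b}}[SW(\mathbf{b},\mathbf{v})] - OPT(\mathbf{v}) \ge OPT(\mathbf{v}) - \mathbb{E}_{\mathbf{b}}[SW(\mathbf{b},\mathbf{v})]$, and rearranging produces $\mathbb{E}_{\mathbf{b}}[SW(\mathbf{b},\mathbf{v})] \ge \tfrac{2}{3}\,OPT(\mathbf{v})$, as claimed.

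The only genuine subtlety — and the one place where care is needed — is that one cannot simply quote the earlier smoothness extension theorems, because (as the footnote in Theorem~\ref{thm:subadd_nob_snub_2_3} stresses) the \citet{BR11} deviation depends on $\mathbf{b}_{-i}$, so the welfare inequality is not a $(\lambda,\mu)$‑smoothness inequality and does not extend to CCE via Theorem~\ref{thm:smooth_poa_copm_i}. Instead one must rely on the fact that \citet{BR11} themselves proved the CCE analogue of the inequality directly (this is what gives their $\tfrac12$ CCE bound for strong NOB), and then observe that the rest of the manipulation — plugging in the $(1,1)$‑revenue guarantee and using quasi‑linearity — is exactly the expectation‑wrapped version of the computation in Theorem~\ref{thm:subadd_nob_snub_2_3}. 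Since every step after invoking the \citet{BR11} CCE inequality is linear in the relevant random quantities, linearity of expectation makes the passage routine, which is why the statement can be asserted as ``straightforward to show in a similar manner.''
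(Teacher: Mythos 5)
Your proposal is correct and follows exactly the route the paper takes: invoke the CCE version of the \citet{BR11} welfare inequality under strong NOB, combine it with the $(1,1)$-revenue guarantee from Theorem~\ref{thm:mon_11_rev_guarant} applied pointwise over the support of the CCE, and use quasi-linearity plus linearity of expectation to rearrange to the $\frac{2}{3}$ bound. Your remark that the \citet{BR11} deviation depends on $\mathbf{b}_{-i}$, so one must use their direct CCE argument rather than the smoothness extension theorem, is precisely the caveat the paper flags in its footnote.
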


Example \ref{ex:ud-poa-2-3} shows that the above bound is tight. 
%

\citet{FFGL13} proved that in an S2PA with independent subadditive valuations the expected social welfare of every mixed Bayes Nash equilibrium which satisfies NOB, is at least $\frac{1}{4}$ of the optimal social welfare. Adding the sNUB assumption to the bid profile and using the fact the auction is $(1,1)-$revenue guaranteed (as in the proof of Theorem \ref{thm:subadd_nob_snub_2_3}) improves the lower bound on the BPoA to $\frac{1}{2}$ for independent subadditive valuations with both NOB and sNUB. Recall that the same lower bound is obtained without the NOB or equilibrium assumptions (see Corollary \ref{corr:mon_snub_bpoa_1_2}). We conclude that NOB does not improve the BPoA bound in this case. 

\citet{BR11} gave an example of S2PA with subadditive valuations that does not have a PNE satisfying strong NOB.
\citet{FKL12} gave conditions on a valuation profile under which no conditional equilibrium exists, i.e. no S2PA PNE with NOB exists.
\citet{DHS18} proved that it is NP-hard to decide whether there exists a PNE in VCG mechanisms for agents with subadditive valuations and
additive bids. That is, a PNE in S2PA with subadditive valuations is not guaranteed to exist. 

In general, a mixed equilibrium may not exist in infinite games (as in our case of continuous valuations and continuous bids). However, we can approximate a continuous auction with a finite discretized version which is guaranteed to admit a mixed equilibrium by Nash's theorem. 
We refer the readers to the relevant discussion in \citet{FFGL13} and \citet{CP14}. 

Under the finite discretized version of the auction, a mixed Bayes Nash equilibrium is guaranteed to exist. 
It remains to show, however, that the space of bid profiles satisfying both sNUB and NOB is non-empty.

\begin{observation}
	Every S2PA with arbitrary monotone valuation functions admits a bid profile that satisfies both sNUB and NOB.
\end{observation}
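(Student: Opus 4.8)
The plan is to write down an explicit bid profile built from a welfare‑maximizing allocation, mimicking the ``bid your support'' idea used for \xos\ valuations in Theorem~\ref{thm:pne_xos} but replacing the missing additive support by an equal split of the bundle value. Fix a welfare‑maximizing allocation $S^*(\mathbf{v}) = (S_1^*(\mathbf{v}),\ldots,S_n^*(\mathbf{v}))$, which we may assume is a partition of $[m]$. I would first preprocess it: if some player $i$ has $v_i(S_i^*(\mathbf{v})) = 0$, reassign all such bundles to a fixed player $j_0$ with $v_{j_0}(S_{j_0}^*(\mathbf{v})) > 0$; by monotonicity this does not decrease the welfare, so the result is still welfare‑maximizing, and now every player either owns a bundle of strictly positive value or owns the empty bundle. (If no such $j_0$ exists then $OPT(\mathbf{v}) = 0$; since allocating all items to a single player already witnesses this, monotonicity forces every $v_i \equiv 0$, and then the all‑zero bid profile trivially satisfies NOB and sNUB.)

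Now define $\mathbf{b}$ by $b_{ij} = v_i(S_i^*(\mathbf{v}))/|S_i^*(\mathbf{v})|$ for $j \in S_i^*(\mathbf{v})$ and $b_{ij} = 0$ otherwise, with players owning the empty bundle bidding $0$ everywhere. Each item lies in exactly one optimal bundle, whose owner bids a strictly positive amount on it while everyone else bids $0$; hence there are no ties and $S_i(\mathbf{b}) = S_i^*(\mathbf{v})$ for every $i$. NOB is then immediate: $\sum_{j \in S_i(\mathbf{b})} b_{ij}$ equals $v_i(S_i^*(\mathbf{v})) = v_i(S_i(\mathbf{b}))$ for a nonempty bundle, and $0 \le v_i(\emptyset)$ otherwise. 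For sNUB, take the witnessing welfare‑maximizing allocation to be $S^*(\mathbf{v})$ itself: for every $i$ the relevant set $S' = S_i^*(\mathbf{v}) \setminus S_i(\mathbf{b}) = \emptyset$, so the required inequality is $0 \ge v_i(\emptyset \mid S_i(\mathbf{b})) = 0$.

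The only subtle point, and the reason for the preprocessing step, is the degenerate case of a player holding a zero‑value bundle in the optimal allocation. Such a player is forced by NOB to bid $0$ on each of those items, which creates ties on them; the outcome of these ties is beyond our control, and it could both spoil the clean identification $S_i(\mathbf{b}) = S_i^*(\mathbf{v})$ and violate sNUB (a zero‑value bundle need not remain zero‑value once unioned with whatever items that player ends up winning). Reassigning those bundles to a positive‑value player removes the obstruction without changing the welfare, after which the verification above is routine, and the statement follows.
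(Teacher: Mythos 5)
Your construction is essentially the paper's own proof: bid $v_i(S_i^*(\mathbf{v}))/|S_i^*(\mathbf{v})|$ on each item of your optimal bundle and $0$ elsewhere, observe that each bidder wins exactly $S_i^*(\mathbf{v})$, so NOB holds with equality and sNUB holds vacuously because $S^*_i(\mathbf{v})\setminus S_i(\mathbf{b})=\emptyset$. Your extra preprocessing step (reassigning zero-value bundles so that every nonempty optimal bundle carries a strictly positive bid) is a legitimate refinement the paper omits -- it rules out zero-bid ties that could otherwise undermine the claim $S_i(\mathbf{b})=S_i^*(\mathbf{v})$ under arbitrary tie-breaking -- but it does not change the underlying argument.
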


\begin{proof}
	Let $S^*(\mathbf{v}) = (S_1^*(\mathbf{v}), \ldots, S_n^*(\mathbf{v}))$
	be an optimal allocation.
	Consider the bid profile $\mathbf{b}$, where
	$b_{ij} = \frac{v_i(S_i^*(v))}{|S_i^*(v)|}$ for $j \in S_i^*(v)$ and $0$ otherwise.
	Notice that each bidder $i$ wins the items she gets in $S^*(\mathbf{v})$, i.e., $S_i(\mathbf{b}) = S_i^*(v)$.
	Hence, $\sum_{j \in S_i(\mathbf{b})} b_{ij} = \sum_{j \in S^*_i(\mathbf{v})} b_{ij} = v_i(S_i^*(v)) = v_i(S_i(\mathbf{b}))$, showing that $\mathbf{b}$ satisfies NOB.
	Moreover, as $S^*_i(\mathbf{v})\backslash S_i(\mathbf{b}) = \emptyset$, $\mathbf{b}$ also satisfies sNUB.
\end{proof}


\section*{Acknowledgments}
We thank Noam Nisan for insightful comments regarding underbidding in the context of dominated strategies. We also thank Tomer Ezra for Example \ref{exmpl:xos_not_alpha_sm} regarding XOS valuations and $\alpha$-submodular valuations.


\bibliographystyle{plainnat}
\bibliography{BibFile}

\appendix
\def\appendixname{}





\section{Missing Proofs from Section \ref{sec:our_contribution}}

\subsection{S2PA with XOS Valuations and iNUB}
\label{app:s2pa_xos_inub}
\begin{theorem}
\label{thm:xos_1m_rev_guarant}
Every S2PA with XOS valuations and 
a bid profile $\mathbf{b}$ that satisfies iNUB is $(1,m)-$revenue guaranteed, where $m$ is the number of items.
\end{theorem}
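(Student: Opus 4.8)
The plan is to reuse the skeleton of the proofs of Theorem~\ref{thm:mon_11_rev_guarant} and Theorem~\ref{thm:sm_rg}: start from the revenue lower bound of Lemma~\ref{lem:s2pa_rev_bids}, apply the iNUB hypothesis to pass from bids to single-item marginals, and then handle the combinatorial step. The only difference is that the submodular telescoping estimate (Lemma~\ref{lem:alpha_sm_prop}) — which is precisely what fails for XOS valuations — must be replaced by a weaker, ``lossy'' bound that costs a factor of $m$.

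Concretely, fix a welfare-maximizing allocation $S^*(\mathbf{v})$ witnessing iNUB, and write $B_i = S_i(\mathbf{b})$. By Lemma~\ref{lem:s2pa_rev_bids} the revenue is at least $\sum_{i}\sum_{j\in S^*_i(\mathbf{v})\setminus B_i} b_{ij}$, and by iNUB each such $b_{ij}$ is at least $v_i(j\mid B_i)$. Hence it suffices to prove, for every player $i$,
\[
\sum_{j\in S^*_i(\mathbf{v})\setminus B_i} v_i(j\mid B_i)\ \ge\ v_i(S^*_i(\mathbf{v})) - m\cdot v_i(B_i),
\]
since summing over $i$ yields $\text{revenue}\ \ge\ OPT(\mathbf{v}) - m\cdot SW(\mathbf{b},\mathbf{v})$, i.e.\ the $(1,m)$-revenue guarantee.

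For the per-player inequality I would argue as follows. Since XOS valuations are subadditive, $v_i(S^*_i(\mathbf{v}))\le \sum_{j\in S^*_i(\mathbf{v})} v_i(\{j\})$ — equivalently, take the additive valuation $a$ in the XOS representation with $v_i(S^*_i(\mathbf{v})) = a(S^*_i(\mathbf{v}))$ and note $v_i(\{j\})\ge a(\{j\})$ for each $j$. By monotonicity, $v_i(\{j\})\le v_i(B_i\cup\{j\}) = v_i(j\mid B_i) + v_i(B_i)$. Combining these, and observing that $v_i(j\mid B_i)=0$ for $j\in B_i$ so the sum over $S^*_i(\mathbf{v})$ coincides with the sum over $S^*_i(\mathbf{v})\setminus B_i$, we get
\[
v_i(S^*_i(\mathbf{v}))\ \le\ \sum_{j\in S^*_i(\mathbf{v})\setminus B_i} v_i(j\mid B_i)\ +\ |S^*_i(\mathbf{v})|\cdot v_i(B_i),
\]
and $|S^*_i(\mathbf{v})|\le m$ completes the argument.

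There is no genuine obstacle here; the step worth highlighting is conceptual rather than technical, namely why one cannot hope for the much stronger $(1,1)$ guarantee of Theorems~\ref{thm:sm_rg} and~\ref{thm:mon_11_rev_guarant}. The iNUB condition only controls single-item bids $b_{ij}$, and for XOS (as opposed to submodular) valuations the quantity $\sum_{j} v_i(j\mid B_i)$ can be arbitrarily smaller than $v_i\big(S^*_i(\mathbf{v})\mid B_i\big)$, so one is forced to bound $v_i(S^*_i(\mathbf{v}))$ item-by-item through $v_i(\{j\})$, ``paying'' $v_i(B_i)$ once per item and thereby incurring the factor $m$; Example~\ref{ex:xos_inub_2_m_poa} shows that this loss is unavoidable.
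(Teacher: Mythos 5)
Your proof is correct and follows essentially the same route as the paper's: both start from Lemma~\ref{lem:s2pa_rev_bids}, use iNUB to replace bids by the marginals $v_i(j\mid S_i(\mathbf{b}))$, lower-bound $v_i(S_i(\mathbf{b})\cup\{j\})$ by $v_i(\{j\})$ and then by the maximizing additive valuation on $S^*_i(\mathbf{v})$ to recover $OPT(\mathbf{v})$, and absorb the at most $m$ copies of $v_i(S_i(\mathbf{b}))$ per player into the $m\cdot SW(\mathbf{b},\mathbf{v})$ term. The only difference is organizational (you carry out the estimate per player rather than summing globally), so no further changes are needed.
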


\begin{proof}

We start with Inequality (\ref{eqn:sm_rev_guarant_4}):

\begin{eqnarray}
\label{eqn:xos_1m_rev_guarant_1}
\sum_{i=1}^{n} \sum_{j \in S_i(\mathbf{b})} p_j(\mathbf{b})
&\geq&
\sum_{i=1}^{n} 
\sum_{j \in S^*_i(\mathbf{v})} 
[v_i(S_i(\mathbf{b}) \cup \set{j}) - v_i(S_i(\mathbf{b})]
\end{eqnarray}
Consider the first term on the right hand side of Inequality (\ref{eqn:xos_1m_rev_guarant_1}). Let $v^*_i$ be the maximizing additive valuation of player $i$ with respect to her set $S^*_i(\mathbf{v})$. As $v_i$ is an XOS function, for every set $S' \subseteq [m]$ we have $v_i(S') \ge \sum_{j \in S'} v^*_i(\set{j})$. Hence, together with monoticity of $v_i$, we get:
\begin{eqnarray}
\nonumber
\sum_{i=1}^{n} 
\sum_{j \in S^*_i(\mathbf{v})} 
v_i(S_i(\mathbf{b}) \cup \set{j})
& \ge &
\sum_{i=1}^{n} 
\sum_{j \in S^*_i(\mathbf{v})} 
v_i(\set{j}) \\
\nonumber
& \ge &
\sum_{i=1}^{n} 
\sum_{j \in S^*_i(\mathbf{v})} 
v^*_i(\set{j}) \\
\nonumber
& = &
\sum_{i=1}^{n} v_i(S^*_i(\mathbf{v})) \\
\label{eqn:xos_1m_rev_guarant_2}
& = &
OPT(\mathbf{v})
\end{eqnarray}
Consider the second term on the right hand side of Inequality (\ref{eqn:xos_1m_rev_guarant_1}),
\begin{eqnarray}
\nonumber
\sum_{i=1}^{n} 
\sum_{j \in S^*_i(\mathbf{v})} 
v_i(S_i(\mathbf{b}))
& \le & 
\sum_{i=1}^{n} 
\sum_{j \in [m]}
v_i(S_i(\mathbf{b})) \\
\nonumber
& = &
\sum_{j \in [m]}
\sum_{i=1}^{n} 
v_i(S_i(\mathbf{b})) \\
\nonumber
& = & 
\sum_{j \in [m]} SW(\mathbf{b}, \mathbf{v}) \\
\label{eqn:xos_1m_rev_guarant_3}
& = & 
m \cdot SW(\mathbf{b}, \mathbf{v})
\end{eqnarray}
Combining Equations (\ref{eqn:xos_1m_rev_guarant_1}), (\ref{eqn:xos_1m_rev_guarant_2}) and (\ref{eqn:xos_1m_rev_guarant_3}), we get:
$$\sum_{i=1}^{n} \sum_{j \in S_i(\mathbf{b})} p_j(\mathbf{b})
\geq
OPT(\mathbf{v}) - m \cdot SW(\mathbf{b}, \mathbf{v})$$
\end{proof}

\begin{corollary}
In an S2PA with XOS valuations, 
for every 
joint distribution $\mathcal{F} \in \Delta(\mathcal{V}_1 \times \ldots \times \mathcal{V}_n)$, possibly correlated, 
every mixed Bayes Nash equilibrium that satisfies iNUB has expected social welfare at least $\frac{1}{m+1}$ of the expected optimal social welfare.
\end{corollary}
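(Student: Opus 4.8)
The plan is to obtain this as an immediate consequence of the $(1,m)$-revenue guarantee just established together with the incomplete-information extension theorem for revenue-guaranteed auctions. Concretely, Theorem~\ref{thm:xos_1m_rev_guarant} shows that every S2PA with XOS valuations is $(1,m)$-revenue guaranteed with respect to the bid space $\mathcal{B}'$ consisting of all bid profiles satisfying iNUB. This is precisely the hypothesis of Theorem~\ref{thm:rev_guarant_poa_incopm_i}, instantiated with $\gamma = 1$ and $\delta = m$.

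First I would check the one side condition needed to invoke Theorem~\ref{thm:rev_guarant_poa_incopm_i}, namely that at the given mixed Bayes Nash equilibrium the expected sum of player utilities is non-negative. This is immediate: in any (Bayesian) equilibrium each player's expected utility is individually non-negative, since a player can always deviate to the all-zero bid, which wins no item and pays nothing, yielding utility $0$; hence the expected sum over all players is non-negative. Then I would apply Theorem~\ref{thm:rev_guarant_poa_incopm_i} — which, unlike the smoothness extension theorem, holds for an arbitrary, possibly correlated, joint distribution $\mathcal{F}$ — to conclude that the expected social welfare at the equilibrium is at least $\frac{\gamma}{1+\delta} = \frac{1}{m+1}$ of the expected optimal social welfare. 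Since the equilibrium is by hypothesis supported on iNUB bid profiles, i.e.\ $\sigma : \mathcal{V}_1 \times \cdots \times \mathcal{V}_n \to \Delta(\mathcal{B}')$, the conclusion follows directly.

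There is essentially no obstacle: the corollary is a plug-in of two already-proved theorems. The only points deserving a sentence of care are matching the equilibrium's bid space to the bid space $\mathcal{B}'$ on which the $(1,m)$-revenue guarantee was proved, and recording that Theorem~\ref{thm:rev_guarant_poa_incopm_i} does not require independence of the valuation distributions, which is exactly what permits stating the bound for arbitrary correlated $\mathcal{F}$.
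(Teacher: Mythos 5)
Your proposal is correct and matches the paper's intended derivation exactly: the corollary is stated as an immediate consequence of Theorem~\ref{thm:xos_1m_rev_guarant} (the $(1,m)$-revenue guarantee under iNUB) combined with the extension theorem for revenue-guaranteed auctions, Theorem~\ref{thm:rev_guarant_poa_incopm_i}, with $\gamma=1$, $\delta=m$, using the fact that expected utilities are non-negative at any mixed Bayes Nash equilibrium. Your observation that this extension theorem, unlike the smoothness one, needs no independence of $\mathcal{F}$ is precisely why the bound holds for correlated distributions.
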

%
%
%

We next show an example of an instance with $4$ items and $2$ \xos\  bidders, where the PoA with NOB and iNUB is $1/2$, which is no better than the guarantee obtained with NOB alone.

\begin{example}
\label{ex:xos_nob_inub_1_2_poa}
There are two \xos\  bidders, $\set{1,2}$, and four items, $\set{x,y,z,w}$. Let,
\begin{eqnarray*}
v_1(S) &=& max \set{a^1(S), a^2(S)} ~~ \mbox{   and,} \\
v_2(S) &=& max \set{a^3(S), a^4(S)}
\end{eqnarray*}

where:
\begin{eqnarray*}
a^1 
&=& 
(a^1_x, a^1_y, a^1_z, a^1_w) 
=
(2,2,0,0), \\
a^2 
&=& 
(a^2_x, a^2_y, a^2_z, a^2_w) 
=
(0,0,1,1), \\
a^3 
&=& 
(a^3_x, a^3_y, a^3_z, a^3_w) 
=
(0,0,2,2), \\
a^4 
&=& 
(a^4_x, a^4_y, a^4_z, a^4_w) 
=
(1,1,0,0).
\end{eqnarray*}

The optimal allocation has welfare $8$, giving items $x$ and $y$ to player $1$ and items $z$ and $w$ to player $2$.
Consider the pure Nash equilibrium bid profile $\mathbf{b} = (b_1,b_2)$, where:
\begin{eqnarray*}
b_1 &=& (b_{1x},b_{1y},b_{1z},b_{1w}) = (0,0,1,1) ~~ \mbox{   and,} \\
b_2 &=& (b_{2x},b_{2y},b_{2z},b_{2w}) = (1,1,0,0)
\end{eqnarray*}

Player $1$ wins items $z$ and $w$ and player $2$ wins items $x$ and $y$.
$\mathbf{b}$ satisfies NOB and iNUB, and obtains welfare $4$, which equals half of $OPT$.
\end{example}

\subsection{S2PA with Monotone Valuations and iNUB}
\label{app:s2pa_mon_inub}
The following example shows that beyond subadditive valuations, the PoA can be arbitrarily bad under bid profiles satisfying iNUB.

\begin{example}
	\label{ex:mon_inub_bad_poa}
	$2$ items: $\set{x,y}$, $2$ single-minded bidders, who only derive value from the package of both items.
	Suppose $v_1(xy)=1$, $v_2(xy)=R$ (where $R$ is arbitrarily large), and the value for any strict subset of $xy$ is 0. 
	Consider the following bid profile (which is a PNE that adheres to iNUB): $b_{1x}=b_{1y}=R$, and $b_{2x}=b_{2y}=0$.
	Under this bid profile, both items go to agent 1, for a PoA $R$.

\end{example}

\section{Missing Proofs from Section \ref{subsec:val_class}}

\subsection{XOS Valuations are not \texorpdfstring{$\alpha-$}{}SM}
\label{app:xos_not_alpha_sm}
The following example shows that there exists an XOS function over identical items, sets $S \subset T$ and $j \notin T$ such that $v(j \mid S) < \alpha \cdot v(j \mid T)$ for every $\alpha > 0$.
\begin{example}
\label{exmpl:xos_not_alpha_sm}
Consider three identical items and valuation function $v$, where
$$
v(S)=
\begin{cases}
1, \mbox { if } |S| \in \{1,2\}\\
1.5, \mbox{ if } |S|=3
\end{cases}
$$
One can verify that $v$ is an XOS function.
Let $S$ be a set that contains a single item, and $T$ be a set that contains two items, such that $S \subset T$. 
For $j \notin T$ it holds that $v(j \mid S) = 0$, and $v(j \mid T) = 0.5$. Thus, $v(j \mid S) < \alpha \cdot v(j \mid T)$ for every $\alpha > 0$.
\end{example}

\subsection{\texorpdfstring{$\alpha-$}{}SM Valuations are not XOS}
\label{app:alpha_sm_not_xos}
The following example shows that for every $\alpha \in (0,1)$, there exists an $\alpha-$SM function over identical items
which is not XOS.

\begin{example}
Consider a setting with a set $M$ of three identical items, $0 < \alpha < 1$, and the following valuation function $v$:
$$
v(S)=
\begin{cases}
2,~~~~~~~~~ \mbox { if } |S|=1\\
2(1+\alpha), \mbox{ if } |S|=2\\
2(2+\alpha), \mbox{ if } |S|=3
\end{cases}
$$
One can verify that $v$ is $\alpha-$SM.
Let us assume by contradiction that $v$ is XOS, i.e., 
there is a set $\mathcal{L}$ of additive valuations $\{a_{\ell}(\cdot)\}_{\ell \in \mathcal{L}}$, such that for every set $S \subseteq [m]$, $v(S) = max_{\ell \in \mathcal{L}} a_{\ell}(S)$. 
For every $\ell$, it should hold that $a_{\ell}(S) \le 2(1+\alpha)$ for every $|S| = 2$. 
It follows that for every $\ell$, $a_{\ell}(M) \le 3(1+\alpha) < 2(2+\alpha) = v(M)$, in contradiction to $v$ being XOS.

\end{example}

\section{Missing Proofs from Section \ref{subsec:smooth}}
\label{app:extens_thm_proof}
Following is the proof of Theorem \ref{thm:smooth_poa_copm_i}. 
The proof is based on the proofs in \cite{R09} and \cite{RST17} with the necessary adjustments. 

\begin{proof} (of Theorem \ref{thm:smooth_poa_copm_i})
As the utility of each player is quasi-linear and payments are non negative, 
$$\sum_{i \in [n]}
\mathbb{E}_{\mathbf{b}}
\left[ 
u_i(\mathbf{b}, v_i)
\right] \le 
\mathbb{E}_{\mathbf{b}}
\left[ 
SW(\mathbf{b}, \mathbf{v})
\right]$$
By smoothness, 
for each $b$ in the support of $\mathbf{b}$, 
there exists a bid $b^*_i(\mathbf{v})  \in \mathcal{B}_i$ for each player $i$, s.t.:
$$\sum_{i \in [n]} u_i(b^*_i(\mathbf{v}), b_{-i},v_i) \ge \lambda OPT(\mathbf{v}) - \mu SW(b, \mathbf{v})$$
Since $\mathbf{b}$ is a CCE, we get from Definition \ref{def:ccne},
$$\mathbb{E}_{\mathbf{b}} 
\left[ 
u_i(\mathbf{b}, v_i)
\right]
\ge 
\mathbb{E}_{\mathbf{b}} 
\left[
u_i(b^*_i(\mathbf{v}),\mathbf{b}_{-i}, v_i)
\right]$$
Punting it all together and using linearity of expectation we get,
\begin{eqnarray}
\nonumber
\mathbb{E}_{\mathbf{b}}
\left[ 
SW(\mathbf{b}, \mathbf{v})
\right] 
& \ge &  
\sum_{i \in [n]}
\mathbb{E}_{\mathbf{b}}
\left[ 
u_i(\mathbf{b}, v_i)
\right] \\
\nonumber
& \ge & 
\sum_{i \in [n]}
\mathbb{E}_{\mathbf{b}} 
\left[
u_i(b^*_i(\mathbf{v}),\mathbf{b}_{-i}, v_i)
\right] \\
\nonumber
& \ge & 
\mathbb{E}_{\mathbf{b}}
\left[
\lambda OPT(\mathbf{v}) - \mu SW(\mathbf{b}, \mathbf{v})
\right] \\
\label{eqn:smooth_poa_copm_i_2}
& = &
\lambda OPT(\mathbf{v}) - \mu \mathbb{E}_{\mathbf{b}}
\left[ 
SW(\mathbf{b}, \mathbf{v})
\right]
\end{eqnarray}
Rearranging, we get: $\mathbb{E}_{\mathbf{b}} \left[ SW(\mathbf{b}, \mathbf{v}) \right] \ge \frac{\lambda}{1+\mu} OPT(\mathbf{v})$, as required.
\end{proof}

Following is the proof of Theorem \ref{thm:smooth_poa_incopm_i}.
The proof is based on the proofs in \cite{R12} and \cite{ST13} with the necessary adjustments.

\begin{proof} (of Theorem \ref{thm:smooth_poa_incopm_i})
We give a proof for pure Bayes Nash equilibrium. The proof for mixed Bayes Nash equilibrium 
follows by adding in a straightforward way 
another expectation over the random actions chosen in the strategy profile $\mathbf{\sigma}$.

Recall that the valuations $v_i$ are independent and that the strategy $\sigma_i(v_i)$ of player $i$ depends only on $v_i$. However, the smoothness definition (Definition \ref{def:smooth}) 
assumes that the hypothetical deviation of player $i$ depends on the full valuation profile. Since player $i$ doesn't know the valuation of the other players, she randomly samples an independent valuation profile $\mathbf{w} = (w_1, \ldots, w_n) \sim \mathcal{F}$ and decides on the hypothetical deviation $\mathbf{b}_i^*(v_i, \mathbf{w}_{-i})$ accordingly. 

As $\mathbf{\sigma}$ is a BNE of the auction, using Inequality (\ref{eqn:bne}) we get,
\begin{eqnarray*}
\mathbb{E}_{\mathbf{v}} 
\left[ 
u_i(\mathbf{\sigma}(\mathbf{v}), v_i)
\right]
& \ge & 
\mathbb{E}_{\mathbf{v}} 
\mathbb{E}_{\mathbf{w}} 
\left[
u_i(\mathbf{b}_i^*(v_i, \mathbf{w}_{-i}),\mathbf{\sigma}_{-i}(\mathbf{v}_{-i}), v_i)
\right] \\
& = &
\mathbb{E}_{\mathbf{v}} 
\mathbb{E}_{\mathbf{w}} 
\left[
u_i(\mathbf{b}_i^*(w_i, \mathbf{w}_{-i}),\mathbf{\sigma}_{-i}(\mathbf{v}_{-i}), w_i)
\right] \\
& = &
\mathbb{E}_{\mathbf{v}} 
\mathbb{E}_{\mathbf{w}} 
\left[
u_i(\mathbf{b}_i^*(\mathbf{w}),\mathbf{\sigma}_{-i}(\mathbf{v}_{-i}), w_i)
\right] 
\end{eqnarray*}

The equality follows by renaming due to independence.
Now, let us sum over all players and then use the smoothness Inequality (\ref{eqn:smooth}) and linearity of expectation to get,
\begin{eqnarray}
\nonumber
\mathbb{E}_{\mathbf{v}} 
\left[ 
\sum_{i \in [n]}
u_i(\mathbf{\sigma}(\mathbf{v}), v_i)
\right]
& \ge & 
\mathbb{E}_{\mathbf{v}} 
\mathbb{E}_{\mathbf{w}} 
\left[
\sum_{i \in [n]}
u_i(\mathbf{b}_i^*(\mathbf{w}),\mathbf{\sigma}_{-i}(\mathbf{v}_{-i}), w_i)
\right] \\
\nonumber
& \ge & 
\mathbb{E}_{\mathbf{v}} 
\mathbb{E}_{\mathbf{w}} 
\left[
\lambda OPT(\mathbf{w}) - \mu SW(\mathbf{\sigma}(\mathbf{v}), \mathbf{v})
\right] \\
\label{eqn:smooth_poa_incopm_i_2}
& = &
\lambda \mathbb{E}_{\mathbf{w}} 
\left[
OPT(\mathbf{w})
\right] - 
\mu \mathbb{E}_{\mathbf{v}} 
\left[ SW(\mathbf{\sigma}(\mathbf{v}), \mathbf{v})
\right]
\end{eqnarray}

As the utility of each player is quasi-linear and payments are non negative, 
$$\mathbb{E}_{\mathbf{v}} 
\left[ 
\sum_{i \in [n]}
u_i(\mathbf{\sigma}(\mathbf{v}), v_i)
\right]
 \le 
\mathbb{E}_{\mathbf{v}}
\left[ 
SW(\mathbf{\sigma}(\mathbf{v}), \mathbf{v})
\right]$$
Hence, 
$$\mathbb{E}_{\mathbf{v}}
\left[ 
SW(\mathbf{\sigma}(\mathbf{v}), \mathbf{v})
\right] \ge 
\lambda \mathbb{E}_{\mathbf{v}} 
\left[
OPT(\mathbf{v})
\right] - 
\mu \mathbb{E}_{\mathbf{v}} 
\left[ SW(\mathbf{\sigma}(\mathbf{v}), \mathbf{v})
\right]$$
The proof follows by rearranging.
\end{proof}

\section{Missing Proofs from Section \ref{sec:s2pa_nub}}
\label{app:ex_pne_snub_not_inub}
Following is an example of an S2PA with submodular bidders that admits a PNE satisfying sNUB but not iNUB.

\begin{example}
\label{exmpl:ex_pne_snub_not_inub}
Two submodular bidders: $\set{1,2}$, and three items: $\set{x,y,z}$.
$$v_1(x) = 5, 
v_1(y) = 5, 
v_1(z) = 10, 
v_1(xy) = 10, 
v_1(xz) = 15, 
v_1(yz) = 15, 
v_1(xyz) = 16$$
$$v_2(x) = 8, 
v_2(y) = 8, 
v_2(z) = 15, 
v_2(xy) = 14, 
v_2(xz) = 15, 
v_2(yz) = 15, 
v_2(xyz) = 15$$

The optimal allocation is: 
$S_1^* = \{xy\}, ~
S_2^* = \{z\}$, 
$OPT = 10+15 = 25$.

One can verify that the following bid profile $\mathbf{b}$ is a PNE: 
$$b_{1x} = 3, ~
b_{1y} = 3, ~
b_{1z} = 8$$
$$b_{2x} = 8, ~
b_{2y} = 8, ~
b_{2z} = 2,$$ 
and the obtained allocation under $\mathbf{b}$ is:
$$S_1(b) = \{z\}, ~
S_2(b) = \{xy\}, ~
SW = 10+14 = 24.$$


We first show that $\bids$ satisfies sNUB. Indeed, 
$$6 = b_{1x}+b_{1y} \ge v_1(xy \mid z) = 6$$
and also,
$$2 = b_{2z} \ge v_2(z \mid xy) = 1.$$
However, $\bids$ does not satisfy iNUB, since
$$3 = b_{1x} < v_1(x \mid z) = 5.$$

\end{example}

%
%
%
%
%

\section{Revenue Guaranteed Auctions in Settings with Incomplete Information}
\label{app:rev_guarant_incomp}
In the paper we defined both revenue guaranteed auctions and no-underbidding bid profile for full information with respect to a bid space $\mathcal{B}'$, which is the space of all bid profiles satisfying sNUB. We then used extension theorem to prove positive results for incomplete information on strategy space $\sigma : \mathcal{V}_1 \times \ldots \times \mathcal{V}_n \rightarrow \Delta(\mathcal{B}')$, i.e. for all strategies supported by bids $\sigma(\mathbf{v})$ satisfying sNUB. However, there might be strategies 
which satisfy no-underbidding only in expectation and yet guarantee lower bound on the expected revenue. Moreover, there might be auctions that are revenue guaranteed only in expectation. To prove results in incomplete information for such broader cases, we give here incomplete information definition to both revenue guaranteed auctions and no-underbidding strategy profiles.

\begin{definition} [\textbf{Revenue guaranteed auction of incomplete information}]
\label{def:rev_guarant_incomp}
An incomplete information auction is $(\gamma,\delta)-$revenue guaranteed for some $0 \le \gamma \le \delta \le 1$ with respect to a strategy space $\Sigma' \subseteq \Delta(\Sigma_1 \times \ldots \times \Sigma_n)$, if for every joint distribution $\mathcal{F} \in \Delta(\mathcal{V}_1 \times \ldots \times \mathcal{V}_n)$, possibly correlated,
and for any strategy profile $\mathbf{\sigma} \in \Sigma'$,
the expected revenue of the auction is at least 
$\gamma ~
\mathbb{E}_{\mathbf{v} \sim \mathcal{F}}
\left[
OPT(\mathbf{v})
\right]
- \delta ~
\mathbb{E}_{\mathbf{v} \sim \mathcal{F}} ~
\mathbb{E}_{\mathbf{b} \sim \sigma(\mathbf{v})} 
\left[
SW(\mathbf{b}, \mathbf{v})
\right]$.
\end{definition}

\begin{theorem}
\label{thm:rev_guarant_poa_incopm_extended}
If an incomplete information auction is $(\gamma,\delta)-$revenue guaranteed with respect to a strategy space $\Sigma' \subseteq \Delta(\Sigma_1 \times \ldots \times \Sigma_n)$,
then for every 
joint distribution $\mathcal{F} \in \Delta(\mathcal{V}_1 \times \ldots \times \mathcal{V}_n)$, possibly correlated, 
and every strategy profile 
$\mathbf{\sigma} \in \Sigma'$,
in which the expected sum of players utility is non-negative,  
the expected social welfare is at least $\frac{\gamma}{1+\delta}$ of the expected optimal social welfare. 
\end{theorem}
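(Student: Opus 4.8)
The plan is to follow the structure of the proof of Theorem~\ref{thm:rev_guarant_poa_incopm_i} (and of Theorem~\ref{thm:rev_guarant_poa_copm_i-pure}) almost verbatim, the single difference being that the incomplete-information revenue guarantee of Definition~\ref{def:rev_guarant_incomp} is already stated in expectation form, so the pointwise-then-integrate step used in the earlier proof is unnecessary here. I would first establish the claim for pure strategy profiles, and then remark that the mixed case follows by inserting one additional expectation over the internal randomness of $\mathbf{\sigma}$, exactly as in the earlier proofs.

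First I would use quasi-linearity of utilities together with non-negativity of payments. For a fixed valuation profile $\mathbf{v}$ and the resulting bids $\mathbf{\sigma}(\mathbf{v})$, we have $\sum_{i\in[n]} u_i(\mathbf{\sigma}(\mathbf{v}),v_i) = SW(\mathbf{\sigma}(\mathbf{v}),\mathbf{v}) - \sum_{i\in[n]}P_i(\mathbf{\sigma}(\mathbf{v}))$. Taking $\mathbb{E}_{\mathbf{v}\sim\mathcal{F}}$ (and, in the mixed case, also the expectation over the strategy randomness) and using linearity of expectation, the hypothesis that the expected sum of player utilities is non-negative gives
\[
0 \ \le\ \mathbb{E}_{\mathbf{v}}\!\left[SW(\mathbf{\sigma}(\mathbf{v}),\mathbf{v})\right] - \sum_{i\in[n]}\mathbb{E}_{\mathbf{v}}\!\left[P_i(\mathbf{\sigma}(\mathbf{v}))\right].
\]

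Next I would invoke the revenue guarantee: since $\mathbf{\sigma}\in\Sigma'$, Definition~\ref{def:rev_guarant_incomp} states directly that the expected revenue is at least $\gamma\,\mathbb{E}_{\mathbf{v}}[OPT(\mathbf{v})] - \delta\,\mathbb{E}_{\mathbf{v}}\mathbb{E}_{\mathbf{b}\sim\sigma(\mathbf{v})}[SW(\mathbf{b},\mathbf{v})]$. Substituting this lower bound for $\sum_{i}\mathbb{E}_{\mathbf{v}}[P_i(\mathbf{\sigma}(\mathbf{v}))]$ into the displayed inequality yields $0 \le (1+\delta)\,\mathbb{E}_{\mathbf{v}}[SW(\mathbf{\sigma}(\mathbf{v}),\mathbf{v})] - \gamma\,\mathbb{E}_{\mathbf{v}}[OPT(\mathbf{v})]$, and rearranging gives the asserted bound $\mathbb{E}_{\mathbf{v}}[SW(\mathbf{\sigma}(\mathbf{v}),\mathbf{v})] \ge \frac{\gamma}{1+\delta}\,\mathbb{E}_{\mathbf{v}}[OPT(\mathbf{v})]$.

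I do not anticipate a real obstacle: once the two inequalities — non-negative expected surplus, and the expectation-form revenue guarantee — are aligned, the conclusion is a one-line rearrangement identical in spirit to inequality~(\ref{eqn:rev_guarant_poa_incopm_i_2}). The only mild care needed is bookkeeping: confirming that the expectation over $\mathbf{v}$ commutes with the expectation over the strategy's internal randomization (independent sources), and that the non-negativity hypothesis is applied at the level of the summed expected utilities, which is all that Definition~\ref{def:rev_guarant_incomp} requires.
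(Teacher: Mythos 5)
Your proposal is correct and follows essentially the same argument as the paper's proof: non-negative expected surplus plus quasi-linearity gives $0 \le \mathbb{E}[SW] - \sum_i \mathbb{E}[P_i]$, the expectation-form revenue guarantee is substituted directly, and the bound follows by rearranging. The only cosmetic difference is that the paper carries both expectations $\mathbb{E}_{\mathbf{v}\sim\mathcal{F}}\,\mathbb{E}_{\mathbf{b}\sim\sigma(\mathbf{v})}$ throughout rather than splitting into a pure case plus a remark for the mixed case.
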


\begin{proof}
As the utility of each player is quasi-linear and the expected sum of player utilities is non-negative, we use linearity of expectation and get,
\begin{eqnarray*}
0 
& \le &  
\sum_{i \in [n]}
\mathbb{E}_{\mathbf{v} \sim \mathcal{F}} ~
\mathbb{E}_{\mathbf{b} \sim \sigma(\mathbf{v})} 
\left[ 
u_i(\mathbf{b}, v_i)
\right] \\
& = & 
\sum_{i \in [n]}
\mathbb{E}_{\mathbf{v} \sim \mathcal{F}} ~
\mathbb{E}_{\mathbf{b} \sim \sigma(\mathbf{v})}
\left[ 
v_i(S_i(\mathbf{b}))
\right] - 
\sum_{i \in [n]}
\mathbb{E}_{\mathbf{v} \sim \mathcal{F}} ~
\mathbb{E}_{\mathbf{b} \sim \sigma(\mathbf{v})}
\left[ 
P_i(\mathbf{b})
\right] \\
& = & 
\mathbb{E}_{\mathbf{v} \sim \mathcal{F}} ~
\mathbb{E}_{\mathbf{b} \sim \sigma(\mathbf{v})}
\left[ 
SW(\mathbf{b}, \mathbf{v})
\right] - 
\sum_{i \in [n]}
\mathbb{E}_{\mathbf{v} \sim \mathcal{F}} ~
\mathbb{E}_{\mathbf{b} \sim \sigma(\mathbf{v})}
\left[ 
P_i(\mathbf{b})
\right]
\end{eqnarray*}
By the $(\gamma,\delta)-$revenue guaranteed property for incomplete information, 
$$\sum_{i \in [n]}
\mathbb{E}_{\mathbf{v} \sim \mathcal{F}} ~
\mathbb{E}_{\mathbf{b} \sim \sigma(\mathbf{v})}
\left[ 
P_i(\mathbf{b})
\right] 
\ge
\gamma ~
\mathbb{E}_{\mathbf{v} \sim \mathcal{F}}
\left[
OPT(\mathbf{v})
\right]
- \delta ~
\mathbb{E}_{\mathbf{v} \sim \mathcal{F}} ~
\mathbb{E}_{\mathbf{b} \sim \sigma(\mathbf{v})} 
\left[
SW(\mathbf{b}, \mathbf{v})
\right]
$$

Punting it all together, we get,
\begin{eqnarray}
\nonumber
0 
& \le & 
\mathbb{E}_{\mathbf{v} \sim \mathcal{F}} ~
\mathbb{E}_{\mathbf{b} \sim \sigma(\mathbf{v})}
\left[ 
SW(\mathbf{b}, \mathbf{v})
\right] - 
\sum_{i \in [n]}
\mathbb{E}_{\mathbf{v} \sim \mathcal{F}} ~
\mathbb{E}_{\mathbf{b} \sim \sigma(\mathbf{v})}
\left[ 
P_i(\mathbf{b})
\right] \\
\nonumber
& \le &
\mathbb{E}_{\mathbf{v} \sim \mathcal{F}} ~
\mathbb{E}_{\mathbf{b} \sim \sigma(\mathbf{v})}
\left[ 
SW(\mathbf{b}, \mathbf{v})
\right] - 
\gamma ~
\mathbb{E}_{\mathbf{v} \sim \mathcal{F}}
\left[
OPT(\mathbf{v})
\right]
+ \delta ~
\mathbb{E}_{\mathbf{v} \sim \mathcal{F}} ~
\mathbb{E}_{\mathbf{b} \sim \sigma(\mathbf{v})} 
\left[
SW(\mathbf{b}, \mathbf{v})
\right] \\
& = &
\nonumber
(1+\delta) ~ \mathbb{E}_{\mathbf{v} \sim \mathcal{F}} ~
\mathbb{E}_{\mathbf{b} \sim \sigma(\mathbf{v})} 
\left[
SW(\mathbf{b}, \mathbf{v})
\right] - 
\gamma ~
\mathbb{E}_{\mathbf{v} \sim \mathcal{F}}
\left[
OPT(\mathbf{v})
\right]
\end{eqnarray}

Rearranging, we get: 
$\mathbb{E}_{\mathbf{v} \sim \mathcal{F}} ~
\mathbb{E}_{\mathbf{b} \sim \sigma(\mathbf{v})} 
\left[
SW(\mathbf{b}, \mathbf{v})
\right]
\ge 
\frac{\gamma}{1+\delta} ~
\mathbb{E}_{\mathbf{v} \sim \mathcal{F}}
\left[
OPT(\mathbf{v})
\right]$, as required.
\end{proof}

\begin{definition} [\textbf{Set No underbidding in expectation}]
\label{def:sNUB_incomp}
A strategy profile $\mathbf{\sigma} \in \Delta(\Sigma_1 \times \ldots \times \Sigma_n)$ satisfies sNUB in expectation if 
for every 
joint distribution $\mathcal{F} \in \Delta(\mathcal{V}_1 \times \ldots \times \mathcal{V}_n)$, possibly correlated,
and for every player $i$ 
the following holds,
\begin{eqnarray*}
\mathbb{E}_{\mathbf{v} \sim \mathcal{F} \mid {v_i}} ~
\mathbb{E}_{\mathbf{b} \sim \sigma(v)}
\left[ 
\sum_{j \in S^*_i(\mathbf{v})\backslash S_i(\mathbf{b})} b_{ij}
\right]
& \ge &
\mathbb{E}_{\mathbf{v} \sim \mathcal{F} \mid {v_i}} ~
\mathbb{E}_{\mathbf{b} \sim \sigma(v)}
\left[ 
v_i \left(~ S^*_i(\mathbf{v})\backslash S_i(\mathbf{b}) \mid S_i(\mathbf{b})~ \right)
\right]
\end{eqnarray*}
\end{definition}

\begin{theorem}
\label{thm:mon_11_rev_guarant_incomp}
An incomplete information S2PA with monotone valuation functions is $(1,1)-$revenue guaranteed with respect to 
strategy profiles satisfying sNUB in expectation.
\end{theorem}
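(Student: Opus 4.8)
The plan is to mimic the full-information argument in the proof of Theorem~\ref{thm:mon_11_rev_guarant}, carrying one extra layer of expectation and invoking the \emph{in-expectation} version of sNUB at the single point where a bid is compared against a marginal value. First I would start from Lemma~\ref{lem:s2pa_rev_bids}, which holds pointwise: for every realized valuation profile $\mathbf{v}$ and every bid profile $\mathbf{b}$ in the support of $\sigma(\mathbf{v})$, the revenue $\sum_{i}\sum_{j\in S_i(\mathbf{b})}p_j(\mathbf{b})$ is at least $\sum_{i}\sum_{j\in S^*_i(\mathbf{v})\setminus S_i(\mathbf{b})}b_{ij}$. Taking $\mathbb{E}_{\mathbf{v}\sim\mathcal{F}}\,\mathbb{E}_{\mathbf{b}\sim\sigma(\mathbf{v})}$ of both sides and using linearity of expectation to pull the sum over players outside, the expected revenue is at least $\sum_{i}\mathbb{E}_{\mathbf{v}\sim\mathcal{F}}\mathbb{E}_{\mathbf{b}\sim\sigma(\mathbf{v})}\big[\sum_{j\in S^*_i(\mathbf{v})\setminus S_i(\mathbf{b})}b_{ij}\big]$.

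Next, for each fixed $i$ I would decompose the outer expectation as $\mathbb{E}_{v_i}\,\mathbb{E}_{\mathbf{v}_{-i}\mid v_i}\,\mathbb{E}_{\mathbf{b}\sim\sigma(\mathbf{v})}[\cdot]$; this conditional decomposition is valid even though $\mathcal{F}$ may be correlated. Definition~\ref{def:sNUB_incomp} applies to the inner two expectations for every value of $v_i$, giving $\mathbb{E}_{\mathbf{v}_{-i}\mid v_i}\mathbb{E}_{\mathbf{b}}\big[\sum_{j\in S^*_i(\mathbf{v})\setminus S_i(\mathbf{b})}b_{ij}\big]\ge \mathbb{E}_{\mathbf{v}_{-i}\mid v_i}\mathbb{E}_{\mathbf{b}}\big[v_i(S^*_i(\mathbf{v})\setminus S_i(\mathbf{b})\mid S_i(\mathbf{b}))\big]$; integrating this inequality over $v_i$ and summing over $i$ bounds the expected revenue below by $\sum_{i}\mathbb{E}_{\mathbf{v}}\mathbb{E}_{\mathbf{b}}\big[v_i(S^*_i(\mathbf{v})\setminus S_i(\mathbf{b})\mid S_i(\mathbf{b}))\big]$.

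Finally I would close exactly as in the full-information proof: pointwise, $v_i(S^*_i(\mathbf{v})\setminus S_i(\mathbf{b})\mid S_i(\mathbf{b}))=v_i(S_i(\mathbf{b})\cup S^*_i(\mathbf{v}))-v_i(S_i(\mathbf{b}))\ge v_i(S^*_i(\mathbf{v}))-v_i(S_i(\mathbf{b}))$ by monotonicity of $v_i$; summing over $i$ and taking $\mathbb{E}_{\mathbf{v}}\mathbb{E}_{\mathbf{b}}$ yields $\mathbb{E}_{\mathbf{v}}[OPT(\mathbf{v})]-\mathbb{E}_{\mathbf{v}}\mathbb{E}_{\mathbf{b}}[SW(\mathbf{b},\mathbf{v})]$. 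Chaining the three bounds gives precisely the $(1,1)$-revenue-guaranteed inequality required by Definition~\ref{def:rev_guarant_incomp}. The only care needed is bookkeeping — the welfare-maximizing allocation $S^*(\mathbf{v})$ must be the same measurable selection used in the sNUB-in-expectation hypothesis, and the conditional decomposition of the (possibly correlated) expectation must be handled carefully — so I expect no genuine obstacle beyond the full-information case, which is why the proof should be short.
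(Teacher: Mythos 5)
Your proposal is correct and follows essentially the same route as the paper's proof: start from Lemma~\ref{lem:s2pa_rev_bids} pointwise, take expectations, apply the sNUB-in-expectation condition, and finish with the telescoping identity and monotonicity. Your explicit tower-property decomposition $\mathbb{E}_{v_i}\,\mathbb{E}_{\mathbf{v}_{-i}\mid v_i}\,\mathbb{E}_{\mathbf{b}}$ is a slightly more careful rendering of a step the paper's proof leaves implicit, but it is the same argument.
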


\begin{proof}
We start with Lemma \ref{lem:s2pa_rev_bids}, use linearity of expectation and
the fact that $\mathbf{\sigma}$ satisfies sNUB in expectation,

\begin{eqnarray*}
\mathbb{E}_{\mathbf{v} \sim \mathcal{F}} ~
\mathbb{E}_{\mathbf{b} \sim \sigma(\mathbf{v})} 
\left[
\sum_{i=1}^{n} \sum_{j \in S_i(\mathbf{b})} p_j(\mathbf{b})
\right]
& \ge &
\mathbb{E}_{\mathbf{v} \sim \mathcal{F}} ~
\mathbb{E}_{\mathbf{b} \sim \sigma(\mathbf{v})} 
\left[
\sum_{i=1}^{n} \sum_{j \in S^*_i(\mathbf{v}) \backslash S_i(\mathbf{b})} b_{ij} 
\right] \\
& = &
\sum_{i=1}^{n}
\mathbb{E}_{\mathbf{v} \sim \mathcal{F}}
\mathbb{E}_{\mathbf{b} \sim \sigma(\mathbf{v})} 
\left[
\sum_{j \in S^*_i(\mathbf{v}) \backslash S_i(\mathbf{b})} b_{ij} 
\right] \\
& \ge &
\sum_{i=1}^{n}
\mathbb{E}_{\mathbf{v} \sim \mathcal{F}}
\mathbb{E}_{\mathbf{b} \sim \sigma(v)}
\left[ 
v_i \left(~ S^*_i(\mathbf{v})\backslash S_i(\mathbf{b}) \mid S_i(\mathbf{b})~ \right)
\right] \\
& = &
\sum_{i=1}^{n} 
\mathbb{E}_{\mathbf{v} \sim \mathcal{F}} ~
\mathbb{E}_{\mathbf{b} \sim \sigma(\mathbf{v})} 
\left[ 
v_i(S_i(\mathbf{b}) \cup S^*_i(\mathbf{v})) - v_i(S_i(\mathbf{b})
\right] \\
&\ge &
\sum_{i=1}^{n} 
\mathbb{E}_{\mathbf{v} \sim \mathcal{F}} ~
\mathbb{E}_{\mathbf{b} \sim \sigma(\mathbf{v})} 
\left[ 
v_i(S^*_i(\mathbf{v})) - v_i(S_i(\mathbf{b})
\right] \\
&=&
\mathbb{E}_{\mathbf{v} \sim \mathcal{F}} ~
\mathbb{E}_{\mathbf{b} \sim \sigma(\mathbf{v})} 
\left[ 
\sum_{i=1}^{n} 
\left[
v_i(S^*_i(\mathbf{v})) - v_i(S_i(\mathbf{b})
\right]
\right] \\
&=&
\mathbb{E}_{\mathbf{v} \sim \mathcal{F}}
\left[ OPT(\mathbf{v}) \right] - 
\mathbb{E}_{\mathbf{v} \sim \mathcal{F}} ~
\mathbb{E}_{\mathbf{b} \sim \sigma(\mathbf{v})} 
\left[ 
SW(\mathbf{b},\mathbf{v})
\right]
\end{eqnarray*}
The last Inequality follows from monoticity of valuations.
\end{proof}

\begin{corollary}
\label{corr:mon_snub_bpoa_1_2_incomp}
In an incomplete information S2PA with monotone valuations, 
for every 
joint distribution $\mathcal{F} \in \Delta(\mathcal{V}_1 \times \ldots \times \mathcal{V}_n)$, possibly correlated, 
every mixed Bayes Nash equilibrium that satisfies sNUB in expectation has expected social welfare at least $\frac{1}{2}$ of the expected optimal social welfare.
\end{corollary}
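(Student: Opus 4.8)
The plan is to obtain this corollary as an immediate composition of the two incomplete-information results proved just above: Theorem~\ref{thm:mon_11_rev_guarant_incomp}, which states that every incomplete-information S2PA with monotone valuations is $(1,1)$-revenue guaranteed with respect to strategy profiles satisfying sNUB in expectation (Definition~\ref{def:sNUB_incomp}), and Theorem~\ref{thm:rev_guarant_poa_incopm_extended}, which states that for a $(\gamma,\delta)$-revenue guaranteed incomplete-information auction every strategy profile with non-negative expected sum of player utilities achieves expected social welfare at least $\frac{\gamma}{1+\delta}$ of the expected optimal welfare. Plugging in $\gamma=\delta=1$ yields the claimed factor $\frac{1}{2}$.

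The only hypothesis of Theorem~\ref{thm:rev_guarant_poa_incopm_extended} that requires checking is that a mixed Bayes Nash equilibrium $\sigma$ has non-negative expected sum of player utilities. First I would note that each player $i$ always has available the all-zero bid $b_i'\equiv 0$, under which she pays nothing and obtains non-negative value, so $u_i(b_i',\cdot,v_i)\ge 0$; applying the MBNE inequality of Definition~\ref{def:mbne} with this deviation shows that $\mathbb{E}_{\mathbf{v}_{-i}\mid v_i}\mathbb{E}_{\sigma}[u_i(\sigma_i(v_i),\sigma_{-i}(\mathbf{v}_{-i}),v_i)]\ge 0$ for every $v_i$. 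Taking expectation over $v_i\sim\mathcal F$ and summing over $i\in[n]$ gives that the expected sum of player utilities is non-negative, as needed.

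With this verified, the corollary follows: by Theorem~\ref{thm:mon_11_rev_guarant_incomp} the auction is $(1,1)$-revenue guaranteed on the strategy space of profiles satisfying sNUB in expectation, the equilibrium $\sigma$ lies in that space and has non-negative expected utility sum, so Theorem~\ref{thm:rev_guarant_poa_incopm_extended} with $\gamma=\delta=1$ gives $\mathbb{E}_{\mathbf{v}\sim\mathcal F}\mathbb{E}_{\mathbf{b}\sim\sigma(\mathbf{v})}[SW(\mathbf{b},\mathbf{v})]\ge\frac12\,\mathbb{E}_{\mathbf{v}\sim\mathcal F}[OPT(\mathbf{v})]$. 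There is no genuine obstacle here, since both ingredient theorems are already established; the only points demanding a moment of care are confirming that the equilibrium property (rather than quasi-linearity alone) is what supplies the non-negative utility sum, and that ``sNUB in expectation'' is precisely the strategy-space restriction for which Theorem~\ref{thm:mon_11_rev_guarant_incomp} delivers the $(1,1)$ guarantee, so that the two theorems compose with no gap.
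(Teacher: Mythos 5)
Your proposal is correct and matches the paper's intended argument exactly: the corollary is obtained by composing Theorem~\ref{thm:mon_11_rev_guarant_incomp} (the $(1,1)$-revenue guarantee under sNUB in expectation) with Theorem~\ref{thm:rev_guarant_poa_incopm_extended}, and the non-negativity of each player's expected utility at equilibrium follows, as you note, from the availability of the all-zero deviation. Nothing further is needed.
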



\end{document}